\keywords{Decomposition, Composition, Automatic Relations, Positivity, Synthesis}
\theoremstyle{plain} %\crefname{satz}{Satz}{S\"atze}
\newcommand{\A}{\mathcal{A}}
\newcommand{\Aut}{\ensuremath{\mathcal{A}}}
\newcommand{\In}{\mathcal{I}}
\newcommand{\Out}{\mathcal{O}}
\newcommand{\Barr}{\mathcal{B}}
\newcommand{\vb}{\vec{b}}
\begin{document}

	\title[Sequential Relational Decomposition]{Sequential Relational Decomposition}

%	\title[Instructions]{Instructions for Authors\\How to prepare papers
%		for LMCS using \texorpdfstring{\MakeLowercase{\texttt{lmcs.cls}}}{lmcs.cls}\rsuper*\\Version of
%		2014-02-01}
\titlecomment{An extended abstract of this article appeared in the Proceedings of the 33rd Annual {ACM/IEEE} Symposium on Logic in Computer Science, {LICS} 2018.}
%	\titlecomment{{\lsuper*}OPTIONAL comment concerning the title, \eg,
%		if a variant or an extended abstract of the paper has appeared elsewhere.}

	\author[D.~Fried]{Dror Fried\rsuper{a}}	%required
	\address{Department of Computer Science, The Open University of Israel, Israel}	%required
	\email{dfried@openu.ac.il}  %optional
	%\thanks{thanks 1, optional.}	%optional

	\author[A.~Legay]{Axel Legay\rsuper{b}}	%optional
	\address{Department of Computer Science, Université Catholique de Louvain, Belgium}	%optional
	\email{ axel.legay@uclouvain.be}  %optional
%	\thanks{thanks 2, optional.}	%optional

	\author[J.~Ouaknine]{Jo\"{e}l Ouaknine\rsuper{c,d}}	%optional
	\address{Max Planck Institute for Software Systems, Saarland Informatics Campus, Germany}
	\address{Department of Computer Science, {Oxford University, UK}}
	\email{joel@mpi-sws.org}	%optional
%	\urladdr{name3@url3\quad\rm{(optionally, a web-page can be specified)}}  %optional
%	\thanks{thanks 3, optional.}	%optional

	\author[M.Y.~Vardi]{Moshe Y. Vardi\rsuper{e}}	%optional
	\address{Department of Computer Science, Rice University, USA}
	\email{vardi@cs.rice.edu} 	%optional
%	\urladdr{name3@url3\quad\rm{(optionally, a web-page can be specified)}}  %optional
%	\thanks{thanks 3, optional.}	%optional

	%% etc.

	%% required for running head on odd and even pages, use suitable
	%% abbreviations in case of long titles and many authors:

	%%%%%%%%%%%%%%%%%%%%%%%%%%%%%%%%%%%%%%%%%%%%%%%%%%%%%%%%%%%%%%%%%%%%%%%%%%%

	%% the abstract has to PRECEDE the command \maketitle:
	%% be sure not to issue the \maketitle command twice!

	\begin{abstract}

The concept of \emph{decomposition} in computer science and engineering
is considered a fundamental component of \emph{computational thinking} and is
prevalent in design of algorithms, software construction, hardware design, and more.
We propose a simple and natural formalization of \emph{sequential decomposition},
in which a task is decomposed into two sequential sub-tasks, with the first sub-task to
be executed before the second sub-task is executed. These tasks are specified
by means of input/output relations.  We define and study \emph{decomposition problems},
which is to decide whether a given specification can be sequentially decomposed. Our main
result is that decomposition itself is a difficult computational problem. More specifically,
we study decomposition problems in three settings: where the input task is specified
explicitly, by means of Boolean circuits, and by means of automatic relations.
We show that in the first setting decomposition is NP-complete, in the second setting
it is NEXPTIME-complete, and in the third setting there is evidence to suggest that
it is undecidable. Our results indicate that the intuitive idea of decomposition as a
system-design approach requires further investigation. In particular, we show that
adding a human to the loop by asking for a decomposition hint lowers the complexity of
decomposition problems considerably.
\end{abstract}

	\maketitle

	%% start the paper here:

	%%%% To do:
	%%% 1) Fix bib problems
	%%% 2) Merge appendix with paper.
		\section{Introduction}

Over the past decade, it became apparent that the conceptual way of analyzing
problems through computer-science techniques can be considered as a general approach to
analysis, design, and problem solving, known as \emph{computational thinking}~\cite{Wing06,Wing11}.
A key element of this approach is the taming of complexity by decomposing a complex problem
into simpler problems. Quoting Wing~\cite{Wing06}: ``Computational thinking is using abstraction and
decomposition when attacking a large complex task or designing a large complex system.''
While abstraction helps to tame complexity by simplifying away irrelevant details of a complex
problem, decomposition helps to tame complexity by breaking down complex problems into simpler ones.
In fact, decomposition is a generic project-management technique: project managers quite often face
challenges that require decomposition --- a large project that is divided among team members to make
the problem less daunting and easier to solve as a set of smaller tasks, where team members work on  tasks
that are in their specific fields of expertise.  As computer scientists and engineers, the concept
of decomposition is prevalent in the design of algorithms, in software construction, in hardware design,
and so on.
For example, a classical paper in software engineering studies criteria to be used in decomposing systems
into modules~\cite{Parnas72}.
Yet, in spite of the centrality of the concept of decomposition in computational thinking,
it has yet to be studied formally in a general theoretical setting (see related work).
Such a study is the focus of this work.

There are many different types of decomposition that can be considered.  Based on her understanding
of the problem, the \emph{decomposer} has to make a decision on how to decompose a given problem,
for example, by meeting certain constraints on the size of the sub-problems, or constraints on
the way that solved sub-problems ought to be recomposed.  A simple and natural way of decomposition
is \emph{sequential} decomposition in which a task is decomposed into two sub-tasks, where the first
sub-task is to be carried out before the second sub-task can be executed. A formal model
for sequential decomposition is the subject of this work. We assume that the given problem is specified
by means of an input/output relation. It is widely accepted that such relations are the most
general way to specify programs, whether for terminating programs~\cite{Hoare69}, where input and output
should be related values, or for nonterminating programs~\cite{Pnueli77}, where input and output are
streams of values. The \emph{decomposition problem} is to decompose a given input/output relation $R$,
between an input domain $\In$ and an output domain $\Out$, into two relations $R_1$ and $R_2$, such that
$R$ can be reconstructed from $R_1$ and $R_2$ using relational composition
(defined in Section~\ref{sec:perlim}).
%\footnote{\url{https://en.wikipedia.org/wiki/Composition_of_relations}}.
To avoid trivial solutions, where either $R_1$ or $R_2$ is the identity relation, we assume that the
intermediate domain, that is, the co-domain of $R_1$, which is also the domain of $R_2$, is specified.
Intuitively, specifying the intermediate domain amounts to constraining the manner in which the first
task can ``communicate'' with the second task. Such a restriction can be viewed as a form of
\emph{information hiding}, which is one of the major criteria for decomposition in~\cite{Parnas72}.
As we show, sequential decomposition is nontrivial only when the channel of communication between the
first and second task has a small ``bandwidth'', appropriately defined.
% In addition we define a special
%formulation in which the decomposed relations are in fact functions. Such decomposition can be viewed
%as functional-synthesis approach~\cite{PnueliR90,KV01,FTV16}.

We study sequential decomposition in three settings:
\emph{explicit}, \emph{symbolic}, and \emph{automatic}.
In the explicit setting, the input/output relation $R$ is specified
explicitly. In the symbolic setting, the domains and $R$ are finite but too large to be specified explicitly,
so $R$ is specified symbolically as a Boolean circuit. In the automatic setting, the domains and $R$ may be infinite,
so the domains are specified by means of an alphabet, over which $R$ is specified by means of a deterministic
finite-state automaton.

Our general finding is that sequential decomposition, viewed as a computational
problem, is itself a challenging problem.  In the explicit setting, the decomposition problem is
NP-complete.  This escalates to NEXPTIME-complete for the symbolic setting.
For the automatic setting the decomposition problem is still open, but we provide evidence and conjecture that
it is undecidable. Specifically, we show that even a very simple variant of the automatic setting can be viewed as equivalent to the Positivity problem, whose decidability is well known to be open~\cite{Soi76,OW14b}.
We do show, however, that a ``strategic'' variant of the automatic setting, in which the required relations are described as transducers is in EXPTIME\@.
These findings, that decomposition is an intractable problem,  can be viewed as a
``No-Free-Lunch'' result, as it says that decomposition, which is a tool to combat complexity, is
itself challenged by computational complexity. This means that while decomposition is an essential
tool, the application of decomposition is an art, rather than science, and
requires human intuition.

As such, we explore decomposition with
``a human in the loop'', where the role of the human is to offer a hint,
suggesting one of the terms of the decomposition, and the role of the decomposition algorithm
is to check if a valid decomposition can be found based on the hint.
Table~\ref{tab:compTable} summarizes our complexity results with and without hints.
We show that in most cases decomposition with
a hint is easier than decomposition with no hint.
%; it is in PTIME in the explicit setting,
%in $\Pi_3^P$ in the symbolic setting, and in EXPTIME in the automatic setting.
%Table~\ref{tab:compTable} summarizes our complexity results with and without hints.

% \begin{table}[!h]
% 	\centering
% 	\setlength{\tabcolsep}{6pt}
% 	\caption{Computational complexity of decomposition without and with hints.}\label{tab:qe}
%\label{tab:compTable}
% 	\begin{tabular}{@{}lll@{}}
% 		\hline

% 		& Without hints &  With hints	\\
% 		\hline
% 		Explicit	& NP-complete
% 		& PTIME		\\
% 		Symbolic	& NEXPTIME-complete
% 		& $\Pi_3^P$				\\
% 		Automatic   & undecidabe?
% 		& EXPTIME/PTIME		\\
% 	    Strategic   & EXPTIME
% 		& EXPTIME/PTIME    \\
% 		\hline

% 	\end{tabular}
% \end{table}

\begin{table}[!h]
	\centering
	\setlength{\tabcolsep}{6pt}
	\caption{Computational complexity of decomposition without and with hints.}\label{tab:qe}%
    \label{tab:compTable}
	\begin{tabular}{@{}llll@{}}
		\toprule
		& Without hints &  With hint $R_1$ &  With hint $R_2$	\\
		\midrule
		Explicit	& NP-complete
		& PTIME	& PTIME	\\
		Symbolic	& NEXPTIME-complete
		& $\Pi_3^P$ & $\Pi_3^P$				\\
		Automatic   & undecidabe?
		& EXPSPACE  & EXPSPACE		\\
	    Strategic   & EXPTIME
		& EXPTIME & PTIME    \\
		\bottomrule

	\end{tabular}
\end{table}

%\begin{center}
%\caption{Computational complexity of decomposition without and with h%ints.}\label{tab:qe}
 %\begin{tabular}{||c c c c||}
 %\hline
 %Col1 & Col2 & Col2 & Col3 \\ [0.5ex]
 %\hline\hline
 %1 & 6 & 87837 & 787 \\
 %\hline
 %2 & 7 & 78 & 5415 \\
 %\hline
 %3 & 545 & 778 & 7507 \\
 %\hline
 %4 & 545 & 18744 & 7560 \\
 %\hline
 %5 & 88 & 788 & 6344 \\ [1ex]
 %\hline
%\end{tabular}
%\end{center}

%\begin{table}[!h]
%	\centering
%	\setlength{\tabcolsep}{6pt}
%	\caption{Computational complexity of decomposition without and with hints.}\label{tab:qe}
%\label{tab:compTable}
%	\begin{tabular}{@{}lll@{}}
%		\toprule
%
%		& Without hints &  With hints	\\
%		\midrule
%		Explicit	& NP-complete
%		& PTIME		\\
%		Symbolic	& NEXPTIME-complete
%		& $\Pi^P_3$				\\
%		Automatic   & undecidable? (EXPTIME for functions)
%		& EXPTIME (PTIME for functions)		\\
%		\bottomrule
%
%	\end{tabular}
%\end{table}

		\section{Related work}

%\marginpar{[[Rev1: check references spacing]]}
%\marginpar{[[Rev3: Is decomposition of regular languages relevant?]]}

In this paper we introduce a new framework to study \emph{decomposition}
of system-level specifications into component-level specifications. In contrast,
most existing approaches in software engineering focus on \emph{composition},
that is, developing systems from independently developed components, or proving
system-level properties from component-level properties, for example~\cite{FFG91}.
This compositional approach is a fundamental approach in computer science
to taming complexity.

We now give a few examples of the composition-based approach to system
development.  The \emph{contract-theory} approach of Benveniste et
al.~\cite{DBLP:conf/fmco/BenvenisteCFMPS07} uses assumption/guarantee
to describe both the implementation and the environment of a given
component; that is, the implementation of the component should work
with any other component whose implementation satisfies the
assumption. Components are then composed in a bottom up fashion to
produce a large-size system. The same approach is pursued also in the
\emph{interface theories} of de Alfaro et
al.~\cite{DBLP:conf/emsoft/AlfaroH01}, the \emph{modal automata} of
Larsen~\cite{DBLP:conf/avmfss/Larsen89}, and the \emph{I/O automata} of
Lynch~\cite{Lynch-tuttle88}.

Operations such as \emph{quotient} (adjoint of composition) or contextual equation solving~\cite{DBLP:conf/lics/LarsenX90,DBLP:journals/acta/FahrenbergL14}
%\cite{DBLP:journals/fuin/RacletBBCLP11,DBLP:conf/lics/LarsenX90,DBLP:conf/rex/Larsen89,DBLP:journals/acta/FahrenbergL14}
allow one to synthesize a component that can be composed with another
one in order to refine a large size specification. This gets close
to the problem of decomposition with \emph{hints}, described below,
but is still focused on synthesizing an implementation.
The BIP approaches of Sifakis~\cite{DBLP:journals/software/BasuBBCJNS11}
proposes to specify complex systems by aggregating smaller
components using a very expressive algebra. Recently in~\cite{DBLP:conf/atva/ChengBCYJRBK11}, the approach has been extended to
synthesize the order of communication between components, but the
focus there is not on decomposing specification.
Similarly, Lustig and Vardi have shown~\cite{LV13}
how to synthesize reactive systems satisfying given linear-temporal
properties by composing components from a given library of reactive components.
Closer to our work is an approach studied in model-driven software
engineering.  The Fragmenta tool~\cite{DBLP:journals/scp/AmalioG15}
provides algebraic descriptions of models and fragments based on graph
morphism. While the tool eases the decomposition task, it does not
specifically handle the problem of actually providing such a decomposition.

A major weakness of composition-based approaches to system development is
that they consider only the problem of simplifying the implementation work,
but  ignore the task of decomposing the often very complex system-level
technical specifications into smaller/simpler ones. For example, there is
no technique to explain how smaller assumption/guarantee
contracts can be obtained from larger ones. This operation has to be
conducted manually by developers using their intuition and understanding of
complex system-level specifications.  Thus, our work here on decomposition
complements existing approaches on composition-based development.
In addition, we believe that our work
is also relevant to \emph{architectural design},
e.g., as a complement of~\cite{DBLP:conf/cbse/BliudzeSBJ14}.

A classical paper of Parnas in software engineering studies criteria
to be used in decomposing systems into modules~\cite{Parnas72}.
Parnas's framework, however, assumes that the starting point for
decomposition consists of an architectural specification of the system,
while our starting point is quite more abstract, as we assume that
we are provided with a relational specification.
Another related work is that of~\cite{Rath95}, which
describes an approach for extracting sequential components from a system
specification.  Unlike, however, our work here, which starts from a highly
abstract relational specification, the approach
of~\cite{Rath95} assumes that the system's specification is provided by means of an
interface specification of the components. Thus, this approach is more in the sense of a factorization rather then a decomposition.

Decomposition has been studied in the context of linear algebra.
A matrix decomposition or matrix factorization is a factorization of a matrix
into a product of matrices. There are many different matrix factorizations.
Certain Boolean matrix-factorization problems are known to be NP-complete~\cite{johnson87}. Our NP-completeness result for explicit relations can be
viewed as a special case of Boolean matrix-factorization. In addition, our  formulation for the explicit case can be viewed as a reformulation of the combinatorial definition of the nondeterministic communication complexity (see Chapters 1--2 in~\cite{Kushilevitz-Nisan}). In that sense, this paper extends these works to more general representations of relations.

			\section{Preliminaries}\label{sec:perlim}

\subsection{Relations}
Let $A,B,C$ be sets. For a binary relation $R\subseteq A\times B$, let $Dom(R)$, and $Img(R)$ be the domain of $R$, and the image (sometimes called co-domain) of $R$, defined as follows. $Dom(R)=\{a\in A\mid \exists b\in B\text{ s.t } (a,b)\in R\}$, and $Img(R)=\{b\in B\mid \exists a\in A\text{ s.t } (a,b)\in R\}$. For $a\in A$, let $Img_a(R)=\{b\in B\mid  (a,b)\in R\}$. The relation $R$ is called a function if for every $a\in A, b,b'\in B$ we have $(a,b),(a,b')\in R \implies b=b'$.
Given binary relations $R_1\subseteq A\times B$, and $R_2\subseteq B\times C$, the \textit{composition} of $R_1$ and $R_2$ is a binary relation $R_1\circ R_2\subseteq A\times C$ where $R_1\circ R_2=\{(a,c)\mid\exists b\in B\text{ s.t. } (a,b)\in R_1\text{ and } (b,c)\in R_2\}$.

%	\noindent
%
%	{\sc Automata}:
%
\subsection{Automata}

 A Nondeterministic Finite Automaton (NFA) is a tuple  $\Aut=(\Sigma,Q,q,\delta, F)$, where $\Sigma$ is a finite alphabet, $Q$ is a finite state set with an initial state $q$, $\delta: Q\times\Sigma\to 2^Q$ is a transition function, and $F$ is an accepting-state set. A run of $\Aut$ over a word $w=a_1a_2\cdots a_n$ for some $n$ is a state sequence $r=q_0, q_1,\ldots ,q_n$ such that $q_{i+1}\in\delta(q_i,a_i)$ for $i\geq 0$, where $q_0=q$ is the initial state. A run is \textit{accepting} if its final state is accepting. A word is \textit{accepted} if it has an accepting run. The language of $\Aut$, $L(\Aut)$, is the set of all accepted words of $\Aut$ and is called a \textit{regular language}. A language is also regular if and only if it can be described by a \textit{regular expression}. We define the size of the automaton $\Aut$ as $|Q|+|\Sigma|+|\delta|$ and denote this size by $|\Aut|$.
For NFAs  $\Aut_1=(\Sigma_1,Q_1, q^1,\delta_1, F_1)$ and $\Aut_2=(\Sigma_2,Q_2,q^2,\delta_2, F_2)$, we define the \textit{product automaton} of $\Aut_1$ and $\Aut_2$ as the automaton $\Aut_1\times \Aut_2 = (\Sigma_1\times \Sigma_2,Q_1\times Q_2,q^1\times q^2,\delta, F_1\times F_2)$ where $(p,p')\in\delta((q,q'),(l,l'))$ iff $p\in\delta_1(q,l)$ and  $p'\in\delta_2(q',l')$.
An NFA $\Aut$ is deterministic (called DFA) if for every state $q$ and letter $a$, $|\delta(q,a)|\leq 1$. Every NFA can be determinized to a DFA that describes the same language by using the \textit{subset construction}, possibly with an exponential blow-up~\cite{HopcroftMotwaniUlman2003}.
It is often more convenient for users to specify regular languages by means of regular expressions, which can be converted to DFA, possibly with an exponential blow-up as well~\cite{HopcroftMotwaniUlman2003}. We assume here that all regular languages are specified by means of DFAs, as we wish to study the inherent complexity of decomposition.

%MYV: Revised
Finally, a \textit{transducer} (we work here with Moore machines) is a deterministic finite-state machine with no accepting states, but with additional output alphabet and an additional function from the set of states to the output alphabet. Formally in our setting a transducer is a tuple $T=(\Sigma,\Sigma',Q,\delta,q_0, Out)$ where $\Sigma$ is a finite alphabet and $\Sigma'$ is a finite output alphabet, $Q$ is a finite set of states with an initial state $q_0\in Q$, $\delta:Q\times \Sigma\to Q$ is the transition function and $Out:Q\to\Sigma'$ is the output function. Transducers describe finite-state functions from input words in $\Sigma^*$ to output words in $\Sigma'^*$. We define $\delta(\varepsilon)=q_0$ and $T(\varepsilon)=Out(q_0)$. Inductively, $\delta(wa)=\delta(\delta(w),a)$ and $T(wa)=T(w) \cdot Out(\delta(wa))$, for $\in \Sigma^*$ and $a\in\Sigma$.

	\section{Problem definition}\label{sec:probdef}
The concept of decompositions that we explore here is related to systems that can be defined by their given input and produced output.
We model these as an input domain $\In$ and an output domain $\Out$, not necessarily finite. Our description of a system is
a specification that associates inputs to outputs, and is modeled as a relation $R\subseteq\In\times\Out$~\cite{Hoare69,Pnueli77}.
In addition we assume a constraint in form of a domain with a specific size that directs the decomposition to be more concise
and is given as an intermediate domain $\Barr$.
%This intermediate domain constrains the amount of information that can be transferred from the first component to the second component
%and can be viewed as a form of \emph{information hiding}, which is one of the major criteria for decomposing systems~\cite{Parnas72}.
The objective is to decompose $R$ into relations $R_1\subseteq\In\times\Barr$, and $R_2\subseteq\Barr\times\Out$ such that the composition $R_1\circ R_2$ has either of the following properties: (i) no input-output association is added or lost --- this problem is called the \textit{Total Decomposition Problem (TDP)}, or (ii) a more relaxed version, called the \textit{Partial Decomposition Problem (PDP)} in which no input-output association is added, but we are allowed to lose some of the output as long as each input can be resolved.

 To make the paper more fluent to read we use the notation TDP/PDP for statements that are valid to the TDP and the PDP variants respectively.
	Since the size of the intermediate domain $\Barr$ can be significantly smaller than the size of the input or output domains, the problem becomes non-trivial as some sort of compression is required in order to solve the TDP/PDP\@. The actual problems of TDP/PDP are  appropriately defined for each section as decision problems.
	We first define the TD/PD conditions as follows.

		\begin{defi}\label{def:TDPPDP} (TD/PD conditions)
			Given binary relations $R\subseteq \In \times \Out$,
			$R_1\subseteq \In \times \Barr$, and $R_2\subseteq \Barr \times \Out$ for some domains $\In,\Out,\Barr$, we say that $(R_1,R_2)$ meets the TD condition if $Img(R_1)\subseteq Dom(R_2)$ and $R_1\circ R_2 = R$.
			We say that $(R_1,R_2)$ meets the PD condition if $Img(R_1)\subseteq Dom(R_2)$,  $Dom(R_1\circ R_2)=Dom(R)$, and $R_1\circ R_2 \subseteq R$.
		\end{defi}

%		\begin{claim}
%			If $Img(R_1)\subseteq Dom(R_2)$ then $Dom(R_1\circ R_2)=Dom(R)$.
%		\end{claim}
%		\begin{proof}
%			Suppose $i\in Dom(R_1\circ R_2)$ then there is $b$ such that $(i,b)\in R_1$ hence $a\in Dom(R_1)$. On the other hand if $i\in Dom (R_1)$ then there is $b$ such that $(i,b)\in R_1$. Since $Img(R_1)\in Dom(R_2)$ then there is $o$ such that $(b,o)\in R_2$ hence $(i,o)\in R_1\circ R_2$ which means $i\in Dom(R_1\circ R_2)$.
%		\end{proof}
%

		The decision problem of TDP/PDP, formally defined for diverse settings, is: given a description of domains and a relevant relation $R$, find whether there exist $(R_1,R_2)$ that meets the TD/PD condition.

		Since the type of domains and relations that we explore varies between an explicit and a more implicit description of relations, the formal definitions of the problem change according to these representations, and so are the sought decomposed relations. It is important to note that the TD/ PD  conditions are properties of the actual relations, and not of the description in which the relations are represented.

		Note that  PDP without the restriction of $Dom(R_1\circ R_2)=Dom(R)$ becomes trivial, as one can take the empty sets as $R_1$ and $R_2$.

		The following technical fact  can help prove TD/PD conditions in various settings.

		\begin{clm}\label{clm:premClm}
			Assume $Img(R_1)\subseteq Dom(R_2)$. Then  $Dom(R_1\circ R_2)=Dom(R_1)$ (and therefore $Dom(R)=Dom(R_1)$).
		\end{clm}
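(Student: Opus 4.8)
The plan is to establish the set equality $Dom(R_1\circ R_2)=Dom(R_1)$ by a straightforward double inclusion, and then to derive the parenthetical remark from the TD/PD conditions rather than from the hypothesis alone.

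First I would handle the inclusion $Dom(R_1\circ R_2)\subseteq Dom(R_1)$, which holds unconditionally. Taking any $a\in Dom(R_1\circ R_2)$, the definition of composition gives some $c\in\Out$ with $(a,c)\in R_1\circ R_2$, and hence some $b\in\Barr$ with $(a,b)\in R_1$ and $(b,c)\in R_2$. In particular $(a,b)\in R_1$, so $a\in Dom(R_1)$. This direction follows directly from the definitions of $Dom$ and $\circ$ and does not use the assumption $Img(R_1)\subseteq Dom(R_2)$.

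The reverse inclusion $Dom(R_1)\subseteq Dom(R_1\circ R_2)$ is the step where the hypothesis is essential, and it is really the only substantive point. Given $a\in Dom(R_1)$, I would pick $b\in\Barr$ with $(a,b)\in R_1$; then $b\in Img(R_1)$, so by the assumption $b\in Dom(R_2)$, which yields some $c\in\Out$ with $(b,c)\in R_2$. Chaining these two pairs through the intermediate element $b$ gives $(a,c)\in R_1\circ R_2$, whence $a\in Dom(R_1\circ R_2)$. Together with the first inclusion this establishes the claimed equality.

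Finally, for the parenthetical $Dom(R)=Dom(R_1)$, I would stress that it is not a consequence of the hypothesis alone but follows once $(R_1,R_2)$ meets either condition of Definition~\ref{def:TDPPDP}: under TD we have $R_1\circ R_2=R$ and hence $Dom(R)=Dom(R_1\circ R_2)$, while under PD the equality $Dom(R_1\circ R_2)=Dom(R)$ is postulated outright. In either case, combining with the equality just proved gives $Dom(R)=Dom(R_1\circ R_2)=Dom(R_1)$. I do not anticipate any genuine obstacle, as the argument is entirely elementary; the only point worth flagging is the asymmetry between the two inclusions, namely that the hypothesis $Img(R_1)\subseteq Dom(R_2)$ is exactly what guarantees that every element reachable via $R_1$ can be extended through $R_2$, so that no element of $Dom(R_1)$ is ``lost'' in forming the composition.
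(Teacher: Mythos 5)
Your proof is correct and follows essentially the same double-inclusion argument as the paper: the forward inclusion $Dom(R_1\circ R_2)\subseteq Dom(R_1)$ holds unconditionally, and the reverse inclusion is exactly where the hypothesis $Img(R_1)\subseteq Dom(R_2)$ is used to extend any $R_1$-pair through $R_2$. Your added remark that the parenthetical $Dom(R)=Dom(R_1)$ relies on the TD/PD conditions (not the hypothesis alone) is a correct clarification of a point the paper's proof leaves implicit.
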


		\begin{proof}
			Let $i\in Dom(R_1\circ R_2)$. Then there is $o\in\Out$ such that $(i,o)\in R_1\circ R_2$. Which means there is $b\in\Barr$ such that $(i,b)\in R_1$ and $(b,o)\in R_2$. Specifically there is $b\in\Barr$ such that $(i,b)\in R_1$, therefore $i\in Dom(R_1)$.
			Next let $i\in Dom(R_1)$. Then there is $b\in \Barr$ such that $(i,b)\in R_1$. Since $Img(R_1)\subseteq Dom(R_2)$, we have that $b\in Dom(R_2)$ therefore there is $o\in\Out$ such that $(b,o)\in R_2$. Therefore since $(i,b)\in R_1$ and $(b,o)\in R_2$ then $(i,o)\in R_1\circ R_2$ which makes $i\in Dom(R_1\circ R_2)$.
		\end{proof}

	%	Also note that $Img(R_1)\subseteq Dom(R_2)$ implies that $Dom(R_1\circ R_2)=Dom(R_1)$ (see proper claim at Appendix~\ref{app:permApp}), a technical fact that can help prove TDP/PDP conditions in various settings.
		Finally see that the decomposed relations that meet the TD conditions, also meet the PD conditions on the same input, therefore a positive answer to TDP implies a positive answer to PDP\@.

		\section{Decomposition is hard}

Decomposition has been advocated as the first step in the design of complex systems, with the intuition
that it is easier to design components separately, rather than design a complex monolithic system.
%By decomposition, we can also separate components that deal with separate concerns
%\cite{mitchell90}.
We show in this section several settings in which finding a sequential decomposition
is computationally hard.  This means that while sequential decomposition could be used to simplify the
complexity of the initial specification, such decomposition itself is intractable, thus can be viewed
as a ``No-Free-Lunch''.

	\subsection{Explicit relations}\label{sec:Rel}

%	In our first, and most basic exploration, we assume that the domains of $Dom(R)$, denoted by $D_A$, and $Img(R)$, denoted by $D_C$, are explicitly given. That is, we assume that $D_A$ and $D_C$ are a part of the given input, as huge (though finite) tables. In addition we assume a given intermediate domain $D_B\subseteq \Sigma_B^*$ as a huge though finite table, that can will be $Img(R_1)$, and $Dom(R_2)$, for the intended $R_1,R_2$. In such scenario, the EDP problem now looks as follows.
%
%	\begin{problem}\label{prob:EDPExp}
%		(EDP)Given domains $D_A,D_B,D_C$ over alphabets $\Sigma_A,\Sigma_B$ and $\Sigma_C$ respectively, and a binary relation $R\subseteq D_A\times D_C$,
%		do there exist relations $R_1\subseteq D_A\times D_B$, and $R_2\subseteq D_B\times D_C$
%		such that $R_1\circ R_2 = R$? If so, find them!
%	\end{problem}
%

The simplest case of decomposition is when the domains $\In,\Out$ and $\Barr$ are finite and given explicitly as a part of the input, and the relation $R$ is given explicitly as a table in $\In\times\Out$.
	\begin{prob}\label{prob:ExpEDP}
		(TDP/PDP on explicit relations) We are given a tuple $I=(\In,\Out,\Barr, R)$ where $\In,\Out,\Barr$ are finite domains and $R\subseteq \In \times \Out$.
		The problem is whether there exist relations $R_1\subseteq \In \times \Barr$ and $R_2\subseteq \Barr \times \Out$ such that $(R_1,R_2)$ meet the TD/PD conditions.
	\end{prob}

	\begin{clm}\label{clm:small}
		If $|\Barr|\geq |\In|$ or $|\Barr|\geq |\Out|$ then TDP has a positive solution (and therefore PDP as well) and the relations that solve TDP can be found in a linear time to the size of the input.
	\end{clm}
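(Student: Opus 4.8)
The plan is to exploit the size hypothesis directly: when $\Barr$ is at least as large as one of the two endpoint domains, I can use $\Barr$ as a set of private ``labels'' that faithfully copy either the inputs or the outputs, so that nothing is compressed and the composition reproduces $R$ exactly. The two disjuncts $|\Barr|\geq|\In|$ and $|\Barr|\geq|\Out|$ are handled symmetrically, so I would treat one in detail and remark that the other is dual.

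First I would handle $|\Barr|\geq|\In|$. Fix an injection $f\colon\In\to\Barr$, which exists by the size hypothesis, and set $R_1=\{(i,f(i))\mid i\in Dom(R)\}$ and $R_2=\{(f(i),o)\mid (i,o)\in R\}$. Intuitively $R_1$ routes each active input to its own label, and $R_2$ reads that label off and emits precisely the outputs that $R$ assigns to the corresponding input; injectivity of $f$ ensures labels are never confused. I would then verify the TD condition of Definition~\ref{def:TDPPDP}. For the side condition, $Img(R_1)=f(Dom(R))=Dom(R_2)$, so $Img(R_1)\subseteq Dom(R_2)$. For the composition, any witness $b$ for $(i,o)\in R_1\circ R_2$ must equal $f(i)$ since $R_1$ is a function, and then $(f(i),o)\in R_2$ holds iff $(i,o)\in R$ by injectivity of $f$; hence $R_1\circ R_2=R$.

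For $|\Barr|\geq|\Out|$ the construction is dual: fix an injection $g\colon\Out\to\Barr$ and label outputs instead, taking $R_1=\{(i,g(o))\mid (i,o)\in R\}$ and $R_2=\{(g(o),o)\mid o\in Img(R)\}$. The same computation yields $Img(R_1)=g(Img(R))=Dom(R_2)$ and $R_1\circ R_2=R$, now using injectivity of $g$ to recover each output from its label.

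The one genuine subtlety --- and the step I expect to be the main obstacle --- is the side condition $Img(R_1)\subseteq Dom(R_2)$, which is easy to violate through carelessness: if in the first case $R_1$ is allowed to range over all of $\In$ rather than only over $Dom(R)$, then inputs outside $Dom(R)$ contribute labels to $Img(R_1)$ with no preimage in $Dom(R_2)$, breaking the condition even though the composition would still equal $R$. Restricting $R_1$ to $Dom(R)$ (and, dually, basing $R_2$ on $Img(R)$) repairs this. Both constructions are plainly computable in time linear in $|I|$, since choosing the injection is a single pass that hands out distinct labels and each of $R_1,R_2$ is built by one scan over $R$ together with $Dom(R)$ or $Img(R)$. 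Finally, the parenthetical ``and therefore PDP'' is immediate from the already-noted fact that any pair meeting the TD condition also meets the PD condition.
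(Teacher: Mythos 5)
Your proof is correct and uses essentially the same construction as the paper: fix an injection from one endpoint domain into $\Barr$ and let it relabel inputs (or outputs), with the other relation read off from $R$. The only cosmetic differences are that the paper details the $|\Barr|\geq|\Out|$ case (taking $R_2=\{(g(o),o)\mid o\in\Out\}$ over all of $\Out$, which is harmless since only $Dom(R_2)$ being too large never violates $Img(R_1)\subseteq Dom(R_2)$), while you detail the $|\Barr|\geq|\In|$ case and correctly note that there the relabeling relation $R_1$ must be restricted to $Dom(R)$ to keep the side condition.
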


	\begin{proof}
		Assume that $|\Barr|\geq |\Out|$ (the other case is analogue). Then we can define $R_2=\{(g(o),o)\mid o\in\Out\}$ where $g:\Out\to \Barr$ is any injection.
		Then for TDP, $R_1=\{(i,b)\mid (i,o)\in R \text{ and } b=g(o)\}$ is a relation that satisfies
		$R_1\circ R_2 = R$.  \end{proof}

%	The problems of CDP and FDP on explicit relations are similarly defined. The \textit{decision problem} of EDP (CDP,FDP) is of course to come with a \textit{yes/no} answer of whether a solution exists. The \textit{search problem} of EDP (CDP, FDP) is to find the actual relations.
%	In this format, the size of the given intermediate domain $D_B$ plays a crucial part. Specifically, we say the following
%
%	\begin{claim}\label{clm:small}
%		If $|D_B|\geq |D_A|$ or $|D_B|\geq |D_C|$ then EDP can be solved in a polynomial time.
%	\end{claim}
%
%	\begin{proof}
%		Suppose w.l.o.g. that $|D_B|\geq |D_C|$. Then we can define $R_2$ to be any $1-1$ function $g:D_B\to D_C$.
%		Then $R_1=\{(a,b)| (a,c)\in R \text{ and } b=g^{-1}(c)\}$ is a relation that satisfies
%		$R_1\circ R_2 = R$.
%	\end{proof}
%

	Therefore  TDP/PDP become non-trivial when $|\Barr|$ is strictly smaller than $|\In|$ and $|\Out|$.

	\begin{exa}
		Let $\In=\{i_1,i_2\}, \Barr=\{b\}$, and $\Out=\{o_1,o_2\}$. Let $R=\{(i_1,o_1), (i_2,o_2)\}$.
		Then the answer to TDP is negative as every non-empty composition of relations $R_1\circ R_2$ with $Dom(R_1\circ R_2)=Dom(R)$, must also include $(i_1,o_2)$ or $(i_2,o_1)$.
	\end{exa}

	We next show that even for the explicit setting, TDP/PDP are computationally hard, that is NP-complete.
	 On a positive note, being in NP, solutions for TDP/PDP can be sought by various techniques such as reduction to SAT, then using SAT solvers.

	\begin{thm}\label{thm:EDP}
		TDP/PDP on explicit relations are NP-complete.
	\end{thm}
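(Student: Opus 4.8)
The plan is to establish both membership in NP and NP-hardness, and—since the TD and PD conditions differ—to prove hardness by two tailored reductions.

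For membership, I would use the pair $(R_1,R_2)$ itself as the certificate. Since $\In,\Out,\Barr$ are finite and given explicitly, $R_1\subseteq\In\times\Barr$ and $R_2\subseteq\Barr\times\Out$ are of size polynomial in the input, so the certificate is polynomially bounded. Checking it is routine: computing $R_1\circ R_2$ is a Boolean matrix product and hence polynomial, and the tests $Img(R_1)\subseteq Dom(R_2)$, $R_1\circ R_2\subseteq R$, $R_1\circ R_2=R$ (for TD), and $Dom(R_1\circ R_2)=Dom(R)$ (for PD) are all polynomial. Thus both variants lie in NP.

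For hardness I would first fix the combinatorial reading obtained by viewing $R$ as a Boolean matrix indexed by $\In\times\Out$. Each $b\in\Barr$ contributes to $R_1\circ R_2$ exactly the rectangle $\{i:(i,b)\in R_1\}\times\{o:(b,o)\in R_2\}$, and $R_1\circ R_2\subseteq R$ forces this rectangle to be monochromatic, i.e.\ contained in the $1$-entries of $R$. Hence a TD decomposition with $|\Barr|=k$ is precisely a cover of all $1$-entries of $R$ by $k$ all-$1$ rectangles, that is, a biclique (edge) cover of the bipartite graph of $R$ by $k$ bicliques; equivalently, it asks whether the Boolean rank of $R$ is at most $k$. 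Since deciding this is NP-complete \cite{johnson87}, and an instance translates immediately into a tuple $(\In,\Out,\Barr,R)$ with $|\Barr|=k$, TDP is NP-hard.

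For PDP I would reduce from \textsc{Set Cover}. Reading the PD condition through the same rectangle decomposition, $Dom(R_1\circ R_2)=Dom(R)$ together with $R_1\circ R_2\subseteq R$ requires only that every nonempty row of $R$ be met by some monochromatic rectangle. The key normalization is that, for the purpose of meeting rows, one may assume each $b$ selects a single output $o$, so its rectangle lies in the column $C_o=\{i:(i,o)\in R\}$, and enlarging it to the whole column never hurts coverage; multi-output choices only shrink the set of inputs reached. Consequently a PD decomposition with $|\Barr|=k$ exists iff $Dom(R)$ can be covered by $k$ columns of $R$, which is exactly \textsc{Set Cover} with universe $Dom(R)$ and sets $\{C_o\}_{o\in\Out}$. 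Encoding an arbitrary instance $(U,\{S_j\},k)$ via $\In=U$, $\Out=\{S_j\}$, $(u,S_j)\in R\iff u\in S_j$, and $|\Barr|=k$ yields the reduction, so PDP is NP-hard, and with membership this gives NP-completeness of both variants.

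The main obstacle I anticipate is the PDP direction: because the PD condition is strictly weaker than exact reconstruction, the biclique-cover reduction does not transfer, and the equivalence with \textsc{Set Cover} is exact only after the normalization argument restricting attention to single-output rectangles (full columns). Verifying that multi-output rectangles can never cover more inputs is the crux; a secondary point is matching the intermediate-domain size $|\Barr|=k$ to the cover size in both directions and lining up TDP's combinatorial form with the standard statement of Boolean matrix factorization.
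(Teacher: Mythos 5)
Your proposal is correct and follows essentially the same route as the paper: NP membership by guessing $(R_1,R_2)$ and verifying the conditions, TDP-hardness from covering by complete bipartite subgraphs (the paper's CCBS reduction, which is the same problem as your biclique-cover/Boolean-rank formulation), and PDP-hardness from Set Cover with the identical encoding $\In=U$, $\Out=\{S_j\}$, $|\Barr|=k$. Your single-output normalization for PDP is exactly the paper's step of choosing one element $c_{i_j}\in Img_{b_j}(R_2)$ for each $b_j\in Img(R_1)$, so the two arguments coincide in substance.
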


\begin{proof}
	\textbf{TDP on explicit relations is NP-complete.}
	To see that TDP is in NP, guess $R_1,R_2$ and verify the TD conditions.
	We show hardness by a reduction from the NP-complete problem: Covering by Complete Bipartite Subgraphs (CCBS) (Problem GT18 at~\cite{GJ79}).
	In the CCBS we are given a bipartite graph $G$ and $k>0$, and the problem is whether $G$ can be covered by $k$ complete bipartite subgraphs.

	Given a CCBS instance $G=(V_1,V_2,E)$ and $k>0$ we define a TDP instance with $\In=V_1$ and $\Out=V_2$ (w.l.o.g.\ every vertex in $V_1$ and in $V_2$ has an incident edge). Set $R=E$ and add the $\Barr$ elements $\Barr=\{u_1,\cdots u_k\}$.
	Suppose there is a solution $(R_1,R_2)$ to TDP\@. For every $1\leq i\leq k$ let $V_1^i=\{v\in V_1| (v,u_i)\in R_1\}$, and let $V_2^i=\{v\in V_2\mid (u_i,v)\in R_2\}$. From the TD conditions it follows that $(v,v')\in R$ for every $v\in V_1^i, v'\in V_2^i$. Furthermore,  for every $(v,v')\in R$ there is $1\leq i\leq k$ such that $v\in V_1^i, v'\in V_2^i$. Since $E=R$, then $\{(V_1^i,V_2^i)\mid 1\leq i\leq k\}$ is a collection of complete bipartite subgraphs that covers $G$.

	Next, assume there is a solution $\{(V_1^i,V_2^i)\mid 1\leq i\leq k\}$ to the CCBS\@. Then define $R_1=\bigcup_{i\leq k}\{(v,u_i)\mid v\in V_1^i\}$ and $R_2=\bigcup_{i\leq k}\{(u_i,v)\mid v\in V_2^i\}$. Then $R_1\subseteq \In\times \Barr$, $R_2\subseteq \Barr\times \Out$ and $Img(R_1)\subseteq Dom(R_2)$. To see that $R_1\circ R_2=R$, let $(v,v')\in R_1\circ R_2$. Then there is $u_i$ such that $(v,u_i)\in R_1,(u_i,v)\in R_2$, so $(v,v')\in(V_1^i,V_2^i)$, therefore $(v,v')\in R$.
	On the other hand let $(v,v')\in R$. Since the solution to CCBS covers $G$, there is $i\leq k$ such that $(v,v')\in (V_1^i,V_2^i)$, so $(v,u_i)\in R_1$ and $(u_i,v')\in R_2$, hence $(v,v')\in R_1\circ R_2$.

	\vspace{2mm}

	\textbf{PDP on explicit relations is NP-complete.} Membership in NP is easily shown. We show hardness by reduction from Set Cover (Problem SP5 in~\cite{GJ79}). In Set Cover we are given a set of elements	$S=\{a_1,\cdots a_n\}$, a set of subsets of $S$, $C= \{c_1,\cdots c_m\}$ and $k\geq 0$. The problem is whether there are $k$ sets from $C$ whose union covers $S$. We assume w.l.o.g.\ that every member of $S$ belongs to at least one member of $C$.
	Given an instance $I=(S,C,k)$ of Set Cover,  define a PDP instance $I'$ where
	$\In=S$,  $\Out=C$,  $R=\{(a,c) \mid a\in c\}$ and add the intermediate domain elements to be $\Barr=\{b_1,\cdots b_k\}$.
	Assume that there is a set cover $\{c_{i_1},\cdots c_{i_k}\}\subseteq C$ for $S$. Then construct $R_1,R_2$ as follows. For every element $a$ and $j\leq k$, set $(a,b_j)\in R_1$ iff  $c_{i_j}$ is the first set for which $a\in c_{i_j}$. In addition
	set $(b_j,c_{i_j})\in R_2$. Then $R_1\subseteq \In\times\Barr$, $R_2\subseteq \Barr\times \Out$ and $Img(R_1)\subseteq Dom(R_2)$.
	Since the set cover for $S$ covers all elements of $S$, we have that $Dom(R_1\circ R_2)=Dom(R)$.
	To see that $R_1\circ R_2\subseteq R$, let $(a,c_{i_j})\in R_1\circ R_2$ for some $j\leq k$. Then $(a,b_j)\in R_1$ and $(b_j,c_{i_j})\in R_2$. By definition of $R_1$, $a\in c_{i_j}$, therefore we have that $(a,c_{i_j})\in R$.

	Next, assume  $I'$  has a PDP solution with relations $(R_1,R_2)$.
	Define a set cover as follows.
	For every $b_j\in Img(R_1)$ (and therefore $b_j\in Dom(R_2)$), choose a single element $c_{i_j}\in Img_{b_j}(R_2)$ and set $c_{i_j}\in C'$. Then $|C'|\leq k$.
	To see that $C'$ is a set cover for $S$, let $a\in\In$. Then since $a\in Dom(R)$ and $Dom(R_1\circ R_2)=Dom(R)$, there is
	$c\in \Out$ such that  $(a,c)\in R_1\circ R_2$. Therefore there is $b_j\in\Barr$ such that $(a,b_j)\in R_1$ and by definition $(b_j, c_{i_j})\in R_2$, which means $(a,c_{i_j})\in R_1\circ R_2$ as well.
	Since $R_1\circ R_2\subseteq R$ then $(a,c_{i_j})\in R$ which means $a\in c_{i_j}$. As $c_{i_j}\in C'$ our proof is complete. 	\end{proof}

		\subsection{Symbolic relations}\label{sec:symbrel}

In Section~\ref{sec:Rel}, the input relation is described explicitly.
In many cases, however, although finite, the relation is too large to be
described explicitly, and it makes more sense to describe it symbolically. Specifically,
the domains are given as the set of all truth assignments over sets of Boolean variables,
and the relation is described symbolically. Such representations have been studied in the literature,
where they are often referred to as \textit{succinct} representations, since they allow for a polynomial-size
description of exponential-size domains and relations.  In this section we explore a standard encoding
in which the relation is described as a Boolean circuit, as in~\cite{das2016cnf,balcazar1992complexity}.
Other symbolic encodings studied in the literature are Boolean formulas~\cite{Veith97},
and BDDs~\cite{FeigenbaumKVV99}.

In general, a \textit{succinct representation} (also called \textit{circuit description} in our setting) of a binary word $w$ is a boolean circuit that on input $i$ in binary emits two boolean values $x,y$ as an output: $x=1$ iff  $i\leq |w|$, and if $x=1$ then $y$ is the $i$'th bit of $w$~\cite{balcazar1992complexity}.
In that sense the circuit description of an integer $k$ is a boolean circuit that on input $i$ in binary emits $1$ iff $i\leq k$, and the
circuit description of a binary relation $R$ of natural numbers is a boolean circuit $C_R$ that on input $(i,j)$ in binary decides whether $(i,j)$ belongs to $R$.
As such, the circuit description of a language $A$ is defined to be the set of all circuits that succinctly describe words in $A$~\cite{balcazar1992complexity}.
For more about circuit description, see~\cite{GalperinW83,balcazar1992complexity,PapadimitriouY86}.

% Let $D,D'$ be finite domains of size $|D|=2^n$, and $|D'|=2^{n'}$ for some $n,n'$.
% A (succinct) \textit{circuit description} of a relation $R\subseteq D\times D'$ is a
% circuit $C_R$ of size polynomial in $max(n,n')$ with  Boolean variables $d_1,\cdots d_n,d'_1,\cdots, d'_{n'}$
% such that  $C_R(d_1,\cdots ,d_n,d'_1,\cdots ,d'_{n'})=1$ iff \newline $((d_1,\cdots ,d_n),(d'_1,\cdots ,d'_{n'}))\in R$.

% \begin{prob}\label{prob:TDPCircuit}
% (TDP/PDP for symbolic relations) We are given a circuit $C_I$ that describes a word in a form of a tuple $I=(n_\In, n_\Out, n_\Barr, R)$ where $n_\In, n_\Out, n_\Barr$ are natural numbers, and $R$ is a relation $R\subseteq\In\times\Out$ where $\In=\{0,1\}^{n_I}$ and $\Out=\{0,1\}^{n_O}$.
% The problem is whether there exist relations  $R_1\subseteq \In \times \Barr$
% and $R_2\subseteq \Barr \times \Out$,  where $\Barr=\{0,1\}^{n_\Barr}$, such that $(R_1,R_2)$  meet the TD/PD conditions.
% \end{prob}

\begin{prob}\label{prob:TDPCircuit}
(TDP/PDP for symbolic relations) We are given a tuple $I=(n_\In, n_\Out, n_\Barr, C_R)$ where $n_\In, n_\Out, n_\Barr$ are natural numbers, and $C_R$ is a circuit description of a relation $R\subseteq\In\times\Out$ where $\In=\{0,1\}^{n_I}$ and $\Out=\{0,1\}^{n_O}$.
The problem is whether there exist relations  $R_1\subseteq \In \times \Barr$ and $R_2\subseteq \Barr \times \Out$,  where $\Barr=\{0,1\}^{n_\Barr}$, such that $(R_1,R_2)$  meet the TD/PD conditions.

\end{prob}

%Note that TDP/PDP for symbolic relations is another formulation for the circuit description of explicit TDP/PDP\@.
%Also note that if there exists such $(R_1,R_2)$ that meets the TD/PD conditions then there also exist $C_{R_1},C_{R_2}$ that are circuit descriptions of $R_1,R_2$ respectively.

Note that since the domains $\In,\Barr, \Out$ are finite, the requirement for TDP/PDP for symbolic relation can equivalently be to find whether there exist circuits $C_{R_1},C_{R_2}$ that describe relations $R_1\subseteq \In \times \Barr$
and $R_2\subseteq \Barr \times \Out$, such that $(R_1,R_2)$  meet the TD/PD conditions.

For TDP/PDP, as in Claim~\ref{clm:small}, the problem becomes trivial when $n_\Barr \geq \min\{n_\In,n_\Out\}$. Note that a variant of PDP, in which the required relations in the solution are functions can be viewed as an instance of the problem of Boolean functional synthesis, e.g.~\cite{FTV16,CFTV18}.

We next show that TDP/PDP  are NEXPTIME-complete. We obtain this result by applying
the computational-complexity theory of succinct-circuit representations for
``logtime'' reductions~\cite{balcazar1992complexity} to the NP-hardness reductions described in
Section~\ref{sec:Rel}.  We describe this in details.

A reduction from a language $A\subseteq \Sigma^*$ to a language $B\subseteq\Sigma^*$, for a finite alphabet $\Sigma$,
is a function $f:A\rightarrow B$ such that $x\in A$ iff $f(x)\in B$. The function $f$ is called a \textit{logtime reduction}
if the $i$-th symbol of $f(x)$ can be computed in a time logarithmic in the size of $x$. This is done by using a so called
``direct-input-access'' Turing machine, which has a specific ``index'' tape in which a binary index~$i$ is written and
then the $i$-th symbol of the input string $x$ is computed. See~\cite{balcazar1992complexity}, which also states
a generalization of the following:

\begin{thm}\label{thm:balcazar} \emph{(Balcazar, Lozano, Toran~\cite{balcazar1992complexity})}
For every language $B\subseteq\Sigma^*$, if $B$ is NP-hard under logtime reducibility,
then the succinct representation of $B$ is \newline NEXPTIME-hard under polynomial-time reducibility.
\end{thm}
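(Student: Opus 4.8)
The plan is to prove NEXPTIME-hardness directly: for an \emph{arbitrary} language $L \in \text{NEXPTIME}$ I would exhibit a polynomial-time many-one reduction from $L$ to $S_B$, where $S_B$ denotes the succinct representation of $B$, i.e.\ the set of circuits that succinctly describe words in $B$. The bridge between the exponential world of $L$ and the polynomial world in which $B$ lives is a padding argument: an exponentially padded version of $L$ sits in NP, so the NP-hardness of $B$ under logtime reducibility becomes applicable, and the crux is that a logtime reduction applied to a \emph{structurally trivial} padded input can be compiled into a polynomial-size circuit describing its output.

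First I would fix a nondeterministic machine $M$ deciding $L$ in time $2^{p(n)}$ for a polynomial $p$, and define the padded language
\[
L' = \{\, x \# 1^{2^{p(|x|)}} \;:\; x \in L \,\}.
\]
On an input $z$ of length $N$ one can, in time polynomial in $N$, recover the prefix $x$, verify that the suffix has the correct length $2^{p(|x|)}$ (feasible because $N \geq 2^{p(|x|)}$ forces $p(|x|) \leq \log N$), and then nondeterministically simulate $M$ on $x$ for its $2^{p(|x|)} \leq N$ steps. Hence $L' \in \text{NP}$.

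Since $B$ is NP-hard under logtime reducibility, there is then a logtime reduction $f$ with $y \in L' \iff f(y) \in B$. Writing $y(x) = x \# 1^{2^{p(|x|)}}$, I obtain the chain
\[
x \in L \iff y(x) \in L' \iff f(y(x)) \in B \iff C_x \in S_B,
\]
where $C_x$ is any Boolean circuit that succinctly describes the word $f(y(x))$; the last equivalence holds because a circuit belongs to $S_B$ exactly when the unique word it encodes lies in $B$. It remains to construct $C_x$ from $x$ in polynomial time, which is the heart of the argument.

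The main obstacle — and the point where both hypotheses are used simultaneously — is producing $C_x$ efficiently. On index $i$ the circuit $C_x$ must emit the $i$-th symbol of $f(y(x))$ together with its in-range bit, and I would build it by hardwiring the $O(\log|y(x)|) = O(p(|x|))$-step run of the direct-access machine for $f$ on index $i$. Each time that machine probes a position $j$ of its conceptual input $y(x)$, the required symbol is supplied by a tiny sub-circuit: the first $|x|$ symbols are the bits of $x$, which we hardwire; the next is $\#$; and every later symbol is $1$. Thus a comparison of $j$ against the constant $|x|$ feeding a multiplexer of size $O(|x|)$ resolves the probe. Because the machine runs for only $O(p(|x|))$ steps and each step (including each probe) contributes a sub-circuit of size $\mathrm{poly}(|x|)$, the full circuit $C_x$ has size $\mathrm{poly}(|x|)$ and is constructible in time $\mathrm{poly}(|x|)$; moreover the index $i$ ranges only up to $|f(y(x))| \leq \mathrm{poly}(|y(x)|) = 2^{O(p(|x|))}$, so $C_x$ has just $O(p(|x|))$ input wires, as required for a genuine succinct description. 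The one routine point to check is that the direct-access model also lets us detect $i > |f(y(x))|$ in order to set the in-range bit, which is standard. This gives the desired polynomial-time reduction $x \mapsto C_x$, and since $L \in \text{NEXPTIME}$ was arbitrary, $S_B$ is NEXPTIME-hard under polynomial-time reducibility.
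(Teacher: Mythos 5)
Your proof is correct, but there is nothing in the paper to compare it against at the level of technique: the paper does not prove Theorem~\ref{thm:balcazar} at all, it imports it verbatim from Balc\'azar, Lozano and Tor\'an~\cite{balcazar1992complexity} and only \emph{uses} it (together with their Conversion Lemma) in the proof of Theorem~\ref{thm:succ}. What you have produced is a self-contained reconstruction of the standard argument from that reference: exponential padding turns an arbitrary NEXPTIME language $L$ into an NP language $L'$, logtime NP-hardness of $B$ supplies a reduction $f$ from $L'$ to $B$, and the heart of the matter is that $f$ applied to the padded string $x\#1^{2^{p(|x|)}}$ can be compiled into a polynomial-size circuit, because each probe of the padded input is answered by a trivial comparator-plus-multiplexer sub-circuit and the logtime machine's run has only $O(p(|x|))$ steps to unroll. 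It is worth noting that this compilation step is exactly the proof idea of the Conversion Lemma (if $A$ logtime-reduces to $B$ then the succinct version of $A$ poly-time reduces to that of $B$), specialized to the case where the succinct description of the source instance is trivial; your padding argument then makes a separate NEXPTIME-hardness anchor (such as succinct SAT, which would otherwise need its own Cook--Levin-style proof) unnecessary. The two points you wave off as standard --- that a logtime reduction has polynomially bounded output length, and that the direct-access model lets the circuit compute the in-range bit for indices $i > |f(y(x))|$ --- are indeed properties built into the model of \cite{balcazar1992complexity}, so flagging rather than proving them is acceptable; just be aware that they are part of the definition of logtime reducibility rather than consequences of it.
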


Since ``standard'' NP-hard problems (under polynomial-time reducibility) tend to be NP-hard also under logtime reducibility, the crux of Theorem~\ref{thm:balcazar} is that a succinct representation of a given problem in a form of a circuit does not necessarily ease the complexity of solving the problem.

%In our work, we use the fact the definition of logtime-reduction and Theorem~\ref{thm:balcazar} can be easily generalized to poly-logtime reductions. Therefore we get.

From Theorem~\ref{thm:balcazar} we get:

\begin{thm}\label{thm:succ}
TDP/PDP for symbolic relations are NEXPTIME-complete.
\end{thm}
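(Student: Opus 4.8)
The plan is to prove the two bounds separately: membership by a guess-and-check that exploits the finiteness of the domains, and hardness by transporting the explicit NP-hardness of Theorem~\ref{thm:EDP} through the succinctness transfer of Theorem~\ref{thm:balcazar}. For membership, I would first observe that $n_\In$ and $n_\Out$ are bounded by $|I|$, since the circuit $C_R$ reads $n_\In+n_\Out$ input bits, and that I may assume $n_\Barr<\min\{n_\In,n_\Out\}$, as otherwise the instance is positive by the triviality observation following Problem~\ref{prob:TDPCircuit} (the symbolic analog of Claim~\ref{clm:small}). Thus all three parameters are linear in $|I|$ and the relations $R_1\subseteq\In\times\Barr$, $R_2\subseteq\Barr\times\Out$ have tables of size $2^{O(|I|)}$. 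A nondeterministic machine then guesses these two tables explicitly and verifies the TD/PD conditions---checking $Img(R_1)\subseteq Dom(R_2)$, forming $R_1\circ R_2$, and comparing it with $R$ for TDP, or testing $Dom(R_1\circ R_2)=Dom(R)$ together with $R_1\circ R_2\subseteq R$ for PDP. Each query ``$(i,o)\in R$?'' costs one evaluation of $C_R$, and there are only $2^{O(|I|)}$ pairs, so verification runs in exponential time and the problem is in NEXPTIME.

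For hardness I would observe that Problem~\ref{prob:TDPCircuit} is, up to a polynomial-time re-encoding, the succinct representation of Problem~\ref{prob:ExpEDP}: the circuit $C_R$ is precisely a succinct description of the exponential-size characteristic table of $R$, while the domains are hypercubes whose succinct descriptions are recoverable from $n_\In,n_\Out,n_\Barr$. By Theorem~\ref{thm:balcazar} it therefore suffices to show that explicit TDP/PDP is NP-hard \emph{under logtime reducibility}. To this end I would revisit the two reductions of Theorem~\ref{thm:EDP} and check that each is a logtime reduction: both are syntactically local, since $R=E$ merely copies the edge relation, $\In$ and $\Out$ copy the vertex sets, and $\Barr$ is fixed by the single integer $k$, so the output bit indexed by a pair $(v,v')$ is computed by a direct-access lookup of the corresponding edge bit of the source instance. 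Composing these local maps with the NP-hardness of CCBS and Set Cover---which holds already under logtime reducibility, as is typical for such natural NP-complete problems---gives logtime NP-hardness of explicit TDP/PDP, whence Theorem~\ref{thm:balcazar} yields NEXPTIME-hardness of the succinct, and hence of the symbolic, problem.

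The step I expect to be the main obstacle is reconciling the domain conventions so that this transfer is airtight, and verifying the logtime bound against the precise definition of a direct-input-access machine. The symbolic problem forces $|\Barr|=2^{n_\Barr}$ to be a power of two, whereas the reductions produce an intermediate domain of arbitrary size $k$; padding $\Barr$ up to the next power of two is \emph{not} harmless, since surplus intermediate elements grant the decomposer extra ``bandwidth'' and can turn a no-instance into a yes-instance, unlike surplus elements of $\In$ or $\Out$, which stay outside $Dom(R)$ and $Img(R)$ and are inert. I would instead pad the combinatorial instance: adjoin $2^{\lceil\log k\rceil}-k$ pairwise disjoint isolated pairs to $R$ (vertex-disjoint edges for CCBS, dedicated singleton sets for Set Cover). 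An isolated pair $(i^*,o^*)$, with $i^*$ related only to $o^*$ and $o^*$ related only from $i^*$, must be routed through a dedicated intermediate element---sharing it would force a spurious pair into $R_1\circ R_2$---so these additions consume exactly the surplus budget, making the padded instance decomposable within the power-of-two bound if and only if the original is decomposable within budget $k$. It then remains to fix a string layout for instances under which the edge bit for $(v,v')$ and each added pair can be located within a logarithmic number of steps, so that both the reductions and the padding are genuinely logtime-computable and the padded circuit is produced in polynomial time.
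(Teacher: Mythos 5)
Your membership argument and your overall hardness strategy (transfer the explicit-case reductions of Theorem~\ref{thm:EDP} through the succinctness framework of Theorem~\ref{thm:balcazar}) coincide with the paper's. The genuine gap is in the sentence where you assert that CCBS and Set Cover are NP-hard ``already under logtime reducibility, as is typical for such natural NP-complete problems,'' and that composing this with your local maps ``gives logtime NP-hardness of explicit TDP/PDP.'' Neither half of this can be waved through. First, logtime NP-hardness of CCBS and SC is not a citable folklore fact; the known hardness of these problems is obtained through chains of reductions ($SAT \leq 3SAT \leq KCOLORABILITY \leq PARTITIONCLIQUES \leq CCBS$, and $SAT \leq CLIQUE \leq VC \leq SC$). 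Second---and this is the real obstruction---logtime reductions do not compose: to produce one output bit of $g\circ f$ you must run the machine for $g$, answering each of its input queries by a full run of the machine for $f$, which already costs on the order of $\log^2 n$. The paper flags exactly this failure of transitivity and circumvents it by applying the Conversion Lemma of \cite{balcazar1992complexity} to each link of the chain separately: every logtime step lifts to a polynomial-time reduction between the \emph{succinct} versions of the two problems, and polynomial-time reductions do compose. Your argument needs this same device (or the extension to polylog-time reductions, which the paper also mentions); as written, the inference to ``logtime NP-hardness of explicit TDP/PDP'' is unsound, even though every individual ingredient you invoke exists.

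On the other hand, your padding argument addresses a point the paper genuinely glosses over. The paper's encoding for the hardness step carries the intermediate domain size $k$ in unary, with $k$ arbitrary, while Problem~\ref{prob:TDPCircuit} forces $|\Barr|=2^{n_\Barr}$; the paper never reconciles the two. Your observations are correct: padding $\Barr$ itself is not harmless (surplus intermediate elements add bandwidth and can flip a no-instance to a yes-instance), whereas adjoining $2^{\lceil\log k\rceil}-k$ isolated pairs to $R$ is, because an isolated pair $(i^*,o^*)$ must consume a dedicated intermediate element---any sharing forces a spurious pair into $R_1\circ R_2$, violating $R_1\circ R_2=R$ for TDP and $R_1\circ R_2\subseteq R$ for PDP---so the padded instance decomposes within the power-of-two budget iff the original decomposes within budget $k$. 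To make this slot into the corrected pipeline, implement the padding as a polynomial-time transformation at the circuit level (from the circuit for $R$ and the value $k$, build a circuit for the padded relation), so that it becomes one more polynomial-time reduction between succinct instances composable with the Conversion-Lemma steps. With that repair your proof is complete and, on this particular point, tighter than the paper's.
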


\begin{proof}

First see that an equivalent formulation of the TDP/PDP on symbolic relation is the following: \textit{We are given a circuit $C_I$ that describes a word that encodes a tuple $I=(n_\In, n_\Out, n_\Barr, R)$ where $n_\In, n_\Out, n_\Barr$ are natural numbers, and $R$ is a relation $R\subseteq\In\times\Out$ where $\In=\{0,1\}^{n_I}$ and $\Out=\{0,1\}^{n_O}$.
The problem is then whether there exist relations  $R_1\subseteq \In \times \Barr$
and $R_2\subseteq \Barr \times \Out$,  where $\Barr=\{0,1\}^{n_\Barr}$, such that $(R_1,R_2)$  meet the TD/PD conditions.}
% An equivalency between the problem is obtained by a many-to-one reduction between one to another. To do that: given a tuple I, we simply encode n_I,n_O,n_B in binary and constructs circuits for each in his own index frame. Then a conjunction circuits with a circuit of CR (putting intervals between the words) is a circuit for a word that encodes a tuple I'. The other hand is done similarly.
Note that such a formulation is a succinct representation of TDP/PDP for explicit relation in which the domains are $\In=\{0,1\}^{n_\In}$, and $\Out=\{0,1\}^{n_\Out}$. Membership in NEXPTIME is easily shown since the explicit variant can be obtained from the circuit representation in exponential time and both explicit TDP/PDP are in NP\@.
In addition, as we next show, we can use this formulation to prove hardness by showing that each of the reductions in a chain of reductions from SAT to explicit TDP or to explicit PDP, is a logtime reduction, and then use Theorem~\ref{thm:balcazar} to show NEXPTIME hardness for succinct TDP/PDP\@.
For self containment of the paper we added in Appendix~\ref{sec:AppSymbolic} the definitions of the languages mentioned below along with the relevant reductions and references.

	For languages $A$ and $B$, we use the notations  $A\leq^P_m B$ to denote that $A$ is polynomial-time reduced to $B$.
	We have that $SAT\leq^P_m 3SAT\leq^P_m KCOLORABILITY\leq^P_m \newline PARTITIONCLIQUES \leq^P_m  CCBS  \leq^P_m TDP$, where  all the reductions are described in Appendix~\ref{sec:AppSymbolic} apart from the last reduction which is Theorem~\ref{thm:EDP} from this paper.
	Similarly, we have that $SAT\leq^P_m CLIQUE \leq^P_m VC \leq^P_m SC \leq^P_m   PDP$
	where all the reductions are described in Appendix~\ref{sec:AppSymbolic}, the last reduction is again from Theorem~\ref{thm:EDP}.

	As every problem $A$ is a set of binary strings (words) $x$, and as every reduction from $A$ to $B$ is a function $f$ such that $x\in A$ iff $f(x)\in B$, we have that the reductions described above are standard in the sense that only a few bits are required from $x$ in order to determine the identity of the $j$'th bit of $f(x)$. For example, in the reduction from $3SAT $ to $CLIQUE $, the output is a binary string that represents an adjacency matrix of a graph of size polynomial to the size of the input (and this size can be found in logarithmic time, see~\cite{balcazar1992complexity}). The content of every cell $(i,j)$ of that matrix can be determined by whether two satisfying assignments to 3SAT, one for a certain clause $c$ and one for a certain clause $c'$ in the input share the same variable (see~\cite{Karp72} for more details). In this reduction, the output formula is standard, in the sense that the location of the clauses $c,c'$ in the input string is easily found. As such, determining a certain output bit requires only a small (logarithmic) access to the input string, followed by a small (also logarithmic) comparison of the obtained input process.
	Similarly, the Cook-Levin reduction from any NP problem to $SAT$ is standard in that sense as well, and therefore the output as the SAT formula can be obtained from logarithmic input and comparison. See~\cite{GJ79} for more details.

	To show that the reductions from $CCBS$ to $TDP$ and from $SC$ to $PDP$ are logtime reductions, we need to be more accurate in the encoding of TDP/PDP\@. Indeed we can encode all problems as an adjacency matrix of the relation $R\subseteq \In\times\Out$ in addition to $k\geq 0$ (in unary) that describes the size of the intermediate domain $\Barr$. The input for $CCBS$ is encoded as an adjacency matrix and some $k\geq 0$. A standard encoding for $SC$ is also as an adjacency matrix (where the cell $(i,j)=1$ iff the element $a_i$ is a member of the set $c_j$) and some $k\geq 0$. As such, these reductions leave the input as is, since the difference between the instances of  TDP/PDP, $CCBS$ and $SC$ is only in the semantics.

	All in all  we have that (i) $SAT$ is NP-hard under logtime reductions, and (ii) every reduction described above is a logtime reduction. The Conversion Lemma from~\cite{balcazar1992complexity} states that if a language $A$ is logtime reduced to a language $B$ then the circuit representation of $A$ is polynomial-time reduced to the circuit representation of $B$.
	Therefore although the composition of two logtime reductions is not necessarily a logtime reduction, by using the Conversion Lemma from~\cite{balcazar1992complexity} we have that this chain of reductions for the succinct versions of the languages holds under polynomial-time reduction. Therefore by transitivity of polynomial-time reductions we get that TDP/PDP is NEXPTIME hard.

	As a side note, that may also serve as a suggestion as an alternative proog, see that the definitions of logtime reductions and the Conversion Lemma from~\cite{balcazar1992complexity} can be easily extended to poly-logtime reduction, that preserve transitivity and as can serve as an alternative proof.
\end{proof}

		\subsection{Automatic relations}\label{sec:interleave}

In many applications we need to consider input and output as streams of symbols, with some desired relation between the input stream and the output stream.  The most basic description  for such systems, the one that we explore in this work, is when the domains are (possibly infinite)  sets of finite words over finite alphabets, and the given relation is a regular language, given as a deterministic finite automaton (DFA), over the product alphabet of the input and output domains. The setting that we consider in this paper is of \textit{automatic relations}, which are relations that are described by automata as defined below. Automatic relations provide a context for a rich theory of automatic structures, cf.~\cite{KhousNer95,Rubin04}, with a  solvable decision for first-order logic. We follow here the convention in \emph{regular model checking}, cf.~\cite{bouajjani2000}, where \emph{length-preserving} automatic relations, with input and output symbols  interleaved, are used as input/output specifications for each step of reactive systems. (Thus, unlike the definitions in~\cite{KhousNer95,Rubin04}, we do not allow padding.)

Given finite alphabets $\Sigma,\Sigma'$ with domains $D\subseteq \Sigma^*$, $D'\subseteq \Sigma'^*$, and a
relation $R\subseteq D\times D'$, we say that a DFA $\Aut_R$ over the alphabet $(\Sigma\times\Sigma')$ describes $R$
if: $(\vec{d},\vec{d}')\in L(\Aut_R)$ if and only if $(\vec{d},\vec{d}')\in R$. Note the slight abuse of notation as
$L(\Aut_R)$ describes words in $(\Sigma\times\Sigma')^*$ while $R$ describes words in $\Sigma^*\times\Sigma'^*$.
Thus, we assume that input and output streams have the same lengths, that is, $(\vec{d},\vec{d}')\in R$ implies
that $|\vec{d}|=|\vec{d'}|$.

%\marginpar{[[Rev3: Regular relation are introduced in an ad hoc manner without sufficient reference to its rich literature. Moreover, the "slight abuse of notation" introduces an ambiguity in the definition: there is a difference between length preserving regular relations and all regular relations in general.]]}

\begin{prob}\label{prob:EDPTSC}
(TDP/PDP for automatic relations) We are given a tuple \[I=(\Sigma_\In, \Sigma_\Barr, \Sigma_\Out, \Aut_R),\]
where  $\Sigma_\In$, $\Sigma_\Barr$, and $\Sigma_\Out$ are finite alphabets, and $\Aut_R$ is a DFA that describes
a relation $R\subseteq\Sigma_\In^*\times \Sigma_\Out^*$. The problem is whether there exist DFAs $\Aut_{R_1}, \Aut_{R_2}$
that describe relations $R_1\subseteq \Sigma_\In^*\times \Sigma_\Barr^*$ and $R_2\subseteq \Sigma_\Barr^*\times \Sigma_\Out^*$
such that $(R_1,R_2)$ meet the TD/PD conditions.
\end{prob}

%Unlike in the explicit and symbolic versions, in which if a relation is functional, we can extract a function,
%this is not necessarily the case for the automatic case. For that we would like to obtain an automatic function
%in a form of a transducer from the relations. Thus, we require for the FDP case that $R_1$ and $R_2$ are functions
%represented by transducers.

As in the explicit and symbolic cases, the problem becomes non trivial for TDP/PDP only when $|\Sigma_\Barr| < \min\{|\Sigma_\In|, |\Sigma_\Out|\}$.
While TDP/PDP in the symbolic setting can be solved by reduction to the explicit setting, this cannot be done here
as the domains are possibly infinite.  Indeed, automatic-relation TDP/PDP seems to be a challenging problem.
We next conjecture that automatic-relation TDP/PDP is undecidable and
explain the motivation for this conjecture and why an automata-theoretic approach may not be helpful for TDP/PDP\@. We then show that even for the most basic case, in which the given relation is the equality relation, TDP/PDP can already be viewed as an algorithmic problem in automata that is equivalent to the Positivity Problem whose decidability is still open~\cite{Soi76,OW14b}.
Then, we show that for an intermediate alphabet that is of size of power of $2$, TDP/PDP on automatic relations can be reduced to TDP/PDP on binary intermediate alphabet.
Finally we show by an automata-theoretic approach that a ``strategic'' variant of PDP, in which the required relations are in form of transducers, is decidable, and in fact is in EXPTIME\@.

%In Section~\ref{sec:red01} we show that the difficulty for automatic-relation TDP/PDP already lies in a binary intermediate alphabet (that is, of size 2).

%Another setting of relations defined by means of DFA that can be considered is in which the input and the output of the system come in a
%sequential order, i.e., the entire input is read before the entire output is returned. For this setting, called \textit{sequential regular relation}
%(also called \emph{recognizable relations}~\cite{Sakar09}), we can show that TDP/PDP  can be efficiently solved; see Appendix~\ref{sec:regexp} for details.

%\marginpar{[[Rev3:It should be discussed how Problem 3 relates to the Problem 1 and Problem 2. There is no restriction of the size of the intermediate domain, but of its underlying alphabet. But, e.g., Claim 1 has a straightforward analogon here.]]}
%

	%	\input{undecidable_myv}
		\subsubsection{An undecidability conjecture for TDP/PDP}\label{sec:conj}

A notable positive result about automatic relations is the decidability of their first-order theories~\cite{KhousNer95,Rubin04}.
TDP/PDP are essentially second-order problems --- we ask for the existence of $R_1$ and $R_2$
under the TDP/PDP conditions.
Since there are second-order problems
over automatic relations that are known to be undecidable; for example, checking the existence of an Hamiltonian path in an automatic graph~\cite{KuskLohr10}, our conjecture is that this problem is undecidable as well.
We provide here intuition to justify this conjecture.

\begin{conj}\label{conj:only}
TDP/PDP for automatic relations is undecidable.
\end{conj}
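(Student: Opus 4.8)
The plan is to prove undecidability by reducing from a known undecidable problem about computations, most naturally non-halting of a two-counter (Minsky) machine, or, following the hint about automatic structures, the absence of a Hamiltonian path in an automatic graph~\cite{KuskLohr10}. The first step is to recast decomposition combinatorially. Because the relation is length-preserving, composition is as well, so each length is independent: a pair $(R_1,R_2)$ meeting the TD condition amounts, for each length $n$, to covering the bipartite graph $R\cap(\Sigma_\In^n\times\Sigma_\Out^n)$ by complete bipartite subgraphs indexed by the words of $\Sigma_\Barr^n$, exactly as in the explicit reduction of Theorem~\ref{thm:EDP}. The decisive new ingredient is that this family of biclique covers, one per length, must be \emph{uniform}: the membership predicates $(i,b)\in R_1$ and $(b,o)\in R_2$ are required to be recognized by finite automata $\Aut_{R_1}$ and $\Aut_{R_2}$. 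Thus automatic-relation TDP/PDP is precisely the question of whether an infinite family of biclique-cover instances admits a \emph{regular} solution, and it is this uniformity constraint, absent in the explicit and symbolic settings, that must carry the computational content of any reduction.

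The second step is to engineer the input relation $R$ so that a regular biclique cover through $\Sigma_\Barr$ exists if and only if the underlying machine fails to halt (equivalently, the automatic graph has no Hamiltonian path). Here I would exploit the bandwidth restriction $|\Sigma_\Barr|<\min\{|\Sigma_\In|,|\Sigma_\Out|\}$, which by the automatic analogue of Claim~\ref{clm:small} forces genuine compression at \emph{every} length and so rules out trivial coverings. The intermediate word $b\in\Sigma_\Barr^*$ plays the role of a \emph{certificate} threaded through the computation: the existential projection inherent in composition, $(i,o)\in R_1\circ R_2 \iff \exists b\,(i,b)\in R_1 \wedge (b,o)\in R_2$, supplies a restricted form of guessing, while the requirement that $\Aut_{R_1},\Aut_{R_2}$ be finite-state bounds the memory with which that certificate may be checked. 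The intended encoding would pair inputs describing a computation prefix with outputs describing its continuation, so that the only way a fixed-memory device can consistently match them across all lengths is to track a genuine run of the machine, and consistency of such a run over unboundedly many steps encodes the nonterminating (undecidable) behaviour. The converse direction I expect to be routine: a finite decomposition yields, by a pumping argument on the product of $\Aut_{R_1}$ and $\Aut_{R_2}$, a periodic structure that projects to the desired infinite witness.

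The main obstacle — and the reason the statement is posed as a conjecture rather than a theorem — is precisely calibrating the expressive power of the combination \emph{finite-state $R_1,R_2$ plus existential middle-coordinate projection}. This hybrid is strictly weaker than arbitrary Turing computation: the middle alphabet contributes only $|\Sigma_\Barr|^n$ distinct certificates at length $n$, so the number of available bicliques grows merely exponentially and is governed by linear-recurrence-style counting. This is exactly what the subsequent equivalence with the Positivity problem reveals: even the trivial-looking equality relation already forces a comparison between an exponential certificate budget $|\Sigma_\Barr|^n$ and a recurrence-controlled cover requirement, and deciding such comparisons uniformly is the open Positivity question~\cite{Soi76,OW14b}. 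Consequently, any successful reduction must smuggle a full undecidable computation through a channel that is, a priori, only as strong as Positivity-type arithmetic; closing this gap — showing the channel is in fact Turing-complete rather than merely Positivity-hard, and that the regularity constraint can enforce step-by-step faithfulness of the encoded run — is the crux, and is what keeps the question open.
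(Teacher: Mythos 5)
This statement is a \emph{conjecture}: the paper offers no proof of it, only circumstantial evidence --- (i) TDP/PDP is a second-order question over automatic relations, and some such questions (e.g., existence of Hamiltonian paths in automatic graphs \cite{KuskLohr10}) are undecidable; (ii) a \emph{relaxed} special case (the equality relation, binary $\Sigma_\Barr$, and crucially with $R_1,R_2$ \emph{not} required to be automatic, only length-preserving) is equivalent to Positivity; and (iii) the dual-projection obstruction discussed in Section \ref{sec:conj2}. Your reformulation of decomposition as a length-indexed family of biclique covers with a regularity constraint is correct and matches the paper's own framing (the CCBS view of Theorem \ref{thm:EDP} and the communication-complexity remark), and like the paper you stop short of a proof --- which by itself would be acceptable for a conjecture.

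However, your plan contains a concrete error that goes beyond incompleteness: the proposed source problems and the direction of the intended equivalence are impossible. TDP/PDP for automatic relations (Problem \ref{prob:EDPTSC}) is recursively enumerable: a candidate witness is a \emph{pair of DFAs}, and given $\Aut_{R_1},\Aut_{R_2}$ the TD/PD conditions are decidable, since $R_1\circ R_2$ is effectively regular (take the product over $\Sigma_\In\times\Sigma_\Barr\times\Sigma_\Out$, project out $\Sigma_\Barr$, determinize) and equality/containment of regular languages, as well as $Img(R_1)\subseteq Dom(R_2)$, are decidable. So the decision problem lies in $\Sigma_1$. Non-halting of a Minsky machine is $\Pi_1$-complete, and Hamiltonian-path existence in automatic graphs is $\Sigma_1^1$-complete (so its absence is $\Pi_1^1$-complete); a many-one reduction from any of these to a $\Sigma_1$ set would collapse $\Pi_1$ (resp.\ $\Pi_1^1$) into $\Sigma_1$, which is false. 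In particular your intended biconditional ``a regular decomposition exists iff the machine never halts,'' together with your closing pumping argument extracting an infinite non-halting witness from a finite decomposition, cannot both hold. Any correct undecidability proof must instead reduce from a $\Sigma_1$-complete problem (e.g., halting), arranging that a decomposition exists iff the machine \emph{halts}, and would establish $\Sigma_1$-completeness rather than hardness for co-r.e.\ or analytical problems. Relatedly, your reading of the Positivity connection overlooks that the paper's EBP equivalence concerns the relaxed variant whose witnesses are arbitrary (non-regular) length-preserving relations --- that relaxation is exactly what places it at a $\Pi_1$-flavored counting condition; once witnesses must be DFAs, as in the conjecture, the problem sits at a different arithmetical level, so the ``Positivity-type channel'' intuition does not transfer to the uniform problem as directly as you assert.
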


%\subsubsection{Decomposing equality}

To support the claim of how non-trivial the decomposition problem is, we consider a more simple and abstract variant of automatic TDP\@. We consider a very simple case in which the given automatic relation is trivial, and the intermediate alphabet is binary. We do not even require the decomposed relations $R_1,R_2$ to be realized by automata, although we do require the length of every matching words in $R_1$ and $R_2$ to be the same.
Specifically, given a regular language $L$ over $\Sigma_\In^*$, let $R^=_L\subseteq (\Sigma_\In\times\Sigma_\In)^*$
be the \textit{equality} relation over $L$, i.e. $R^=_L =\{(w,w) \mid w\in L\}$.
Given $L$ over $\Sigma_\In^*$ and $\Sigma_\Barr=\{0,1\}$, we ask whether there are relations $(R_1,R_2)$ that meet
the TD conditions with respect to $R^=_L$.

First see that we can assume w.l.o.g.\ that $(R_1,R_2)$ are functions since existence of such relations leads to existence of such functions.
Next, note that $R_1$ cannot relate two distinct words in $L$ to the same word in $\Sigma_\Barr^*$. This is because otherwise a word from
$\Sigma_\Barr^*$ has to meet two distinct $\Sigma_\In^*$ elements, thus either break the equality property or break the TDP property.
Therefore we have that $R_1$ is an injection from $L$ to $\Sigma_\Barr^*$. As such, finding an $R_1$ that is such an injection also gives us $R_2=R_1^{-1}$.

Let $L_n$ be the words in $L$ of size $n$.  Note that if there is $n$ for which $|L_n|>2^n$ then no such $R_1$ can be found.
If, however, for every $n$ we have $|L_n|\leq 2^n$ then a function $R_1$ can be simply realized by ordering the words
in every $L_n$ in lexicographic order and relating each one to the $\Sigma_\Barr^n$ word that encodes the index in binary.

Therefore the problem of TDP in this setting is reduced to the following problem: given a regular language $L$ over a finite alphabet $\Sigma$,
where for every $n$, $L_n$ is the set of words in $L$ of size $n$, does $|L_n|\leq 2^n$  for every $n$? We call this problem the
\emph{The Exponential-Bound Problem} (EBP).  The following theorem, however, shows that EBP is equivalent to the problem of \textit{Positivity}, described below, whose decidability has been famously open for decades~\cite{Soi76,OW14b}. By \textit {equivalent}, we mean that there is a many-to-one reduction from EBP to Positivity and vica versa.
Although this is not a direct reduction to automatic TDP, this relation indicates the hardness of
solving automatic TDP on even a simple relation.

A \emph{linear recurrence sequence} (LRS) is a sequence of integers $\langle u_n \rangle_{n=0}^{\infty}$
satisfying a recurrence relation: there exist integer constants $a_1, a_2, \ldots, a_d$ such that,
for all $n \geq 0$, $u_{n+d} = a_1u_{n+d-1} + a_2u_{n+d-2} + \cdots + a_d u_n$ (we say that such a sequence has \emph{order} $d$.)  If the initial values
$u_0, \ldots, u_{d-1}$ of the sequence are provided, the recurrence relation defines the rest of the
sequence uniquely.  Given a linear recurrence sequence (LRS) $\langle u_n \rangle_{n=0}^{\infty}$,
the \emph{Positivity Problem} asks whether all terms of the sequence are non-negative.
We next show that EBP and Positivity are equivalent. Thus the decidability of one problem implies the decidability of the other.

We begin by recalling a useful technical result about LRSs, whose proof can be found in~\cite{AAOW15}:

\begin{propC}[{\cite[Cor.~4]{AAOW15}}]%
\label{cor-4-proposition}
Let $\langle u_n \rangle_{n=0}^{\infty}$ be an integer LRS of order $d$. Then there exists a rational stochastic matrix $M$, of dimension $4d+5$, such that for all $n \geq 0$, we have
\[%
	\label{positivity-equation-prop}
	u_n \leq 0 \quad \mbox{iff} \quad
	(M^{2n+1})_{1,2} \leq \frac{1}{4}
	\, .
\]
Moreover, as noted in the comments following Corollary~4 in~\cite{AAOW15},
	$M$ can be chosen so that its entries are dyadic rationals, i.e.,
	having denominator some power of $2$.
\end{propC}

(Technically speaking, the proof of Corollary~4 in~\cite{AAOW15} constructs a dyadic-rational stochastic matrix $\widetilde{Q}$ such that
$u_n \leq 0$ iff $(\widetilde{Q}^{2n+1})_{1,4d+3} \leq 1/4$.
The desired matrix $M$ is then immediately obtained from $\widetilde{Q}$ by interchanging rows $2$ and $4d+3$, and interchanging columns $2$ and $4d+3$.)

We are now in a position to proceed with our equivalence:

\begin{thm}
	EBP is Equivalent to Positivity.
\end{thm}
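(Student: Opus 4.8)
The plan is to establish the equivalence by exhibiting two computable many-one reductions, $\mathrm{EBP}\le_m\mathrm{Positivity}$ and $\mathrm{Positivity}\le_m\mathrm{EBP}$, so that decidability of either transfers to the other. The unifying observation is that the census function $n\mapsto|L_n|$ of a regular language is an integer linear recurrence sequence (LRS), while conversely a suitable nonnegative-integer matrix power can be read back as such a census function; the bridge in the harder direction is Proposition~\ref{cor-4-proposition}.

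For the easy direction $\mathrm{EBP}\le_m\mathrm{Positivity}$, given a DFA $\Aut_L=(\Sigma,Q,q_0,\delta,F)$ for $L$ I would form the integer transition-count matrix $A$ with $A_{q,q'}=|\{a\in\Sigma:\delta(q,a)=q'\}|$, so that $|L_n|=e_{q_0}^{\top}A^{n}\mathbf{1}_F$. By Cayley--Hamilton this is an integer LRS whose order, integer coefficients and initial values are all computable from $\Aut_L$. Since $\langle 2^n\rangle$ is itself an LRS and LRSs are closed under termwise subtraction, $g_n:=2^n-|L_n|$ is an integer LRS computable from $\Aut_L$. Now $L$ is a yes-instance of EBP iff $g_n\ge 0$ for all $n$, i.e. iff $\langle g_n\rangle$ is a yes-instance of Positivity, so $\Aut_L\mapsto\langle g_n\rangle$ is the desired reduction.

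For the hard direction $\mathrm{Positivity}\le_m\mathrm{EBP}$ I would route through Proposition~\ref{cor-4-proposition}, with one alignment trick: apply it not to $\langle u_n\rangle$ but to its negation $\langle -u_n\rangle$ (again an LRS), so that $-u_n\le 0\iff (M^{2n+1})_{1,2}\le\frac14$, i.e. $u_n\ge 0\iff (M^{2n+1})_{1,2}\le\frac14$. Hence $\langle u_n\rangle$ is a yes-instance of Positivity iff $(M^{2n+1})_{1,2}\le\frac14$ for all $n$. I would then clear denominators using the dyadic guarantee: for $k$ large enough that $N:=2^{k}M$ is an integer matrix (and $k\ge 3$, which we may always arrange by enlarging $k$), the condition rewrites as $(N^{2n+1})_{1,2}\le 2^{k(2n+1)-2}$ for all $n$. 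Because $N$ is a nonnegative integer matrix, $(N^{m})_{1,2}$ equals the number of length-$m$ words accepted by a DFA $D$ obtained by giving the parallel edges of the multigraph $N$ distinct letters, with vertex $1$ initial and vertex $2$ accepting; taking the product of $D$ with a two-state parity automaton restricts it to odd lengths $m=2n+1$, so its census is $(N^{2n+1})_{1,2}$ at length $2n+1$ and $0$ at all other lengths.

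The remaining, and most delicate, step is to force the base-$2$ EBP threshold $2^{\ell}$ to coincide with $2^{k(2n+1)-2}$. For this I would dilate each accepted word of length $m$ by a fixed, injective, length-multiplying regular encoding that stretches it to length exactly $km-2$ (for instance expanding each letter into a $k$-block padded by a fresh symbol $\#$ and deleting two pad symbols from the final block); this preserves the count while moving the single occupied length to $\ell=k(2n+1)-2$. The resulting regular language $L$ then satisfies $|L_\ell|=(N^{2n+1})_{1,2}$ for $\ell=k(2n+1)-2$ and $|L_\ell|=0$ otherwise, so $L$ is a yes-instance of EBP iff $(N^{2n+1})_{1,2}\le 2^{k(2n+1)-2}$ for all $n$, iff $\langle u_n\rangle$ is a yes-instance of Positivity. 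I expect this encoding bookkeeping---simultaneously absorbing the dyadic denominators, the constant factor $\frac14=2^{-2}$, and the odd-index restriction while leaving every other length vacuously harmless---to be the main obstacle; the two substantive ingredients that make it go through are the dyadic-denominator clause of Proposition~\ref{cor-4-proposition} (used to build the integer matrix $N$) and the proposition's threshold characterization itself (used as a black box).
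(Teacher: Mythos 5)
Your proposal is correct and takes essentially the same route as the paper: the easy direction rests on the census of a regular language being an integer LRS (you derive it via Cayley--Hamilton where the paper cites the literature), and the hard direction leans on Proposition~\ref{cor-4-proposition} with the same sign-alignment trick, the same dyadic-denominator clearing, and the same multigraph-to-DFA construction with fresh letters on parallel edges. The only divergences are bookkeeping: you isolate odd powers with a two-state parity product and absorb the factor $\frac{1}{4}$ by dilating accepted words to length $km-2$, whereas the paper works with $J\cdot N^{n}$ (so word length $n+1$ encodes the odd power $2n+1$), dilates by a factor of $2p$, and absorbs $2^{-(p+2)}$ by inflating the census through the alphabet -- both variants are sound.
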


\begin{proof}
	We first show that Positivity reduces to EBP\@. Let $\langle u_n
	\rangle_{n=0}^{\infty}$ be an LRS of order $d$; we show how positivity for the
	sequence $\langle -u_n \rangle_{n=0}^{\infty}$ can be formulated as an
	EBP problem. To this end, we invoke Prop.~\ref{cor-4-proposition}
	to
	obtain a stochastic matrix $M$ of dimension $4d+5$, all of whose entries are dyadic rationals, and such that, for all $n \geq 0$,
	\begin{equation}%
	\label{positivity-equation1}
	(M^{2n+1})_{1,2} \leq \frac{1}{4} \quad \mbox{iff} \quad u_n \leq 0
	\quad \mbox{iff} \quad -u_n \geq 0
	\, .
	\end{equation}
	In other words the positivity of $\langle -u_n \rangle_{n=0}^{\infty}$
	is violated iff there is some $n$ such that the $(1,2)$-\emph{th}
	entry of $M^{2n+1}$ is strictly larger than $1/4$.

	Let $2^p$ be the largest power appearing among the denominators of the entries of $M$. Write $J = 2^p M$
	and $N = (2^p M)^2$. Then $J$ and $N$ are square matrices with
	non-negative integer coefficients, and hence there is some DFA
	$\mathcal{A}$ such that $(J\cdot N^n)_{1,2}$ is the number of words of
	length $n+1$ accepted by $\mathcal{A}$. More precisely, $\mathcal{A}$
	has initial state $s$, and $4d+5$ further states $q_1, \ldots, q_{4d+5}$. The single
	accepting state is $q_2$. To define the transition function, if the
	$(1,j)$-\emph{th} entry of $J$ is $\ell$, then we postulate $\ell$
	transitions going from state $s$ to state $q_j$, each labelled with a
	new (fresh) letter. Likewise, if the $(i,j)$-\emph{th} entry of $N$ is
	$\ell$, then we include $\ell$ transitions going from $q_i$ to $q_j$,
	again for each one using a new letter as label. In this way, $J$ and $N$ can be
	viewed as the adjacency matrices of the underlying directed multigraph
	of $\mathcal{A}$, and $(J\cdot N^n)_{1,2}$ counts the number of paths
	in $\mathcal{A}$ going from $s$ to $q_2$ in $n+1$ steps. Since by
	construction, different paths give rise to different words, $(J\cdot
	N^n)_{1,2}$ does indeed correspond to the number of words of length
	$n+1$ accepted by $\mathcal{A}$.

	Writing $L(\mathcal{A}) = L$, Eq.~(\ref{positivity-equation1})
	becomes, for all $n \geq 0$,
	\begin{equation}%
	\label{positivity-equation2}
	L_{n+1} \leq \frac{2^p 2^{2pn}}{4} = \frac{2^{2p(n+1)}}{2^{p+2}}
	\quad \mbox{iff} \quad -u_n \geq 0
	\, .
	\end{equation}

	We now modify the automaton
	$\mathcal{A}$ by lengthening every transition in
	$\mathcal{A}$ by a factor of
	$2p$; more precisely, for every transition $q \rightarrow
	q'$, create $2p-1$ fresh non-accepting states $r_1, \ldots,
	r_{2p-1}$ and replace $q \rightarrow q'$ by the sequence
	$q \rightarrow r_1 \rightarrow r_2 \rightarrow \ldots \rightarrow
	r_{2p-1} \rightarrow
	q'$, all labelled with the same letter as the original transition. The
	initial and accepting states otherwise remain unchanged. Let
	us call the resulting DFA $\mathcal{A}'$, with accepted language
	$L(\mathcal{A}') = L'$. In moving from $\mathcal{A}$ to
	$\mathcal{A}'$, the net effect has been to increase the length of
	every accepted word by a factor of $2p$; note also that if $m$ is not a
	multiple of $2p$, then $L'_m = 0$.

	Combining the above with Eq.~(\ref{positivity-equation2}), we conclude that
	the LRS $\langle -u_n \rangle_{n=0}^{\infty}$ is
	positive iff for all $m \geq 0$, $L'_m \leq \frac{2^m}{2^{p+2}}$,
	i.e., $2^{p+2}L'_m \leq 2^m$.

	Inflating the alphabet size of $\mathcal{A}'$ by a factor of $2^{p+2}$,
	we can easily manufacture a DFA $\mathcal{A}''$ with accepted language
	$L(\mathcal{A}'') = L''$ having the property that, for all $m \geq 0$,
	$L''_m = 2^{p+2}L'_m$. It therefore follows that the LRS $\langle -u_n \rangle_{n=0}^{\infty}$ is
	positive iff for all $m \geq 0$, $L''_m \leq 2^m$, which
	completes the reduction of Positivity  to EBP\@.

	Finally, since the sequence obtained from the number of distinct words of length $n$
	 accepted by a given automaton is an LRS~\cite{OuaknineW15}, we have that EBP also reduces to Positivity, and therefore that the two problems are indeed equivalent.
	\end{proof}

		\subsubsection{Reduction to binary alphabet}\label{sec:red01}

 Since the size of the intermediate alphabet plays a crucial role in the solution of TDP/PDP, one may ask whether it suffices to search for solutions for only the binary case.
 We show that the answer is positive for intermediate alphabet that is of size of power or $2$. Specifically we show by reduction that for every automatic TDP/PDP instance $I=(\Sigma_\In,\Sigma_\Barr,\Sigma_\Out, \Aut_R)$,
where $|\Sigma_\Barr|=2^m$ for some $m>0$, there is a TDP/PDP instance $I'= (\Sigma_\In,\{0,1\},\Sigma_\Out, \Aut_{R'})$ for some relation $R'$
such that $I$ has a solution if and only if $I'$ has a solution.

To show our reduction,  we  first need to reason over automatic decomposition in the regular expressions representation level. Therefore we  give below  the following lemma that characterizes a possible TDP/PDP solution as regular expressions.

%With an additional automaton gadget we can extend this result
%to $\Sigma_B$ of any size.

\begin{lem}\label{lem:char}
	A TDP/PDP for automatic relations instance $I=(\Sigma_\In,\Sigma_\Barr,\Sigma_\Out, \Aut_R)$ has a solution
	if and only if there is a regular expression $S\subseteq (\Sigma_\In\times\Sigma_\Barr\times \Sigma_\Out)^*$ such that the following happens:
	\begin{enumerate}
		\item (Only for the TDP variant) For every $(\vec{i},\vec{o})\in R$ there is $\vb\in\Sigma_\Barr^*$ such that $(\vec{i},\vb,\vec{o})\in L(S)$.
		\item (Only for the PDP variant) For every $\vec{i}\in Dom(R)$ there are $\vb\in\Sigma_\Barr^*$, and $\vec{o}\in\Sigma_\Out^*$ such that $(\vec{i},\vb,\vec{o})\in L(S)$.
		\item (For both variants) For every $(\vec{i},\vb,\vec{o}),(\vec{i}',\vb',\vec{o'})\in L(S)$, if $\vb=\vb'$ then $(\vec{i},\vec{o'}),(\vec{i}',\vec{o})\in R$.
		%Specifically, for every $(\va,\vb,\vc)\in S$ we have $(\va,\vc)\in R$.
	\end{enumerate}
\end{lem}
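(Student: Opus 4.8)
The plan is to prove the equivalence by treating the regular expression $S$ as a \emph{three-track join} that records exactly those triples $(\vec{i},\vb,\vec{o})$ for which $(\vec{i},\vb)$ is related by $R_1$ and $(\vb,\vec{o})$ is related by $R_2$. Because all three components are forced to have equal length (the relations are length-preserving), such triples live in $(\Sigma_\In\times\Sigma_\Barr\times\Sigma_\Out)^*$, and the passage between the pair $(R_1,R_2)$ and the single language $S$ is carried out by a synchronized product in one direction and by projection in the other, both of which preserve regularity.

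For the forward direction, given a solution $(R_1,R_2)$ described by DFAs $\Aut_{R_1},\Aut_{R_2}$, I would set $L(S)=\{(\vec{i},\vb,\vec{o})\mid (\vec{i},\vb)\in R_1 \text{ and } (\vb,\vec{o})\in R_2\}$, realized by the product automaton that runs $\Aut_{R_1}$ on the $(\In,\Barr)$-tracks and $\Aut_{R_2}$ on the $(\Barr,\Out)$-tracks while synchronizing on the shared $\Barr$-track; this makes $S$ regular. Conditions 1 and 2 then follow directly from the definition of composition together with the relevant TD/PD hypothesis ($R_1\circ R_2=R$ for TDP, $Dom(R_1\circ R_2)=Dom(R)$ for PDP): any $(\vec{i},\vec{o})\in R$ (resp.\ any $\vec{i}\in Dom(R)$) yields a bridging $\vb$ and hence a triple in $L(S)$. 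Condition 3 is immediate as well: if two triples share the $\Barr$-component $\vb$, then the ``crossed'' pairs $(\vec{i},\vec{o}')$ and $(\vec{i}',\vec{o})$ are each witnessed through $\vb$, hence lie in $R_1\circ R_2\subseteq R$.

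For the backward direction, given $S$, I would define $R_1$ and $R_2$ as the two projections of $L(S)$, namely $R_1=\{(\vec{i},\vb)\mid \exists\vec{o}.\,(\vec{i},\vb,\vec{o})\in L(S)\}$ and $R_2=\{(\vb,\vec{o})\mid \exists\vec{i}.\,(\vec{i},\vb,\vec{o})\in L(S)\}$; projection of a regular language is regular (guess the erased track, then determinize), and length-preservation is inherited from $S$. The containment $Img(R_1)\subseteq Dom(R_2)$ is read off directly. The crucial computation is $R_1\circ R_2\subseteq R$: a pair $(\vec{i},\vec{o})\in R_1\circ R_2$ arises from some $\vb$ with $(\vec{i},\vb)\in R_1$ and $(\vb,\vec{o})\in R_2$, which unfolds to two triples $(\vec{i},\vb,\vec{o}'')$ and $(\vec{i}'',\vb,\vec{o})$ in $L(S)$ sharing $\vb$; Condition 3 then delivers $(\vec{i},\vec{o})\in R$. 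The remaining equalities follow from Conditions 1/2: for TDP, Condition 1 gives $R\subseteq R_1\circ R_2$, hence $R_1\circ R_2=R$; for PDP, Condition 2 gives $Dom(R)\subseteq Dom(R_1\circ R_2)$, and combined with $R_1\circ R_2\subseteq R$ we obtain $Dom(R_1\circ R_2)=Dom(R)$.

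I expect the main subtlety to be Condition 3 together with the recomposition step of the backward direction. When $S$ is split into its two projections, the ``coupling'' between an input and its output within a single triple is lost: after composition, \emph{any} input routed through $\vb$ may be paired with \emph{any} output reachable from $\vb$. Condition 3 is precisely the constraint guaranteeing that all such cross pairs remain inside $R$, so the whole argument hinges on recognizing that this third condition is exactly the semantic content of $R_1\circ R_2\subseteq R$ once the two relations are forced to communicate only through the bounded intermediate alphabet.
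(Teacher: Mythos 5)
Your proposal is correct and follows essentially the same route as the paper's own proof: in the forward direction you build $S$ as the synchronized ($\Barr$-matched) product of $\Aut_{R_1}$ and $\Aut_{R_2}$, and in the backward direction you recover $R_1,R_2$ as the two projections of $L(S)$ (with determinization), verifying the TD/PD conditions exactly as the paper does, with Condition 3 supplying $R_1\circ R_2\subseteq R$. No gaps; your closing observation that Condition 3 is precisely the semantic content of $R_1\circ R_2\subseteq R$ after splitting is also the key point implicit in the paper's argument.
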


\begin{proof}
	Assume the instance $I$ has a TDP/PDP solution $(\Aut_{R_1},\Aut_{R_2})$. Let $\A'$ be the product automata of $\Aut_{R_1}$ and $\Aut_{R_2}$ followed by eliminating the edges $(i,b,b',o)$ where $b\not=b'$,  and then eliminating the first alphabet of $\Sigma_\Barr$. Therefore $\A'$ is over the alphabet $(\Sigma_\In\times\Sigma_\Barr\times\Sigma_\Out)$.
	Let $S$ be the regular expression that describes $\A'$.
	First verify (3): let $(\vec{i},\vb,\vec{o}),(\vec{i}',\vb,\vec{o'})\in L(S)$. Then  $(\vec{i},\vb),(\vec{i'},\vb) \in R_1$ and $(\vb, \vec{o}),(\vb,\vec{o'}) \in R_2$ which makes $(\vec{i},\vec{o'}),(\vec{i}',\vec{o})\in R_1\circ R_2$, hence both for TDP/PDP we have $(\vec{i},\vec{o'}),(\vec{i}',\vec{o})\in R$.
	For TDP we verify (1): Let $(\vec{i},\vec{o})\in R$, then there is $\vec{b}$ such that $(\vec{i},\vb)\in R_1$ and $(\vb, \vec{o})\in R_2$ therefore  $(\vec{i},\vb,\vec{o})\in L(S)$.
	For PDP we verify (2): let $\vec{i}\in Dom(R)$, then there is $\vec{o}$ such that $(\vec{i},\vec{o})\in R$. Then as before there is $\vec{b}$ such that $(\vec{i},\vb)\in R_1$ and $(\vb, \vec{o})\in R_2$ therefore  $(\vec{i},\vb,\vec{o})\in L(S)$.

	Next, assume $S$ is a regular expression over $(\Sigma_\In\times\Sigma_\Barr\times \Sigma_\Out)^*$ that meets conditions (1)--(3) for the TDP/PDP instance $I$. Let $\A'$ be the automaton that describes $S$. Let $\A_1$ be the automaton obtained from $A'$ by eliminating the $\Sigma_\Out$ letters and determinizing. Same let	$\A_2$ be the automaton obtained from $A'$ by eliminating the $\Sigma_\In$ letters and determinizing. Let $R_1,R_2$ be the regular relations that $\A_1,\A_2$ respectively describe. We see that $(R_1,R_2)$ meet the TD/PD conditions. % chktex 36
	Let $\vb\in Img(R_1)$. Then there is $\vec{i}$ such that $(\vec{i},\vb)$ is an accepting word for $\A_1$, therefore there exists $\vec{o}$ such that $(\vec{i},\vb,\vec{o})$ is accepting for $\A'$. Therefore $(\vb,\vec{o})$ is accepting for $\A_2$ so $\vb\in Dom(R_2)$.
	For TDP let $(\vec{i},\vec{o})\in R$. Then from condition (1) there is $\vb\in\Sigma_\Barr^*$ such that $(\vec{i},\vb,\vec{o})\in L(S)$, so $(\vec{i},\vb,\vec{o})$ is accepting for $\A'$, therefore $(\vec{i},\vb)\in R_1$, and $(\vb, \vec{o})\in R_2$ so $(\vec{i},\vec{o})\in R_1\circ R_2$.
	For PDP, let $\vec{i}\in Dom(R)$. Then from condition (2) there are $\vb, \vec{o}$ such that $(\vec{i},\vb,\vec{o})\in L(S)$, so $(\vec{i},\vb,\vec{o})$ is accepting for $\A'$, therefore $(\vec{i},\vb)\in R_1$, and $(\vb, \vec{o})\in R_2$ so $\vec{i}\in Dom(R_1\circ R_2)$.
	Finally for both TDP/PDP assume $(\vec{i},\vec{o})\in R_1\circ R_2$. Then there is $\vec{b}$ such that $(\vec{i},\vb)\in R_1$, and $(\vb, \vec{o})\in R_2$. Then there are $\vec{i'}, \vec{o'}$ such that $(\vec{i},\vb,\vec{o'})$ and $(\vec{i'},\vb,\vec{o})$ are accepting for $\A'$ and therefore in $L(S)$. Then from condition (3), $(\vec{i},\vec{o}) \in R$. \end{proof}

Using  Lemma~\ref{lem:char}, we prove the following theorem.

\begin{thm}\label{thm:sec01}
	Given an instance $I=(\Sigma_\In,\Sigma_\Barr,\Sigma_\Out, \Aut_R)$ where $|\Sigma_\Barr|=2^m$ for some $m>0$, there is an instance $I'= (\Sigma_\In,\{0,1\},\Sigma_\Out, \Aut_{R'})$, constructed from $I$, such that $I$ has a TDP/PDP solution if and only if $I'$ has a TDP/PDP solution.
%	There is a TDP/PDP solution to $I=(\Sigma_\In,\Sigma_\Barr,\Sigma_\Out, \Aut_R)$ where $|\Sigma_\Barr|=2^m$ for some $m>0$,  if and only if there is a TDP/PDP solution to  $I'= (\Sigma_\In,\{0,1\},\Sigma_\Out, \Aut_{R'})$.
\end{thm}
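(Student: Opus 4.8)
The plan is to route both directions of the equivalence through Lemma \ref{lem:char}, so that it suffices to transform a characterizing regular expression $S$ over $(\Sigma_\In\times\Sigma_\Barr\times\Sigma_\Out)^*$ into one over $(\Sigma_\In\times\{0,1\}\times\Sigma_\Out)^*$ and back. Fix a bijection $\beta:\Sigma_\Barr\to\{0,1\}^m$, which exists since $|\Sigma_\Barr|=2^m$. I would encode a single $\Sigma_\Barr$-letter by its $m$-bit block $\beta(\cdot)$; to keep the relations length-preserving I would simultaneously stretch the input and output tracks by the same factor $m$, setting $\mathrm{str}(a_1\cdots a_n)=a_1^m\cdots a_n^m$ for letters $a_j$ in $\Sigma_\In$ or $\Sigma_\Out$. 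The reduced relation is then $R'=\{(\mathrm{str}(\vec{i}),\mathrm{str}(\vec{o}))\mid(\vec{i},\vec{o})\in R\}$, which is again automatic: $\Aut_{R'}$ simulates $\Aut_R$ while using a counter modulo $m$ together with a remembered letter to verify that each $(\Sigma_\In\times\Sigma_\Out)$-symbol is repeated exactly $m$ times before advancing $\Aut_R$. Since $m=\log|\Sigma_\Barr|$, the automaton $\Aut_{R'}$ has size polynomial in $I$.

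For the forward direction, given $S$ satisfying the conditions of Lemma \ref{lem:char} for $I$, I would take $S'$ to be the image of $L(S)$ under the synchronous transduction that replaces each triple-letter $(a,b,c)$ by the length-$m$ block obtained by reading $(a^m,\beta(b),c^m)$ track by track. This $S'$ is regular, and verifying conditions (1)--(3) for $I'$ is routine: injectivity of $\beta$ transfers the hypothesis $\vec{B}=\vec{B}'$ in condition (3) back to $\vb=\vb'$, and $\mathrm{str}$ carries pairs of $R$ to pairs of $R'$ so that conditions (1) and (2) are inherited.

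The reverse direction is where the real content lies and where I expect the main obstacle: a binary solution reads the intermediate word one bit at a time and need not respect the block boundaries imposed by $\beta$, so a priori one cannot reconstruct a $\Sigma_\Barr$-word, and moreover a witnessing output need not look stretched. The key observation that dissolves both difficulties is that block-grouping of the intermediate track is purely syntactic -- partition the $mn$ bits of $\vec{B}$ into $n$ consecutive $m$-bit blocks and apply $\beta^{-1}$ -- and is always well defined, needing no synchronization. To also contract the outer tracks, I would apply condition (3) of Lemma \ref{lem:char} to a single triple $(\vec{I},\vec{B},\vec{O})\in L(S')$ taken twice: it forces $(\vec{I},\vec{O})\in R'$, and by construction every pair in $R'$ has both coordinates in the image of $\mathrm{str}$. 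Hence every triple in $L(S')$ has both outer tracks stretched, and the length-contracting transduction that groups $m$ symbols on each track, applies $\beta^{-1}$ on the middle track, and drops the repetitions on the outer tracks maps $L(S')$ to a regular $S$ over $(\Sigma_\In\times\Sigma_\Barr\times\Sigma_\Out)^*$.

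It then remains to check that this $S$ satisfies conditions (1)--(3) for $I$. Condition (3) uses that $\beta^{-1}$ on full $m$-bit blocks is a bijection, so $\vb=\vb'$ forces $\vec{B}=\vec{B}'$ and the conclusion is inherited from condition (3) for $S'$. Condition (1) (TDP) holds because $\mathrm{str}(\vec{i}),\mathrm{str}(\vec{o})\in R'$ whenever $(\vec{i},\vec{o})\in R$, and condition (2) (PDP) holds because $\mathrm{str}(\vec{i})\in Dom(R')$ whenever $\vec{i}\in Dom(R)$, the witnessing output returned by condition (2) for $S'$ being automatically stretched by the same self-application of condition (3). Applying Lemma \ref{lem:char} once more converts $S$ into DFAs $\Aut_{R_1},\Aut_{R_2}$ solving $I$, which completes the equivalence.
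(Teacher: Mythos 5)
Your proposal is correct and takes essentially the same route as the paper: the same stretched relation $R'=\{(\mathrm{str}(\vec{i}),\mathrm{str}(\vec{o}))\mid(\vec{i},\vec{o})\in R\}$, the same binary block-encoding bijection, and the same use of Lemma \ref{lem:char} together with the (inverse of the) stretching homomorphism to transfer a solution of $I'$ back to $I$, with the forward direction differing only cosmetically (you route it through Lemma \ref{lem:char}, while the paper transforms the solution DFAs directly by replacing edges with $m$-edge paths). If anything, your self-application of condition (3) to show that every triple in $L(S')$ has stretched outer tracks makes explicit a step the paper leaves implicit.
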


\begin{proof}
The idea behind the reduction is to encode every letter in $\Sigma_\Barr$ in binary, and use this
encoding for the construction of $I'$. $\Aut_{R'}$ is therefore constructed from $\Aut_R$ by replacing
%MYV: Makes no sense! We go from (i,o) to (i,o).
every edge labeled $(i,o)$ with a path of $m$ edges, each with the same label $(i,o)$.

%Given a TDP/PDP instance $I=(\Sigma_\In,\Sigma_\Barr,\Sigma_\Out, \Aut_R)$, where $|\Sigma_\Barr|=2^m$ for some $m>0$, we construct a TDP/PDP instance  $I'= (\Sigma_\In,\{0,1\},\Sigma_\Out, \Aut_{R'})$ where  $\Aut_{R'}$ is constructed from $\Aut_R$ by replacing every edge labeled $(i,o)$ with a path of $m$ edges each labeled $(i,o)$.

   	  Let $bin$ be a function in which $bin(b)$ is a binary encoding of the letter $b\in\Sigma_\Barr$ in $m$ bits, and let $bin_j(b)$ be the $j$'th bit in $bin(b)$. Since $\Sigma_\Barr$ is of size $2^m$, the function $bin$ is a bijection.
   	  Assume that $I$ has a solution $(\Aut_{R_1}$,$\Aut_{R_2})$.
   	  We first construct an automaton $\Aut_{R'_1}$ over alphabet $(\Sigma_\In\times \{0,1\})$ from $\Aut_{R_1}$ by replacing every edge labeled $(i,b)$ with an $m$-edges path with edge labels $(i,bin_0(b)),\cdots (i,bin_{m-1}(b))$, thus such a path describes the word $(i^m, bin(b))$ (where $i^m$ is the letter $i$ concatenated $m$ times).
   	  Same, we construct an automaton $\Aut_{R'_2}$ over the alphabet $(\{0,1\}\times\Sigma_\Out)$ from $\Aut_{R_2}$ by replacing every edge labeled $(b,o)$ with an $m$-edges path with edge labels that  describes the word $(bin(b),o^m)$.
   	  Then since $(R_1,R_2)$ solve TDP/PDP for the instance $I$, we have that $(R'_1,R'_2)$ solve TDP/PDP for the instance $I'$.

   	  The other side of the proof is more involved since suppose we assume $(\Aut_{R'_1}$, $\Aut_{R'_2})$ solve $I'$. Then we cannot guarantee that these automata have an ``$m$-path'' structure that can reverse the construction we have just described.
   	  For that, we define by induction an homomorphism $h:(\Sigma_\In\times\Sigma_\Barr\times\Sigma_\Out)^*\to ({\Sigma_\In}^m\times \{0,1\}^m\times{\Sigma_\Out}^m)^*$ as follows.
   	  For every letter $(i,b,o)$, we define $h((i,b,o))=(i^m,bin(b),o^m)$. For a word $\sigma\in (\Sigma_\In\times\Sigma_\Barr\times\Sigma_\Out)^*$ and a letter $x\in (\Sigma_\In\times\Sigma_\Barr\times\Sigma_\Out)$ we define $h(\sigma x)=h(\sigma)h(x)$. By definition, $h$ is an homomorphism, note that $h$ is also an injection.

   	  Now assume that there is a TDP/PDP solution to $I'= (\Sigma_\In,\{0,1\},\Sigma_\Out, \Aut_{R'})$, and recall that $\Aut_{R'}$ is obtained from $\Aut_R$ by replacing every letter $(i,o)$ with a word $(i^m,o^m)$. Let $S'$ be the regular expression obtained from Lemma~\ref{lem:char} w.r.t. $I'$, and let $L(S')$ be the regular language of $S'$. Since  every word in $L(S')$ is a concatenation of strings of the form $(i^m,t,o^m)$ for some $i\in\Sigma_\In$, $t\in\{0,1\}^m$ and $o\in\Sigma_\Out$, we have that $L(S')\subseteq ({\Sigma_\In}^m\times \{0,1\}^m\times{\Sigma_\Out}^m)^*$. Moreover, since $bin$ is a bijection we have that for every such $(i^m,t,o^m)$, $(i,bin^{-1}(t),o)$ is well defined and  $h(i,bin^{-1}(t),o) =(i^m,t,o^m)$,
   	  therefore the homomorphism $h$ is onto $L(S')$. Since the pre-image under homomorphisms of a regular language is a regular language as well (see~\cite{Kozen}), we have that there is a regular language $L(S)$ of a regular expression $S$ over $(\Sigma_\In\times\Sigma_\Barr\times\Sigma_\Out)^*$ such that $L(S)=h^{-1}(L(S'))$.

   	  Since $S'$ satisfies conditions  (1)--(3) of Lemma~\ref{lem:char} w.r.t the instance $I'$ and since $h$ is a bijection from $L(S)$ to $L(S')$, it is easy to check that $S$ satisfies the conditions (1)--(3) of Lemma~\ref{lem:char} w.r.t the instance $I$, which means that $I$ has a TDP/PDP solution. % chktex 36
\end{proof}

Theorem~\ref{thm:sec01} is stated only for an alphabet $\Sigma_\Barr$ whose size is some power of $2$. When this is not the case then the proof for Theorem~\ref{thm:sec01} does not hold. The reason is that if $m$ is such that $|\Sigma_\Barr|>2^m$, then some letters in $\Sigma_\Barr$ may not be represented via the translation function $bin$, thus the first part of the proof does not hold. If on the other hand $m$ is such that $|\Sigma_\Barr|<2^m$ then $bin^{-1}$ may ``add'' additional letters to $|\Sigma_\Barr|$, thus the second part of the proof does not hold.
Since we do not yet know the solution for this general case, we make the following conjecture.

\begin{conj}
	There is a reduction for TDP/PDP from an intermediate alphabet of arbitrary size to the binary alphabet.
\end{conj}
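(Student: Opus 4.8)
The plan is to generalise the construction behind Theorem~\ref{thm:sec01} by replacing the exact binary encoding with a padded one. Set $m = \lceil \log_2 |\Sigma_\Barr| \rceil$, so that $|\Sigma_\Barr| \le 2^m$, fix an injection $bin\colon \Sigma_\Barr \to \{0,1\}^m$, and call a block $t \in \{0,1\}^m$ \emph{valid} if $t \in bin(\Sigma_\Barr)$ and \emph{spurious} otherwise. Exactly as in Theorem~\ref{thm:sec01}, build $\Aut_{R'}$ from $\Aut_R$ by replacing every edge labelled $(i,o)$ with a path of $m$ edges labelled $(i,o)$, so that the inputs and outputs of $R'$ retain the block structure $i_1^m\cdots i_n^m$ and $o_1^m\cdots o_n^m$. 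The forward direction should then go through verbatim: given a solution $(\Aut_{R_1},\Aut_{R_2})$ of $I$, binarise each intermediate edge through $bin$; the homomorphism $h(i,b,o) = (i^m, bin(b), o^m)$ carries a witnessing $S$ for $I$ (from Lemma~\ref{lem:char}) to one for $I'$, since every intermediate block it produces is valid.

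First I would attempt the reverse direction unchanged and pinpoint where it breaks. Given a solution of $I'$, Lemma~\ref{lem:char} yields a regular expression $S'$ over $(\Sigma_\In \times \{0,1\} \times \Sigma_\Out)^*$, and because $R'$ forces the block structure on the $\Sigma_\In$- and $\Sigma_\Out$-components, every word of $L(S')$ has the form $(i_1^m\cdots i_n^m,\, T_1\cdots T_n,\, o_1^m\cdots o_n^m)$ with $T_j \in \{0,1\}^m$. When $|\Sigma_\Barr| = 2^m$ every $T_j$ is valid and $h$ is onto $L(S')$; in general some $T_j$ may be spurious, so $h^{-1}(L(S'))$ need not witness a solution of $I$. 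Restricting $S'$ to words using only valid blocks preserves the collision condition~(3) of Lemma~\ref{lem:char} (deleting words can only remove bad collisions) but may destroy the coverage conditions~(1)/(2), since a pair of $R$ could have been witnessed only through a spurious block.

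The obstacle is intrinsic, not an artefact of this encoding: with a binary intermediate alphabet the blown-up length $nm$ offers $2^{nm} = (2^m)^n$ distinct intermediate words, strictly more than the $|\Sigma_\Barr|^n$ available to $I$ whenever $|\Sigma_\Barr| < 2^m$. Thus $I'$ has strictly larger ``bandwidth'', and one must rule out that this surplus capacity turns an unsolvable $I$ into a solvable $I'$. My approach would be to police validity through the only channel $R'$ can see, namely the input--output correlation: enlarge $\Sigma_\In$ and $\Sigma_\Out$ with fresh guard components and interleave, alongside each intermediate block $T_j$, a short verification exchange on these components that is consistent with $R'$ exactly when $T_j$ is valid, thereby forcing every admissible solution of $I'$ into the image of $h$ and restoring the reverse direction. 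An alternative plan is to shrink the alphabet one letter at a time, simulating a $\Sigma_\Barr$ of size $k$ by one of size $k-1$ at the cost of a length blow-up, and iterate down to the binary base case.

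I expect this guarding step to be the main difficulty, and it is precisely why the statement is only conjectured. Since condition~(3) of Lemma~\ref{lem:char} refers to the intermediate alphabet solely through equalities of intermediate words, while $\Aut_{R'}$ cannot constrain those words directly, any validity check must be routed through $R'$ and made to work for an \emph{arbitrary} regular $R$, not merely the structured equality relation $R^=_L$. Moreover, the capacity accounting suggests that matching the growth rate $|\Sigma_\Barr|^n$ with binary colours calls for a length blow-up of about $n\log_2|\Sigma_\Barr|$, which is non-integral and hence cannot be produced by a uniform local gadget; in view of the EBP--Positivity equivalence, faithfully tracking such a growth rate with a regular, length-preserving encoding is exactly the kind of delicate arithmetic control that makes the general reduction hard to establish.
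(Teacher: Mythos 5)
You should first note that the statement you were asked to prove is not a theorem of the paper: it is stated there as an open conjecture, and the paper offers no proof of it. The paper's entire treatment consists of Theorem \ref{thm:sec01} (the reduction when $|\Sigma_\Barr|=2^m$) together with the remark that its construction breaks when $|\Sigma_\Barr|$ is not a power of $2$. Judged against that, your analysis is sound and in fact tracks the paper's own discussion closely: choosing $m=\lceil\log_2|\Sigma_\Barr|\rceil$ with an injective $bin$ does make the forward direction (a solution of $I$ yields a solution of $I'$) go through, and you correctly locate the failure in the reverse direction, where spurious blocks in $\{0,1\}^m\setminus bin(\Sigma_\Barr)$ give $I'$ strictly more intermediate bandwidth, so a solution of $I'$ need not pull back through $h^{-1}$; this is precisely the paper's observation that $bin^{-1}$ may ``add'' letters to $\Sigma_\Barr$. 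Your finer points are also correct: restricting $S'$ to valid blocks preserves the universally quantified condition (3) of Lemma \ref{lem:char} but can destroy the existential coverage conditions (1)/(2), and the counting observation (no integer length blow-up $c$ satisfies $2^{cn}=|\Sigma_\Barr|^n$ unless $|\Sigma_\Barr|$ is a power of two) is genuine evidence, via the equality-relation/EBP instances, that no uniform local re-encoding can work.

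What your proposal does not do --- and, to your credit, does not claim to do --- is close the gap. The ``guard component'' gadget and the iterative reduction from alphabet size $k$ to $k-1$ are left entirely unconstructed, and each would itself require proving that every admissible solution of $I'$ is forced into the image of $h$ for an \emph{arbitrary} automatic relation $R$, which is exactly the hard part; it is not even clear that such a forcing gadget can be expressed by a length-preserving automatic relation at all. So the net effect of your text is a correct diagnosis of why the statement is open, coinciding with the paper's own remarks, rather than a resolution of it: it should be read as a discussion of the conjecture, not as a proof of the statement.
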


Corollary~\ref{cor:unary} shows that TDP/PDP on automatic relations with unary intermediate alphabet is solvable.

		\subsection{Strategic automatic PDP}\label{sec:AutFDP}

A specific version of PDP on automatic relations, that captures essential concepts in synthesis~\cite{KV01,FTV16,PnueliR90}, is when we require the solution to consist of strategies. For that, we define \textit{Strategic automatic PDP}, or Strategic PDP in short, as PDP on automatic relations in which the required relations $R_1$ and $R_2$ are functions (defined in Section~\ref{sec:perlim}) that can be represented by finite-state transducers $T_1$ and $T_2$, respectively.

Strategic PDP can be viewed as a game of incomplete information. Since the information ``flows'' in one direction, the key to proving decidability for this problem is to view the problem as a one-way chain communication of distributed synthesis from~\cite{KV01} to synthesize the required transducers. The construction here is rather technical, and we assume familiarity with the techniques developed in~\cite{KV00a,KV01}.

\begin{thm}\label{automaticFDP}
Strategic PDP is in EXPTIME\@.
\end{thm}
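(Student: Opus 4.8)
The plan is to reduce Strategic PDP to the distributed-synthesis problem of Kupferman and Vardi~\cite{KV01}, exploiting the fact that the information in a sequential decomposition flows in a single direction: the first transducer $T_1$ reads the input word $\vec{i}$ and emits an intermediate word over $\Sigma_\Barr$, while the second transducer $T_2$ reads only that intermediate word and emits the output word $\vec{o}$. This one-way dependency is exactly the situation handled by distributed synthesis along a \emph{chain} architecture, for which \cite{KV01} establishes decidability in elementary time. Crucially, $T_2$ has strictly less information than $T_1$ (it sees only the intermediate channel, not the original input), so the problem is genuinely one of incomplete information and cannot be collapsed to ordinary synthesis; but because the architecture is a linear chain rather than a general communication graph, the nonelementary blow-up of general distributed synthesis is avoided.

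First I would set up the specification. Since the required $R_1,R_2$ are functions realized by transducers, I can phrase the PD condition as a correctness requirement on the joint behaviour of $T_1$ and $T_2$: for every input word $\vec{i}\in Dom(R)$, letting $\vec{b}=T_1(\vec{i})$ and $\vec{o}=T_2(\vec{b})$, the pair $(\vec{i},\vec{o})$ must lie in $R$ (this is the containment $R_1\circ R_2\subseteq R$), and every input of $Dom(R)$ must receive some output (this is $Dom(R_1\circ R_2)=Dom(R)$), with the length-preservation and $Img(R_1)\subseteq Dom(R_2)$ constraints folded in. Using $\Aut_R$, I would build a specification automaton over the combined alphabet $(\Sigma_\In\times\Sigma_\Barr\times\Sigma_\Out)$ that accepts precisely those runs in which the emitted $(\vec{i},\vec{o})$ projection is accepted by $\Aut_R$; this is a product/projection construction of size polynomial in $|\Aut_R|$. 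The length-preserving, interleaved convention from Section~\ref{sec:interleave} ensures that the three streams advance in lockstep, which is what lets me treat them as a single synchronous specification.

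Next I would cast the two transducers as the two processes of a chain architecture: process~$1$ has visible input $\Sigma_\In$ and output $\Sigma_\Barr$; process~$2$ has visible input $\Sigma_\Barr$ and output $\Sigma_\Out$, and critically does \emph{not} observe $\Sigma_\In$. I would then invoke the synthesis procedure of \cite{KV01} for this architecture against the specification automaton just constructed. For a one-way chain of two processes the method proceeds by an automata-theoretic ``widening'' of the specification into a tree automaton over the observable inputs of the less-informed process, followed by a nonemptiness test; each such step incurs a single exponential blow-up in the state space, so the overall running time is exponential in $|\Aut_R|$ and in the sizes of the alphabets, giving the EXPTIME bound. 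The output of the nonemptiness test is a pair of finite-state strategies, which are exactly the transducers $T_1,T_2$ sought.

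The hard part will be the faithful encoding of the incomplete-information structure: I must ensure that $T_2$'s strategy is forced to depend only on the intermediate word and not on the original input, and simultaneously that the existential choice of the intermediate ``bandwidth'' $\Sigma_\Barr$ correctly implements the composition semantics (in particular that $Img(R_1)\subseteq Dom(R_2)$ and that no spurious input--output pairs outside $R$ are produced). Getting the information partition right in the tree-automaton widening of \cite{KV01}, and verifying that the chain has depth two so that only a single exponential is incurred rather than the tower that arises for general architectures, is where the real technical care is needed; the remaining steps are routine product and projection constructions on automata.
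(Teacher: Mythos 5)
Your proposal takes essentially the same route as the paper: the paper also views Strategic PDP as incomplete-information synthesis over a one-way two-process chain in the sense of \cite{KV01}, builds a tree automaton from $\Aut_R$ over $(\Sigma_\In\times\Sigma_\Barr\times\Sigma_\Out)$-labeled behaviours, performs the ``widening'' onto $\Sigma_\Out$-labeled $\Sigma_\Barr$-trees to force $T_2$ to depend only on the intermediate word, extracts a witness transducer $T_2$ from an exponential-time nonemptiness test, and then plugs $T_2$ back in to synthesize $T_1$, with each stage costing at most one exponential. The only difference is that the paper carries out the \cite{KV01} constructions explicitly (the automata $\A_R^t$, $\A_R^n$, $\A_2$, $\A_1$) rather than invoking the chain-architecture result as a black box, which is exactly the technical care your last paragraph anticipates.
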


\begin{proof}
The proof for Theorem~\ref{automaticFDP} relies heavily on definitions and terminology from~\cite{KV01}. To ease the reading of the proof we bring the definitions from~\cite{KV01} in Appendix~\ref{sec:strategicApp}.

We first describe the intuition of the proof by providing a high-level outline. Note that the required transducer $T_1$ is a finite-state realization of a function $f_1: \Sigma_\In^+\rightarrow \Sigma_\Barr$, and the required transducer $T_2$ is a finite-state realization of a function $f_2: \Sigma_\Barr^+\rightarrow \Sigma_\Out$. We first consider trees of the form $\tau: \Sigma_\In^*\rightarrow \Sigma_\Barr\times\Sigma_\Out$.
We intend such a tree to represent simultaneously both $f_1$ and $f_2$; this requires that this tree satisfies additional constraints, discussed below. (Note that the label of the root is ignored here, as the empty word is not in the domain of $f_1$ and $f_2$.). A branch of this tree can be viewed as a word  $w\in (\Sigma_\In\times\Sigma_\Barr\times\Sigma_\Out)^\omega$. By running the DFA $\A_R$ for the automatic relation $R$ on $w$, ignoring $\Sigma_\Barr$, we can check that every prefix of $w$, viewed a finite word on $\Sigma_\In\times\Sigma_\Out$, is consistent with $R$. That is, a word in $(\Sigma_\In\times\Sigma_\Out)^+$
can viewed as a pair of equi-length words in $\Sigma^+_\In$ and $\Sigma^+_\Out$, which has to be in $R$. Thus, we can construct a deterministic automaton $\A_R^t$ on infinite trees that checks that all prefixes of all branches in $\tau$ are consistent with $R$.

Note, however, that in $\tau$ the $\Sigma_\Out$ values depend not on the $\Sigma_\Barr$ values but on the $\Sigma_\In$ values, while in $f_2$ the domain is $\Sigma_\Barr^+$. So it is possible that the $\Sigma_\Out$ values do not depend functionally on the $\Sigma_\Barr$ values. So next we consider a tree $\tau_2: \Sigma_\Barr^* \rightarrow \Sigma_\Out$. We can now simulate $\A_R^t$ on $\tau_2$ (cf.,~\cite{KV01}). The idea is to match each branch of $\tau$ with a branch of $\tau_2$ according to the $\Barr$-values. This means that we have several branches of the run of $\A_R^t$ running on one branch of $\tau_2$. We thus obtain an alternating tree automaton $\A_2$, whose size is polynomial in the size of $\A_R$. We can now check in exponential time non-emptiness of $\A_2$, to see if a function $f_2$ exists. If $\A_2$ is non-empty, then we can obtain a witness transducer $T_2$ whose size is exponential in $\A_R$.

Finally, we consider a tree $\tau_1: \Sigma_\In^*\rightarrow \Sigma_\Barr$ that represents $f_1$. We simulate both $\A_R$ and $T_2$ on each branch of $\tau_1$, where $T_2$ generates the $\Out$-values that were present in $\tau$ but not in $\tau_1$. Using these values $\A_R$ checks that each prefix is consistent with $R$. We have obtained a deterministic tree automaton $\A_1$, whose size is exponential in $\A_R$. We can now check non-emptiness of $\A_1$ in time that is linear in the size of $\A_1$, and obtain a witness transducer $T_1$. Thus, we can solve Strategic PDP for automatic relations in exponential time.
%\end{proof}

\vspace{2mm}
We now describe the proof of Theorem~\ref{automaticFDP} in detail. Recall that $R$ is the given automatic relation for the Strategic PDP problem. Let the DFA for $R$ be $\A_R=(\Sigma_\In\times\Sigma_\Out,Q,q_0,\delta,F)$; assume without loss of generality that $q_0\in F$ (since we do not care about the empty input word). We use $\A_R$ to construct  a deterministic B\"uchi tree automaton $\A_R^t$, over infinite trees  $\tau:\Sigma_\In^*\rightarrow\Sigma_\Barr\times\Sigma_\Out$, that checks that all prefixes of all branches in an input tree $\tau$ are consistent with $R$.
More precisely, let $a_1,\ldots,a_m$ be a nonempty sequence of input symbols from $\Sigma_\In$, and let $\tau(a_1 \ldots a_i)=(b_i,c_i)$ for $1\leq i \leq m$. Then we want the word pair $(a_1\cdots a_i, c_1 \cdots c_i)$ to be in $R$.

%We assume wlog that $\Aut_R$ has a non accepting state $q_{rej}$ and that every state $q\not=q_{rej}$ has a path that leads to an accepting state. $q_{rej}$ itself is a sink: all outgoing transitions from $q_{rej}$ are self transitions  \todo{do we need $q_{rej}$?}{DF}.

We define this tree automaton as follows:
%$\A_R^t=(\Sigma_\In,\Sigma_\In\times \Sigma_\Barr\times\Sigma_\Out,Q,q_0,\delta_0,\alpha)$ is a $(\Sigma_\In\times\Sigma_\Barr\times\Sigma_\Out)$-labeled $\Sigma_\In$-tree deterministic automaton, with in $\alpha = Q$ as the acceptance condition.
$\A_R^t=(\Sigma_\In,\Sigma_\Barr\times\Sigma_\Out,Q',q_0,\delta^t,\alpha)$ is a $(\Sigma_\Barr\times\Sigma_\Out)$-labeled $\Sigma_\In$-tree deterministic automaton, with $Q'=\{q_0\}\cup (\Sigma_\In \times Q)$, and $\alpha = Q'$ as the acceptance condition. Intuitively, this tree automaton $\A_R^n$ carries with it the last letter in $\Sigma_\In$, to enable it to simulate $\A_R$.
%[[DF: need to verify the acceptance conditions through out the proof, and also in Strategic Hints $T_1,T_2$]]
For the transition function, we have as follows:
%\begin{equation}
%\delta_0(q,(i,b,o))=\bigwedge_{i'\in\Sigma_\In} %(i',\delta(q,(i,o)))
%\end{equation}
\begin{itemize}
\item
By slight abuse of notation, we define
$\delta^t(q_0)= \bigwedge_{i\in\Sigma_\In} \langle i,(i,q_0)\rangle$ (i.e., the successor state in direction $i$ is the state $(i,q_0)$).
(Note that in $\delta^t(q_0)$ there is no input letter, which means that we are ignoring the label of the root.)
\item
$\delta^t((i,q),(b,o))=\bigwedge_{i'\in\Sigma_\In} \langle i',(i',\delta(q,(i,o)))\rangle$, if $q\in F$, and $\delta^t((i,q),(b,o))=false$, if $q\not\in F$.
\end{itemize}
(The notation $\langle i',(i',\delta(q,(i,o)))\rangle$ means that the state $(i',\delta(q,(i,o)))$ is sent in direction $i'$.) Intuitively, $\A_R^t$ is a deterministic tree automaton that checks that all prefixes of all branches  of an input tree $\tau$ are consistent with $R$, by simulating $\A_R$ along all branches of $\tau$.
While the syntax used here is that of alternating automata, $\A_R^t$ send at most one successor state in each tree direction, so it is a deterministic automaton (Note that there is no disjunction in the definition of $\delta^t$, so there is no nondeterminism here.).
Note also that the size of $A_R^t$ is quadratic in the size of $A_R$.
If the automaton reach a state $q\not\in F$ then it has discovered an inconsistency with $R$, so it gets stuck and cannot accept the input tree.

Rather than use $A_R^t$, directly, it is more convenient to
convert it into a nondeterministic tree automaton $\A^n_R$ that guesses a tree $\Sigma_\Barr\times\Sigma_\Out$, and checks that it is accepted by $\A_R^t$. Formally,  $\A_R^n=(\Sigma_\In,Q,q_0,\delta^n,\alpha)$ is a nondeterministic automaton on unlabeled $\Sigma_\In$-trees\footnote{Formally, an unlabeled tree is the unique mapping from $\Sigma^*$ to $\{\bot\}$.}
with $\alpha = Q$ as the acceptance condition. The transition function is:
\begin{enumerate}
\item
$\delta^n(q_0)= \bigwedge_{i\in\Sigma_\In}
\bigvee_{b\in\Sigma_\Barr}\bigvee_{o\in\Sigma_\Out}
\langle i,\delta(q_0,(i,o))\rangle$
\item
$\delta^n(q)= \bigwedge_{i\in\Sigma_\In}
\bigvee_{b\in\Sigma_\Barr}\bigvee_{o\in\Sigma_\Out}
\langle i,\delta(q,(i,o))\rangle$, for $q\not=q_0$ if $q\in F$
\item
$\delta^n(q)= false$ for $q\not=q_0$ if $q\not\in F$.
\end{enumerate}
(Items (1) and (2) are identical, but we assumed $q_0\not\in F$.)
Note that an accepting run of $\A^n_R$ induces a
$(\Sigma_\Barr\times\Sigma_\Out)$-labeling of the $\Sigma_\In$-tree, due to the nodeterministic guesses in the transitions.

For a function $h:\Sigma_\In\rightarrow\Sigma_\Barr\times\Sigma_\Out$, let $h(i)_1, h(i)_2$ denote the first and second components of $h(i)$ respectively.
Then note that we can rewrite the first two clauses of the transition function as follows, where $\eta$ ranges over functions from $\Sigma_\In$ to  $\Sigma_\Barr\times\Sigma_\Out$.% and $\eta(i)_2$ denotes the second component of $\eta(i)$:
\begin{enumerate}
\item
$\delta^n(q_0)=\bigvee_{\eta:\Sigma_\In\rightarrow (\Sigma_\Barr\times\Sigma_\Out)}
\bigwedge_{i\in\Sigma_\In}\langle i,\delta(q_0,(i,\eta(i)_2))\rangle$
\item
$\delta^n(q)=\bigvee_{\eta:\Sigma_\In\rightarrow (\Sigma_\Barr\times\Sigma_\Out)}
\bigwedge_{i\in\Sigma_\In}\langle i,\delta(q,(i,\eta(i)_2)\rangle$, for $q\not=q_0$.
\end{enumerate}

%From $\A_R^t$ and Theorem 4.2 from~\cite{KV01}, we obtain an automaton ${\A'}_R^t=cover(\A_R^t)$ that accepts an $(\Sigma_\Barr\times\Sigma_\Out)$-label $\Sigma_\In$-tree $\tau$ iff $\A_R^t$ accepts $xray(\tau)$, see~\cite{KV01} for more details.

We next use $\A_R^n$ to construct the Strategic PDP solution transducers $T_1,T_2$ that describe trees $\tau_1,\tau_2$ respectively. We first construct $T_2$ by obtaining an alternating tree automaton $\A_2$. Every tree in $L(\A_2)$ is a partial solution for Strategic PDP\@.

We obtain $\A_2$ by simulating $A^n_R$ on $\Sigma_\Out$-labeled $\Sigma_\Barr$ trees, using an alternating automaton $\A_2=(\Sigma_\Barr,\Sigma_\Out,Q',q_0,\delta_2,\alpha_2)$, where $Q'=\{q_0\}\cup (Q\times\Sigma_\Out))$,
$\alpha_2=Q'$, and the transition function is defined as follows:
\begin{itemize}
\item
$\delta^2(q_0)=\
\bigvee_{\eta:\Sigma_\In\rightarrow (\Sigma_\Barr\times\Sigma_\Out)}
\bigwedge_{b\in\Sigma_\Barr}
\bigwedge_{i\in\Sigma_\In~\mbox{and}~b=\eta(i)_1}
\langle b, (\delta(q_0,(i,\eta(i)_2)),\eta(i)_2)\rangle$
\item
$\delta^2((q,o'),o)=
\bigvee_{\eta:\Sigma_\In\rightarrow (\Sigma_\Barr\times\Sigma_\Out)}
\bigwedge_{b\in\Sigma_\Barr}
\bigwedge_{i\in\Sigma_\In~\mbox{and}~b=\eta(i)_1}
\langle b, (\delta(q,(i,\eta(i)_2)),\eta(i)_2)\rangle$
if $q\in F$ and $o=o'$,
\item
$\delta^2((q,o'),o)=false$ if $q\not\in F$ or $o'\not=o'$.
\end{itemize}
In this simulation, every $(\Sigma_\Barr\times\Sigma_\Out)$-labeled branch of $\tau: \Sigma^+_\In\rightarrow \Sigma_\Barr\times\Sigma_\Out$ maps to a branch of $\tau_2: \Sigma^+_\Barr\rightarrow \Sigma_\Out$ with the same $(\Sigma_\Barr\times\Sigma_\Out)$-labeling.
Note, however, that in $\tau$ the $\Sigma_\Barr\times\Sigma_\Out$ letter pairs are all node labels, while in $\tau_2$ the $\Sigma_\Barr\times\Sigma_\Out$ letter pairs come from tree directions ($\Sigma_\Barr$) and node labels ($\Sigma_\Out$).

As shown in~\cite{KV01}, if $\tau_2$ is accepted by $A_2$, then an accepting run of $A_2$ on $\tau_2$ yields a tree $\tau$ where the $\Sigma_\Out$ labeling depends only on the $\Sigma_\Barr$ labeling.  We thus have to check that $\A_2$ is nonempty.  For this we use standard techniques in tree-automata theory~\cite{GTW02}, according to which nonemptiness of $\A_2$ can be checked in exponential time. A subtle point here is that the transition function of $\A_2$ is of exponential size, due to the disjunction $\bigvee_{\eta:\Sigma_\In\rightarrow (\Sigma_\Barr\times\Sigma_\Out)}$. But in the nonemptiness algorithm this disjunction of exponential size turns into an alphabet of exponential size, and this affects the computational complexity multiplicatively, so the algorithm is still exponential with respect to $\A_R$.

If the nonemptiness algorithm returns ``nonempty'', it returns also as a witness a finite-state transducer that generates a tree accepted by $\A_2$~\cite{GTW02}. The transducer is $T_2=(\Sigma_\Barr,\Sigma_\Out,S,s_0,\beta)$, where $S$ is a finite state set, $s_0\in S$ is the initial state, and $\beta:S\times\Sigma_\In\rightarrow S\times \Sigma_\Out$
is the transition function that yields, for a state $s\in S$ and input letter $b\in\Sigma_\Barr$, a pair $\beta(s,b)$ of a successor state and an output letter. The state set $S$ is, in general, of size that is exponential in $|Q|$.

We can now combine $A_R^n$ with $T_2$ to construct a nondeterministic tree automaton $\A_1$ for $\tau_1$, which is now an unlabeled $\Sigma_\In$-tree. Formally, $\A_1=(\Sigma_\In,Q\times S,(q_0,s_0),\delta_1,\alpha_1)$, where $\alpha_1=(Q\times S)$, and the transition function $\delta_2$ is obtained by combining $\delta^n$ with $\beta$:
\begin{enumerate}
\item
$\delta_2((q_0,s_0))=\bigvee_{\zeta:\Sigma_\In\rightarrow \Sigma_\Barr} \bigwedge_{i\in\Sigma_\In}\langle i,(\delta(q_0,(i,\beta(s_0,\zeta(i))_2),\beta(s_0,\zeta(i))_1))\rangle$
\item
$\delta_2((q,s))=\bigvee_{\zeta:\Sigma_\In\rightarrow \Sigma_\Barr} \bigwedge_{i\in\Sigma_\In}\langle i,(\delta(q,(i,\beta(s,\zeta(i))_2),\beta(s,\zeta(i))_1))\rangle$
\end{enumerate}
Nonemptiness of $\A_1$ can be checked in time that is linear in the size of $\A_1$~\cite{GTW02}, so it is exponential in the size of $\A_R$. Again, the nonemptiness algorithm  returns a finite-state transducer $T_1$ that induces a strategy mapping $\Sigma_\In^+$ to $\Sigma_\Barr$.

This concludes the proof that Strategic PDP can be solved in EXPTIME\@.	\end{proof}

		\section{Decomposition with a hint}\label{sec:hint}

Our results so far indicate that fully automated decomposition is a hard problem. Can we ameliorate
this difficulty by including a ``human in the loop''?  Indeed, in some decomposition scenarios,
a part of a decomposition is already given, and the challenge is to find the complementary component.
This can be thought of as a partial solution that is offered by a human intuition, e.g.\ a domain expert.
In the context of our framework we have that a candidate to either $R_1$ or $R_2$ is  given
(in the relevant description formalism) as a ``hint''. The question then is whether, given one possible component,
a complementary component indeed exists, and can be constructed, such that both relations together meet the TD/PD  conditions.

We discuss first TDP/PDP problems with a hint $R_1$ followed by a discussion with similar arguments for TDP/PDP problems with hint $R_2$.

\subsection{TDP/PDP with hint \texorpdfstring{$R_1$}{R1}}

TDP/PDP problem with a hint $R_1$ is formally defined as follows.

\begin{defi}\label{sec:defdef}
In the TDP/PDP problem with a hint $R_1$ we are given input, output and intermediate domain $\In,\Out,\Barr$,
relations $R\subseteq \In\times \Out$ and $R_1\subseteq \In\times \Barr$. The goal is to find whether there
is a relation $R_2\subseteq\Barr\times\Out$ such that $(R_1,R_2)$ meet the TD/PD conditions.
\end{defi}

The exact nature of the domains and relations varies as before according to the problem setting (explicit, etc.).
As we see, such a hint as a partial solution relaxes the computational difficulty of TDP/PDP,
shown in previous sections, considerably. To that end, we show the following \textit{maximum property}
that is relevant for all TDP/PDP settings.  Given a TDP/PDP instance with a hint $R_1$, define the relation
$R'_2\subseteq (\Barr\times\Out)$ to be
$R'_2 =\{(b,o) \mid \forall i\in \In( (i,b)\in R_1 \rightarrow (i,o)\in R)\}$. Note that $Dom(R'_2)$ can strictly contain $Img(R_1)$.

\begin{lem}\label{lem:MaxRel}
Every solution for TDP/PDP with a hint $R_1$ is contained in $R'_2$ and if there exists such a solution then $R'_2$ is a solution as well.
\end{lem}
\begin{proof}
We prove for TDP, we skip the proof for PDP that is almost identical.
Let $R_2$ be a solution to TDP with a hint $R_1$.
We first see that $R_2\subseteq R'_2$.
Let $(b,o)\in R_2$. Suppose there exists $i$ such that $(i,b)\in R_1$ and $(i,o)\not\in R$. Then we have that $R_1\circ R_2 \not\subseteq R$ which means that $R_2$ is not a solution, a contradiction. Therefore we have that  for all $i$,  if $(i,b)\in R_1$ then $(i,o)\in R$, hence $(b,o)\in R'_2$.
To see that $R'_2$ is a solution, first see that since $R_2$ is a solution contained in $R'_2$ then $Dom(R_1\circ R'_2) = Dom(R)$ and $Img(R_1)\subseteq Dom(R'_2)$.
Let $(i,o)\in R_1\circ R'_2$. Then there is a $b$ such that $(i,b)\in R_1$ and $(b,o)\in R'_2$, which means that by the definition of $R'_2$ and since $(i,b)\in R_1$, it must be that $(i,o)\in R$.
Finally,
assume that $(i,o)\in R$. Then there is $b$ such that $(i,b)\in R_1$, and $(b,o)\in R''_2$ for some solution $R''_2$ contained in  $R'_2$. Therefore $(b,o)\in R'_2$ so $(i,o)\in R_1\circ R'_2$.
\end{proof}

From Lemma~\ref{lem:MaxRel} we get a simple method to solve TDP/PDP with a hint $R_1$ for the explicit settings as follows: Construct $R'_2$ and check that $(R_1,R'_2)$
meet the TD/PD conditions. If $R'_2$ meets these conditions then $(R_1,R'_2)$ is a solution for TD/PD\@. Otherwise, there is no solution with $R_1$ as a hint.
From this observation we get:

\begin{thm}
TDP/PDP with a hint $R_1$ for explicit relations are in PTIME\@.
\end{thm}

The definition of $R'_2$ can also solve the symbolic setting with a hint. Thus the following result allows the use of QBF solvers for finding a solution for $R_1$.

\begin{thm}\label{thm:hintsucc}
TDP/PDP with a hint $C_{R_1}$ for symbolic relations are in $\Pi^P_3$.
\end{thm}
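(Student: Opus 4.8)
The plan is to exploit the \emph{maximum property} of Lemma~\ref{lem:MaxRel}: a solution with hint $R_1$ exists if and only if the single canonical pair $(R_1,R'_2)$ itself meets the TD/PD condition, where $R'_2=\{(b,o)\mid \forall i\in\In\,((i,b)\in R_1\rightarrow (i,o)\in R)\}$. This is the decisive gain, since it replaces the second-order existential ``$\exists R_2\subseteq\Barr\times\Out$'' over an exponential-size object by a \emph{fixed}, implicitly defined relation. The decision problem thus reduces to verifying the TD/PD condition for one pair, and the whole task becomes writing that verification as a bounded-alternation quantified Boolean predicate over the circuits $C_R$ and $C_{R_1}$. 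Here $R'_2$ need never be built as a circuit: membership $(b,o)\in R'_2$ is itself a \emph{universally} quantified predicate $\forall i'\in\In$ whose matrix only evaluates $C_{R_1}$ and $C_R$ at a single point, hence is polynomial-time checkable.

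Next I would record that, directly from the definition of $R'_2$, the inclusion $R_1\circ R'_2\subseteq R$ holds \emph{unconditionally}: if $(i,b)\in R_1$ and $(b,o)\in R'_2$, then instantiating the defining universal quantifier of $R'_2$ at the point $i'=i$ forces $(i,o)\in R$. This removes one half of the equality/containment check entirely. Consequently, for TDP it remains to test $Img(R_1)\subseteq Dom(R'_2)$ together with $R\subseteq R_1\circ R'_2$, while for PDP it remains to test $Img(R_1)\subseteq Dom(R'_2)$ together with $Dom(R)\subseteq Dom(R_1\circ R'_2)$ (the reverse domain inclusion follows from $R_1\circ R'_2\subseteq R$, see Claim~\ref{clm:premClm}). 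I would write each of these as a quantified formula, always substituting the defining $\forall i'$ for every occurrence of $(b,o)\in R'_2$; for instance $R\subseteq R_1\circ R'_2$ becomes $\forall i\,\forall o\,\big((i,o)\in R\rightarrow \exists b\,((i,b)\in R_1\land \forall i'\,((i',b)\in R_1\rightarrow (i',o)\in R))\big)$.

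I would then prenex and count alternations. Pushing each antecedent onto the outer universal block, and using that a subformula not mentioning a bound variable lets that variable's quantifier move past it, each condition collapses to the prefix $\forall\,\exists\,\forall$: the condition $Img(R_1)\subseteq Dom(R'_2)$ prenexes to $\forall b\,\forall i\,\exists o\,\forall i'\,[\cdots]$, the TDP surjectivity condition to $\forall i\,\forall o\,\exists b\,\forall i'\,[\cdots]$, and the PDP domain condition to $\forall i\,\forall o\,\exists o'\,\exists b\,\forall i'\,[\cdots]$. Every matrix is a polynomial-time circuit evaluation, so each condition is a $\Pi^P_3$ predicate. Taking the conjunction of the two relevant conditions on disjoint copies of the quantified variables, and merging like quantifier blocks, keeps the prefix at $\forall\exists\forall$, so the full test lies in $\Pi^P_3$, establishing the theorem for both TDP and PDP.

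The main obstacle is precisely the alternation bookkeeping introduced by the \emph{implicit} relation $R'_2$, each of whose memberships hides a universal quantifier $\forall i'$ nested inside the existential composition quantifier $\exists b$. The delicate point is to confirm that this $\exists b\,\forall i'$ fragment occurs as the legitimate tail of a $\forall\exists\forall$ prefix rather than spawning a fourth alternation, and that pushing the various implications' antecedents outward genuinely merges the leading universals. What makes the level-$3$ bound attainable at all is the reduction of the first paragraph: without the maximum property we would face an additional leading \emph{second-order} existential over all candidate relations $R_2$, and it is the observation that $R_1\circ R'_2\subseteq R$ is free (requiring no quantified test) that lets the remaining conditions stay first-order with only three alternations.
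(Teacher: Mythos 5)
Your proposal is correct and follows essentially the same route as the paper: reduce via the maximum-relation property (Lemma~\ref{lem:MaxRel}) to checking the TD/PD conditions on the single pair $(R_1,R'_2)$, inline membership in $R'_2$ as the universally quantified formula $\forall \vec{i}\,(C_{R_1}(\vec{i},\vb)\rightarrow C_R(\vec{i},\vec{o}))$ rather than constructing a circuit for it, and count quantifier alternations to land in $\Pi^P_3$. Your extra observation that $R_1\circ R'_2\subseteq R$ holds automatically (by instantiating the defining universal at $i'=i$) is a mild streamlining of the paper's formulas (2) and (4), whose corresponding directions prenex to only $\Pi_2$ in any case, so it does not change the substance of the argument.
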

\begin{proof}
For the symbolic setting, when $R$ and $R_1$ are given as circuits $C_R$ and $C_{R_1}$, resp.,
we can use the definitions of $C_{R'_2}$ to verify the TD/PD conditions:
for $TDP$ verify the following formulas
(1) $\forall \vb, \vec{i}\exists\vec{o} (C_{R_1}(\vec{i},\vb)\rightarrow C_{R'_2}(\vb,\vec{o}))$, and
(2)  $\forall \vec{i},\vec{o} (\exists \vb (C_{R_1}(\vec{i},\vb) \wedge C_{R'_2}(\vb,\vec{o})))\leftrightarrow  C_R(\vec{i},\vec{o})$.
For PDP in addition to Formula (1) we also need to verify \newline
(3) $\forall \vec{i} (\exists \vec{b},\vec{o} (C_{R_1}(\vec{i},\vb)\wedge C_{R'_2}(\vb,\vec{o}))) \leftrightarrow  \exists \vec{o}' C_R(\vec{i},\vec{o}')$
, and (4) $\forall \vec{i},\vec{o} (\exists \vb (C_{R_1}(\vec{i},\vb) \wedge C_{R'_2}(\vb,\vec{o}))\rightarrow  C_R(\vec{i},\vec{o}))$.
In all these, $C_{R'_2}$ need not be constructed but can be described by the formula $(\forall \vec{i}) (C_{R_1}(\vec{i},\vb)\rightarrow C_R(\vec{i},\vec{o}))$.%
\footnote{ By performing quantifier elimination on this formula, $C_{R'_2}$ can be constructed explicitly, in time that may be exponential.\ c.f.,~\cite{FTV16}.}. \end{proof}

For the automatic relation setting we have the following result for TDP/PDP with a hint $R_1$, where
we assume that $R_1$ is given as a DFA\@.
%A similar detailed result for hint $R_2$ is given in Appendix~\ref{app:hint}.

\begin{thm}\label{thm:solveHint}
TDP/PDP with a hint  $\A_{R_1}$ for automatic relations are in EXPSPACE\@.
\end{thm}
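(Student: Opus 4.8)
The plan is to use the maximum-property relation $R'_2$ from Lemma~\ref{lem:MaxRel} again, but now carry it out at the level of automata rather than explicit tables. Given the hint $\A_{R_1}$ as a DFA and $\A_R$ as the input DFA, I would build an automaton recognizing $R'_2 = \{(\vb,\vo) \mid \forall \vi\,((\vi,\vb)\in R_1 \rightarrow (\vi,\vo)\in R)\}$. The universal quantifier over $\vi$ is the source of the blow-up: projecting out the $\Sigma_\In$ component while enforcing a universal condition is a complementation-flavored operation, so I expect to determinize via the subset construction, incurring one exponential. Since $R_1$ and $R$ are length-preserving, the implication ``$(\vi,\vb)\in R_1 \rightarrow (\vi,\vo)\in R$'' can be checked synchronously letter by letter over the product alphabet $\Sigma_\In\times\Sigma_\Barr\times\Sigma_\Out$, so the construction of a DFA $\A_{R'_2}$ for $R'_2$ is routine aside from this single exponential determinization.

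Once $\A_{R'_2}$ is in hand (of exponential size), the remaining work is to verify that $(R_1,R'_2)$ meets the TD/PD conditions, exactly as in the explicit case. First I would check $Img(R_1)\subseteq Dom(R_2')$, which amounts to a language-inclusion test between two automata of size at most exponential in the input; inclusion of DFAs is decidable in space polynomial in the automata, hence in EXPSPACE overall. For the TD variant I must additionally verify $R_1\circ R_2' = R$, and for PD the two conditions $Dom(R_1\circ R_2')=Dom(R)$ together with $R_1\circ R_2'\subseteq R$. Each of these reduces to emptiness or inclusion of automata obtained from $\A_{R_1}$, $\A_{R'_2}$, and $\A_R$ by product and projection: the composition $R_1\circ R_2'$ is recognized by projecting the intermediate $\Sigma_\Barr$ component out of the synchronized product of $\A_{R_1}$ and $\A_{R'_2}$, which again requires a determinization step and hence costs another exponential, but each individual test stays within exponential space.

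The key justification I must supply is \emph{why it suffices to test only $R'_2$} rather than search over all candidate $R_2$; this is precisely Lemma~\ref{lem:MaxRel}, which guarantees that if any valid $R_2$ exists then $R'_2$ itself is a valid solution, and every solution is contained in $R'_2$. I would note that $R'_2$ is automatic whenever $R$ and $R_1$ are, so the strategy of ``construct the maximal candidate and check it'' transfers verbatim from the explicit setting; the only difference is the representation and the resulting complexity. Putting the pieces together, the algorithm builds $\A_{R'_2}$ (one exponential from determinization), forms the composition automaton (a second determinization, so still at most doubly exponential in size but testable in exponential space), and runs the inclusion/equality/emptiness checks in space polynomial in these automata, yielding an EXPSPACE procedure.

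The main obstacle I anticipate is controlling the complexity across the chain of projections and determinizations: naively one determinization feeds into another, and I must argue that the composite tests can be carried out in exponential \emph{space} rather than merely in doubly-exponential \emph{time}. The cleanest way to secure this is to observe that each language-inclusion or emptiness test can be performed on-the-fly, computing subset-construction states as needed, so that the nondeterministic-space complexity of each check is polynomial in the exponential-size automata, giving EXPSPACE by Savitch's theorem without ever materializing a doubly-exponential object.
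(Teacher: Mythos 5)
Your proposal is correct and follows essentially the same route as the paper's proof: construct a DFA for the maximal relation $R'_2$ of Lemma \ref{lem:MaxRel} via a synchronized product of $\A_{R_1}$ and $\A_R$ followed by a universal (complementation-style) subset-construction projection of the $\Sigma_\In$ component, then verify the TD/PD conditions by inclusion/emptiness tests. Your closing observation about performing the inclusion checks on-the-fly to stay in exponential space rather than doubly-exponential time is exactly the argument the paper uses to land in EXPSPACE.
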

\begin{proof}

	We start by recalling some basic automata constructions. Let $A,B,C$ be (possibly infinite) domains. If $\A$ is an automaton that describes a relation $R\subseteq A\times B$ (in a the sense of Section~\ref{sec:interleave}) then an NFA $Img(\A)$ that describes $Img(R)$ is obtained by discarding the $B$-alphabet labels from $\A$. Similarly an NFA $Dom(\A)$ that describes $Dom(R)$ is obtained by discarding the $A$-alphabet labels from $\A$. From automaton $\A_1$ that  describes a relation $R_1\subseteq A\times B$, and an automaton $\A_2$ that  describes a relation $R_2\subseteq B\times C$,  we obtain an automaton denoted by $\A_1\circ\A_2$ that  describes $R_1\circ R_2$ by taking the product automaton of $A_1$ and $A_2$, eliminate the labels $(a,b,b',c)$ in which $b\not=b'$, then discarding the $B$-alphabet labels from the remaining labels. Finally, to check if a DFA $\A$ is contained in a DFA $\A'$ we check for emptiness the automaton $\A\cap \neg \A'$.

Now for the proof, assume that $\Aut_R=(\Sigma_\In\times\Sigma_\Out,Q,q^0,\delta,F)$ is a DFA that describes $R$
and $\Aut_{R_1}=(\Sigma_\In\times\Sigma_\Barr,Q_1,q^0_1,\delta_1,F_1)$ is a DFA hint.
We show how to construct a DFA $\Aut_2 =(\Sigma_\Barr\times\Sigma_\Out,Q_2,q^0_2,\delta_2,F_2)$ that describes the maximum relation $R'_2$.
We define $\Aut'=(\Sigma_\In\times\Sigma_\Barr\times\Sigma_\In\times\Sigma_\Out,Q_1\times Q,(q_1^0,q^0),\delta_1\times\delta,F')$
as the product automaton of $\Aut_{R_1}$ and $\Aut_R$. We re-define the accepting states
$F'=F\times F_1$ by setting $F'=\{(q,q') \mid q\in F_1 \rightarrow q'\in F\}$.
For the transition function, we first discard from $\Aut'$ all the $(i,b,i',o)$ edges in which $(i\not=i')$.
Next we delete the alphabets $\Sigma_\In\times\Sigma_\In$ from all the labels in $\Aut'$. This results in a non-deterministic automaton,
that we determinize (with possibly an exponential blow-up) in the standard way through the subset construction, with the one exception
that we set every super-state (in the subset construction) to be accepting if and only if \textit{all} of its elements are accepting states.
This results in a DFA $\Aut_2$ over $\Sigma_\Barr\times\Sigma_\Out$ that describes the maximum relation $R'_2$.

Note that the construction of $\Aut_2$ can take exponential time in the size of $\Aut_{R_1}$ and $\Aut_R$ because of the determinization process.
% Also note that prior to determinization, the automaton has to satisfy the \textit{universal} acceptance condition (i.e. a word is accepting iff every run reaches an accepting state) in order to remain non-deterministic and describe $L(R'_2)$.

\begin{clm}
$\Aut_2$ describes the maximum relation $R'_2$.
\end{clm}

\begin{proof}
Let $(\vec{b},\vec{o})\in L(\Aut_2)$.
Then for every $\vec{i}\in\Sigma_\In$ such that $(\vec{i},\vec{b})$ is in $L(\Aut_{R_1})$
we see that $(\vec{i},\vec{o})$ is in $L(\Aut_R)$, which means  $(\vec{b},\vec{o})\in R'_2$. Suppose otherwise, then we have that
$(\vec{i},\vec{b})$ reaches an accepting state in $\Aut_{R_1}$ and a non-accepting state in $\Aut_R$.
Therefore by our construction the word $(\vec{i},\vec{b},\vec{i'},\vec{o})$ is not accepting in $\Aut'$
before removing the $\Sigma_\In$ letters, therefore  $(\vec{b},\vec{o})$ is not in $\Aut'$, a contradiction.
Next, assume $(\vec{b},\vec{o})\in R'_2$.
Then for every $\vec{i}$ such that $(\vec{i},\vec{b})\in R_1$, we have that
$(\vec{i},\vec{o})\in R$. So if $(\vec{i},\vec{b})\in L(\Aut_{R_1})$ then $(\vec{i},\vec{o})\in L(\Aut_R)$.
Therefore for every $\vec{i}\in\Sigma^*_\In$, we have that $(\vec{i},\vec{b},\vec{i},\vec{o})$ is accepting
in the construction process of $\Aut_2$ before removing the $\Sigma_\In$ letters and determinizing.
So $(\vec{b},\vec{o})\in L(\Aut_2)$.
\end{proof}

%Finally, we check that $(\A_{R_1},\A_2)$ meet the condition for TDP/PDP by standard emptiness checks on projections, and intersections of $\A_{R_1}$ and $\A_2$;
%More details are found in the proof of Theorem~\ref{thm:solveHintR2}, Appendix~\ref{app:hint}.
Finally we need to check that $(\A_{R_1},\A_2)$ meet the TD/PD conditions. We do this step by step:
For TDP we need to check:
\begin{enumerate}
    \item $L(Img(\A_{R_1}))\subseteq L(Dom(\Aut_2))$ which means that $Img(\A_{R_1})\cap \neg Dom(\Aut_2) = \emptyset$. Note that both $Img(\A_{R_1})$ and $Dom(\Aut_2)$ are non-deterministic, but while checking the condition does not require determinization of $Img(\A_{R_1})$, to construct $\neg Dom(\Aut_2)$ we need to determinize $Dom(\Aut_2)$ with an extra exponential time, in the size of $\Aut_2$. We can however avoid direct construction by constructing the automata ``on-the-fly'' one state at a time. This leads to an overall construction in EXPSPACE\@.

    %This results in an overall double-exponential time to construct $\Aut_2$. One way to overcome this, may be to leave $\Aut_2$ as a non-deterministic universal automata. then $\neg\Aut_2$ becomes an NFA, so constructing $Dom(\neg\Aut_2)$ can save an exponent. However since $L(Dom(\neg\Aut_2))\not=\neg L(Dom(\Aut_2))$, this may not prove beneficial.

\item $L(\Aut_{R_1}\circ\Aut_2) = L(\Aut)$. Here we need to check that $(\Aut_{R_1}\circ\Aut_2)\cap\neg\Aut = \emptyset$, which can be done in polynomial time since $\Aut$ is deterministic; and that $\neg(\Aut_{R_1}\circ\Aut_2)\cap\Aut=\emptyset$ that as before, requires determinization of $\Aut_{R_1}\circ\Aut_2$, which already has exponential size, all in all a double exponential blow up.
Again a indirect construction can be done as before in EXPSPACE\@.
\end{enumerate}
For PDP check that:
\begin{enumerate}
\item $L(Dom(\A_{R_1}))\subseteq L(Img(\Aut_2))$ which as before can be done in EXPSPACE\@.
\item $L(Dom(\Aut_{R_1}\circ\Aut_2)) = L(Dom(\Aut))$ which can also take exponential time, since as before checking $\neg Dom(\Aut_{R_1}\circ\Aut_2) \cap Dom(\Aut)=\emptyset$ requires determinization of $\Aut_{R_1}\circ\Aut_2$ which is already in exponential size. As before, this can be done in EXPSPACE\@.
\item  $L(\Aut_{R_1}\circ\Aut_2) \subseteq L(\Aut)$ which can be done in polynomial time.
Since $(\Aut_{R_1}\circ\Aut_2) \cap \neg \Aut=\emptyset$ since $\Aut$ is deterministic. \qedhere
\end{enumerate}
% 	Finally we need to check that $(\A_{R_1},\A_2)$ meet the TD/PD conditions. For that by standard operations we check for TDP that $L(Dom(\A_{R_1}))\subseteq L(Img(\Aut_2))$, and $L(\Aut_{R_1}\circ\Aut_2) = L(\Aut)$. For PDP check that  $L(Dom(\A_{R_1}))\subseteq L(Img(\Aut_2))$, $L(Dom(\Aut_{R_1}\circ\Aut_2)) = L(Dom(\Aut))$ and $L(\Aut_{R_1}\circ\Aut_2) \subseteq L(\Aut)$.

	%Note that this verification method for the TDP/PDP conditions also suffices for TDP/PDP when $R_1$ is given as a hint, thus completes the proof of Theorem~\ref{thm:solveHintR2}.
\end{proof}
%The maximum relation and construction for $R_1$ in TDP/PDP  with a hint $R_2$  is similarly done.
%For these, however, we need to be more careful since there is an extra condition of
%$Img(R_1)\subseteq Dom(R_2)$ that needed to be taken care of. A full analysis of this case appears in Appendix~\ref{app:hint}.

% Theorem~\ref{thm:solveHint} provides an immediate consequence for TDP/PDP in the automatic setting, in case one of the component, say, $\Aut_{R_1}$ is required to be bounded by a given (unary) size $k$:
% guess an automaton $\Aut_{R_1}$ of size $k$ as a hint $R_1$, then use Theorem~\ref{thm:solveHint}.
% This gives us the following.

% \begin{cor}\label{cor:hintbound}
% TDP/PDP on automatic relations when one of the component DFAs is bounded by a given (unary) size is in NEXPTIME\@.
% \end{cor}

\noindent
Along the lines of the proof of Theorem~\ref{thm:solveHint}, we can say the following on unary intermediate alphabet.

\begin{thm}\label{cor:unary}
	TDP/PDP for automatic relations with a unary intermediate alphabet is in PTIME\@.
\end{thm}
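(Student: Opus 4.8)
The plan is to analyze what the unary intermediate alphabet forces on the structure of any solution, and then show that the decomposition question collapses to a simple language-theoretic check that can be performed in polynomial time. When $\Sigma_\Barr=\{c\}$ is unary, the intermediate words $\vb\in\Sigma_\Barr^*$ carry essentially no information beyond their length: the unique word of length $n$ is $c^n$. Since the automatic relations here are length-preserving, any $(\vec{i},\vb)\in R_1$ must have $|\vb|=|\vec{i}|$, so $R_1$ is forced to map every $\vec{i}$ of length $n$ to the single word $c^n$, and symmetrically $R_2$ can only relate $c^n$ to outputs of length $n$. Thus the ``channel'' between the two tasks can only transmit the length of the input.

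\textbf{First I would} make precise the consequence of this observation. Because all inputs of a common length $n$ are routed through the same intermediate word $c^n$, condition (3) of Lemma~\ref{lem:char} forces that whenever $(\vec{i},\vec{o})$ and $(\vec{i}',\vec{o}')$ both appear (with the shared $\vb=c^n$), we must have $(\vec{i},\vec{o}'),(\vec{i}',\vec{o})\in R$. Concretely, the set of outputs reachable through $c^n$ must be ``rectangular'' with respect to the admissible inputs: writing $X_n=\{\vec{i}\mid |\vec{i}|=n, \vec{i}\in Dom(R)\}$ and letting $Y_n$ be the common set of outputs, every pair in $X_n\times Y_n$ must lie in $R$. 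For TDP we additionally need $R$ itself to equal $\bigcup_n X_n\times Y_n$, i.e.\ each length-$n$ slice of $R$ must \emph{be} a single combinatorial rectangle; for PDP we need each input in $X_n$ to be covered by some output, with $X_n\times Y_n\subseteq R$. So the instance is solvable exactly when, for every length $n$, the slice $R\cap(\Sigma_\In^n\times\Sigma_\Out^n)$ is a rectangle (TDP) or contains a rectangle covering its domain (PDP).

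\textbf{The key step} is to verify this rectangularity condition for \emph{all} lengths $n$ simultaneously in polynomial time, despite there being infinitely many lengths. I would do this directly on the DFA $\Aut_R$. Following the constructions recalled in the proof of Theorem~\ref{thm:solveHint}, one can build automata for $Dom(R)$ and, for each candidate ``row'', track the set of outputs it admits. The clean way is to reuse the maximum-relation idea: the required $R_2$ through $c^n$ is forced, so one applies the $R'_2$ construction from Lemma~\ref{lem:MaxRel} with the forced $R_1$ (the unique length-respecting map into $c^*$), obtaining a candidate $R_2$, and then checks the TD/PD conditions via the emptiness tests described in Theorem~\ref{thm:solveHint}. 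Crucially, because the intermediate alphabet is unary, the determinization blow-up that caused the EXPSPACE bound there is avoided: the only information passed through the bottleneck is length, so the relevant product and projection automata stay polynomially sized and no exponential subset construction is needed. The emptiness checks on these polynomial-size automata then run in polynomial time.

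\textbf{The main obstacle I expect} is the uniformity over all $n$: I must ensure that a single polynomial-size automaton correctly certifies rectangularity at every length at once, rather than checking lengths one at a time. The rectangularity failure ``there exist $\vec{i},\vec{i}'$ of length $n$ and outputs witnessing $(\vec{i},\vec{o}),(\vec{i}',\vec{o}')\in R$ but $(\vec{i},\vec{o}')\notin R$'' is a conjunction of runs that all must be synchronized by length; expressing its negation as a single automaton emptiness check while keeping the state space polynomial (exploiting that the synchronizing information is merely the common clock/length provided by the unary alphabet) is the delicate part. Once this synchronized check is set up, the remaining steps are the routine polynomial-time product-and-emptiness arguments already used in Theorem~\ref{thm:solveHint}, and the PTIME bound follows.
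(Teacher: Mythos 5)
Your starting observation and your TDP characterization coincide with the paper's: with $\Sigma_\Barr=\{b\}$ the intermediate word accompanying any length-$n$ input is forced to be $b^n$, so only length crosses the bottleneck, and TDP-solvability becomes the condition that every length-$n$ slice of $R$ is the full rectangle $X_n\times Y_n$ over its domain and image slices. The genuine gap is in your ``key step,'' which is exactly where the content of the theorem lies: the polynomial-time verification. The route you call the clean way --- instantiate the forced $R_1$, build the maximal relation $R'_2$ of Lemma \ref{lem:MaxRel} via the construction in Theorem \ref{thm:solveHint}, then run that theorem's TD/PD checks --- does not become polynomial merely because $\Sigma_\Barr$ is unary. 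The exponential cost in Theorem \ref{thm:solveHint} comes from the subset constructions performed after projecting away the $\Sigma_\In$ letters (this is what realizes the universal quantification over $\vec{i}$ in the definition of $R'_2$) and from the complementations in the containment checks; these blowups are over subsets of the state set of $\Aut_R$ and are completely unaffected by $|\Sigma_\Barr|=1$. Your assertion that ``the relevant product and projection automata stay polynomially sized and no exponential subset construction is needed'' is precisely what must be proved, and it is not true of the construction you invoke. The paper never builds $R'_2$: it tests the \emph{negation} of the criterion, which is purely existential --- do there exist same-length words $\vec{i}\in Dom(R)$ and $\vec{o}\in Img(R)$ with $(\vec{i},\vec{o})\notin R$ --- by taking the product of the projection NFAs $Dom(\Aut_R)$ and $Img(\Aut_R)$ with a copy of $\Aut_R$ whose accepting states are flipped (complementation is free because $\Aut_R$ is deterministic), and checking NFA emptiness, all in polynomial time. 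This is exactly the step you flag in your final paragraph as ``the delicate part'' and leave unresolved; for TDP it is resolved by this complemented-DFA product, not by the maximal-relation machinery.

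For PDP your proposal also diverges from the paper in the characterization itself: you require only that each nonempty domain slice $X_n$ admit some sub-rectangle $X_n\times Y_n\subseteq R$ covering it, whereas the paper applies the same full-rectangle criterion to both variants (its necessity argument puts $(b^n,\vec{o})$ into $R_2$ for every $\vec{o}$ in the image slice, which uses totality of the image and is not forced by the PD conditions, under which outputs may be lost). Your weaker condition is in fact the faithful characterization of PD-solvability, but it has an inherent exists-forall alternation over words --- some $\vec{o}$ related to \emph{all} of $X_n$ --- so its negation is not existential, the complemented-DFA trick does not apply to it, and your proposal supplies no polynomial-time test for it (the maximal-relation route again incurs determinization). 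So on the PDP side the write-up both departs from the paper's argument and leaves the PTIME claim unsupported, which is the claim to be proved.
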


\begin{proof}
  For a unary intermediate alphabet $\Sigma_\Barr=\{b\}$, every word of length $n$ in $Dom(R)$ and in $Img(R)$  must be paired with the word $b^n$. Therefore if $(\vec{i},\vec{o})\in L(\Aut_R)$ for every $\vec{i}\in L(Dom(\Aut_R))$ and $\vec{o}\in L(Img(\Aut_R))$ then there is a TDP (and therefore PDP) solution constructed as follows: construct $\Aut_{R_1}$ from $\Aut_R$ by simply replacing every $\Sigma_\Out$ letter with the letter $b$ of $\Sigma_\Barr$,  and similarly construct $\Aut_{R_2}$ from $\Aut_R$ by simply replacing every $\Sigma_\In$ letter with the letter $b$ of $\Sigma_\Barr$.
  On the other hand if assume that there exists $\vec{i}\in L(Dom(\Aut_R))$ and $\vec{o}\in L(Img(\Aut_R))$ (suppose of size $n$) for which  $(\vec{i},\vec{o})\not\in L(\Aut_R)$. Then we have that no solution exists since then it means that $(\vec{i'},\vec{o}),(\vec{i},\vec{o'})\in L(\Aut_R)\in L(\Aut_R)$ for some $\vec{i'},\vec{o}$, but that means that $(\vec{i},b^n)$, $(b^n,\vec{o})$ must be accepting in any solution for $R_1,R_2$ respectively, then $(\vec{i},\vec{o})$ must be in $L(\Aut_R)$.

  Therefore the approach that we take, is to construct an NFA that is non-empty if and only if there indeed exists $\vec{i}\in L(Dom(\Aut_R))$ and $\vec{o}\in L(Img(\Aut_R))$ for which  $(\vec{i},\vec{o})\not\in L(\Aut_R)$, which means if and only if there is no TDP (or PDP) solution.
  For that, we construct an NFA $\Aut$ that is the product of $Dom(\Aut_R)$, $Img(\Aut_R)$ and $\Aut_R$.
  We then first delete every label $(i,o,i',o')$ for which $i\not=i'$ or $o\not= o'$. Then we set the accepting states of $\Aut$ such that every state $(q,q',q'')$ is accepting if and only if $q$ is an accepting state of $Dom(\Aut_R)$, $q'$ is an accepting state of $Img(\Aut_R)$, and $q''$ is a non accepting state of $\Aut_R$.

  To see that $\Aut$ is empty if and only if there is a TDP (or PDP) solution, assume first that $\Aut$ is non-empty. Therefore there is an accepting word $(\vec{i},\vec{o})$ which means that $\vec{i}\in L(Dom(\Aut_R))$ and $\vec{o}\in L(Img(\Aut_R))$ but $(\vec{i},\vec{o})\not\in L(\Aut_R)$, therefore no TDP (or PDP) solution exists. On the other hand if $\Aut$ is empty then it means that for every $(\vec{i},\vec{o})$ for which $\vec{i}\in L(Dom(\Aut_R))$ and $\vec{o}\in L(Img(\Aut_R))$, we have that $(\vec{i},\vec{o})\in L(\Aut_R)$ as well. Thus there is a TDP solution.

  Finally note that the construction of $\Aut$ is polynomial since no determinization is required. Thus since checking emptiness for NFA can also be done in polynomial time, we have that the problem is in PTIME\@.  \end{proof}

Finally, for Strategic PDP in the automatic setting, we assume that the hint is given in the form of a transducer.

\begin{thm}\label{thm:solveHintFDP}
Strategic PDP with a hint $T_1$ is in EXPTIME\@.
\end{thm}
\begin{proof}
For Strategic PDP with a hint $T_1$ as a transducer we simplify the construction of Section~\ref{sec:AutFDP}. There we constructed $A_2$ by taking simulating $A^n_R$ on $\tau_2$, and then checked non-emptiness for $A_2$ and obtained a witness transducer $T_2$. Then we plugged $T_2$ into $A^n_R$ and solved for $T_1$. Here we do not need to solve for $T_1$, because it is given as a hint. In addition, also recall that $A^n_R$ guesses mappings $\eta: \Sigma_\In\rightarrow \Sigma_\Barr\times\Sigma_\Out$. By taking the product of $A^n_R$ with $T_1$ we can ensure that these guesses are consistent with $T_1$.

Formally, let $T_1=(\Sigma_\In,\Sigma_\Barr,P,p_0,\gamma)$, where $P$ is a finite state set, $p_0\in P$ is the initial state, and $\gamma:P\times\Sigma_\In\rightarrow P\times \Sigma_\Barr$ is the transition function that yields, for a state $p\in P$ and input letter $i\in\Sigma_\In$, a pair $\gamma(p,i)$ of a successor state and an output letter.

Now we modify the construction of $\A_2$ from  Section~\ref{sec:AutFDP} and define \[\A'_2=(\Sigma_\Barr,\Sigma_\Out,Q'',(q_0,p_0),\delta'_2,\alpha'_2),\] where $Q''=\{q_0\}\cup (Q\times\Sigma_\Out)$,
$\alpha'_2=Q''$, and the transition function is defined as follows:
\begin{itemize}
\item
$\delta^2(q_0,p_0)=\
\bigvee_{\eta:\Sigma_\In\rightarrow \Sigma_\Out,}
\bigwedge_{b\in\Sigma_\Barr}
\bigwedge_{i\in\Sigma_\In~\mbox{and}~b=\gamma(p_0,i)_2}
\langle b, (\delta(q_0,(i,\eta(i))),\gamma(p_0,i)_1,\eta(i))\rangle$
\item
$\delta^2((q,p,o'),o)=
\bigvee_{\eta:\Sigma_\In\rightarrow \Sigma_\Out}
\bigwedge_{b\in\Sigma_\Barr}
\bigwedge_{i\in\Sigma_\In~\mbox{and}~b=\gamma(p,i)_2}
\langle b, (\delta(q,(i,\eta(i))),\gamma(p,i)_1,\eta(i))\rangle$ if \linebreak\mbox{$q\in F$} and $o=o'$,
\item
$\delta^2((q,p.o'),o)=false$ if $q\not\in F$ or $o'\not=o'$.
\end{itemize}
Now we continue as in~\ref{sec:AutFDP}; we check nonemptiness of $A'_2$, and if $\A'_2$ is nonempty, we get a transducer $T_2$ that defines a strategy from $\Sigma_\Barr^+$ to $\Sigma_\Out$. The complexity is exponential in the size of $\A_R$.
\end{proof}

\subsection{TDP/PDP with hint \texorpdfstring{$R_2$}{R2}}

We now discuss the case for TDP/PDP when $R_2$ is given as a hint. The definition, given below, is similar to that of $R_1$ being a hint.

\begin{defi}\label{def:defdefR2}
TDP/PDP problem with a hint $R_2$ we are given input, output and intermediate domain $\In,\Out,\Barr$,
relations $R\subseteq \In\times \Out$ and $R_2\subseteq \Barr\times \Out$. The goal is to find whether there
is a relation $R_1\subseteq\In\times\Barr$ such that $(R_1,R_2)$ meet the TD/PD conditions.
\end{defi}

As before, we  define the maximum relation $R'_1$ to be
$R'_1 =\{(i,b) \mid (\exists o (b,o)\in R_2) \wedge (\forall o( (b,o)\in R_2 \rightarrow (i,o)\in R))\}$. By an almost identical argument to that of Lemma~\ref{lem:MaxRel} we get the following.

\begin{lem}\label{lem:MaxR1}
	Every solution for TDP/PDP with a hint $R_2$ is contained in $R'_1$ and if there exists such a solution then $R'_1$ is a solution as well.
\end{lem}

For the explicit setting, construction again is easy. Therefore we have:

\begin{thm}
	TDP/PDP with a hint $R_2$ for explicit relations is in PTIME\@.
\end{thm}

And we also have

\begin{thm}
	TDP/PDP with a hint $C_{R_2}$ for symbolic relations is in $\Pi^P_3$.
\end{thm}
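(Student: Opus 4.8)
The plan is to mirror the argument for Theorem~\ref{thm:hintsucc}, the symbolic case with hint $R_1$, but now driven by the dual maximum relation $R'_1$ furnished by Lemma~\ref{lem:MaxR1}. That lemma tells us a solution exists if and only if the single candidate pair $(R'_1,R_2)$ already meets the TD/PD conditions, so the entire decision problem collapses to verifying those conditions for this one pair. Exactly as in Theorem~\ref{thm:hintsucc}, I would avoid building a circuit for $R'_1$ and instead describe membership in it by the quantified formula
\[
\phi_{R'_1}(\vi,\vb) \;\equiv\; \bigl(\exists \vo\, C_{R_2}(\vb,\vo)\bigr) \wedge \bigl(\forall \vo\,(C_{R_2}(\vb,\vo)\rightarrow C_R(\vi,\vo))\bigr),
\]
substituting $\phi_{R'_1}$ for every membership test $(\vi,\vb)\in R'_1$. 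Its left conjunct, asserting $\vb\in Dom(R_2)$, is precisely what renders the condition $Img(R'_1)\subseteq Dom(R_2)$ automatically true, so it need not be checked separately.

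I would then encode the remaining TD/PD conditions as closed QBF instances over $C_R$, $C_{R_2}$ and $\phi_{R'_1}$. For TDP the only nontrivial requirement is $R'_1\circ R_2=R$, split into $\forall \vi,\vo\,[(\exists \vb\,(\phi_{R'_1}(\vi,\vb)\wedge C_{R_2}(\vb,\vo)))\rightarrow C_R(\vi,\vo)]$ and its converse; for PDP one instead conjoins $R'_1\circ R_2\subseteq R$ with $Dom(R'_1\circ R_2)=Dom(R)$, again expressed by implications of the same composition shape. The decisive step is to push each formula into prenex form (renaming bound variables apart) and bound its quantifier alternations. Since $\phi_{R'_1}$ is itself of shape $\exists\forall$, the containment direction $R'_1\circ R_2\subseteq R$ negates the composition witness and lands at $\Pi_2^P$, whereas the reverse direction $R\subseteq R'_1\circ R_2$ keeps the composition's existential block above the $\forall$ inside $\phi_{R'_1}$, yielding $\forall\exists\forall$, i.e. $\Pi_3^P$; the two halves of $Dom(R'_1\circ R_2)=Dom(R)$ analogously come out at $\Pi_2^P$ and $\Pi_3^P$. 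Because $\Pi_3^P$ is closed under conjunction and only a constant number of such checks is needed, the whole verification, and thus the problem, sits in $\Pi_3^P$.

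The step I expect to be delicate is exactly this alternation bookkeeping. Because $\phi_{R'_1}$ already nests an $\exists\vo$ and a $\forall\vo$, substituting it beneath the existential witness of $R'_1\circ R_2$ risks manufacturing a spurious fourth alternation. The crux is to check, at each hoisting step, that the composition's fresh existential quantifiers merge with the $\exists$-layer of $\phi_{R'_1}$ rather than alternating against it, and dually that negated existentials merge with $\phi_{R'_1}$'s negated universals; verifying that no formula ever exceeds three alternations is what pins the bound at $\Pi_3^P$ instead of $\Pi_4^P$, and matches the bound already obtained for the hint-$R_1$ case in Theorem~\ref{thm:hintsucc}.
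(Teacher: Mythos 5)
Your proposal is correct and follows essentially the same route as the paper: invoke Lemma~\ref{lem:MaxR1}, replace every membership test in $R'_1$ by the quantified formula $\phi_{R'_1}(\vi,\vb)\equiv(\exists \vo\, C_{R_2}(\vb,\vo))\wedge(\forall \vo\,(C_{R_2}(\vb,\vo)\rightarrow C_R(\vi,\vo)))$ --- identical to the paper's $C_{R'_1}$ --- and verify the TD/PD conditions as closed QBF instances whose prenex forms stay within $\forall\exists\forall$, i.e.\ $\Pi_3^P$. Your two deviations --- omitting the paper's explicit check of $Img(R'_1)\subseteq Dom(R_2)$ (its formula (1)), which as you observe is automatic from the first conjunct of $\phi_{R'_1}$, and the more careful alternation bookkeeping --- are refinements of the same argument rather than a different one.
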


\begin{proof}
	The arguments for the proof are almost identical to those of Theorem~\ref{thm:hintsucc}, with the one change that this time we are
	given $C_{R_2}$ and we need to verify for TDP the formulas
	(1) $\forall \vb,\vec{i}\exists\vec{o} (C_{R'_1}(\vec{i},\vb)\rightarrow C_{R_2}(\vb,\vec{o}))$
	and (2) $\forall \vec{i},\vec{o} \exists \vb ((C_{R'_1}(\vec{i},\vb) \wedge C_{R_2}(\vb,\vec{o}))\leftrightarrow  C_R(\vec{i},\vec{o}))$.
	For PDP in addition to Formula (1) we also need to verify (3) $\forall \vec{i} (\exists \vec{b},\vec{o} (C_{R'_1}(\vec{i},\vb)\wedge C_{R_2}(\vb,\vec{o}))) \leftrightarrow  \exists \vec{o}' C_R(\vec{i},\vec{o}')$, and (4) $\forall \vec{i},\vec{o} (\exists \vb (C_{R'_1}(\vec{i},\vb) \wedge C_{R_2}(\vb,\vec{o}))\rightarrow  C_R(\vec{i},\vec{o}))$.
	In all these,
	\[C_{R'_1}\equiv (\exists\vec{o} C_{R_2}(\vb,\vec{o})\wedge \forall\vec{o} C_{R_2}(\vb,\vec{o})\rightarrow C_R(\vec{i},\vec{o})).\qedhere\] \end{proof}
%
%As in the case of a hint $C_{R_1}$, here too, a functional circuit for FDP can be constructed using Boolean synthesis.
%\end{proof}

Next, we can get a similar result to Theorem~\ref{thm:solveHint} for TDP/PDP for automatic relations with a hint $R_2$, but we need to be more careful since  $R_1$ imposes an extra condition of $Img(R_1)\subseteq Dom(R_2)$ that is needed to be considered.

\begin{thm}\label{thm:solveHintR2}
	TDP/PDP for automatic relations with a hint $\Aut_{R_2}$ is in EXPSPACE\@.
\end{thm}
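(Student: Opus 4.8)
The plan is to follow the template of Theorem~\ref{thm:solveHint}: by Lemma~\ref{lem:MaxR1} it suffices to construct a DFA $\Aut_1$ describing the maximum relation $R'_1$ and then verify that $(\Aut_1,\Aut_{R_2})$ meets the TD/PD conditions. The defining condition of $R'_1$ splits into a universal clause, $\forall \vo\,((\vb,\vo)\in R_2 \rightarrow (\vi,\vo)\in R)$, and an existential clause, $\exists \vo\,(\vb,\vo)\in R_2$; I would build an automaton for each and intersect them.

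For the universal clause, first take the product of $\Aut_{R_2}$ (which reads $(\vb,\vo)$) and $\Aut_R$ (which reads $(\vi,\vo)$), synchronizing on the shared $\Sigma_\Out$-component by deleting every edge whose two $\Sigma_\Out$-letters disagree. On the product state set $Q_2\times Q$ declare $(q_2,q)$ accepting precisely when $q_2\in F_2 \rightarrow q\in F$, so that an accepting run records the implication ``$(\vb,\vo)\in R_2$ entails $(\vi,\vo)\in R$'' for the particular $\vo$ read along it. Projecting away the $\Sigma_\Out$-labels yields a nondeterministic automaton over $\Sigma_\In\times\Sigma_\Barr$ whose runs on a fixed $(\vi,\vb)$ enumerate the choices of $\vo$ (completing $\Aut_R$ beforehand guarantees one surviving run per $\vo$); determinizing it by the subset construction while calling a super-state accepting iff all of its members are accepting converts ``all runs accept'' into ``for all $\vo$ the implication holds''. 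The resulting DFA describes exactly the universal clause, at the cost of a single exponential blow-up.

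The point that demands the extra care flagged before the statement is the existential clause, i.e.\ the requirement $Img(R_1)\subseteq Dom(R_2)$. The universal DFA just built is \emph{vacuously} satisfied whenever $\vb\notin Dom(R_2)$, so used alone it would yield an $R_1$ whose image escapes $Dom(R_2)$. To repair this I would construct the NFA $Dom(\Aut_{R_2})$ over $\Sigma_\Barr$ (obtained by projecting the $\Sigma_\Out$-labels out of $\Aut_{R_2}$) and take its product with the universal DFA, letting $Dom(\Aut_{R_2})$ inspect only the $\Sigma_\Barr$-component. This intersection is the automaton $\Aut_1$ describing $R'_1$, and now $Img(R'_1)\subseteq Dom(R_2)$ holds by construction; its size remains exponential. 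I expect this interaction to be the chief obstacle, since the two quantifiers range over the same variable $\vo$, so the existential condition cannot simply be folded into the all-accepting subset construction and must be enforced by a separate factor.

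Finally, with $\Aut_1$ in hand I would check the TD/PD conditions for $(\Aut_1,\Aut_{R_2})$ using exactly the $Img$, $Dom$, composition $\Aut_1\circ\Aut_{R_2}$, and ``containment via intersection with a complement'' primitives recalled in Theorem~\ref{thm:solveHint}. Several of these tests complement an already exponential-size automaton, but carrying out the determinizations on the fly, one state at a time, keeps each test within EXPSPACE; the inclusions of the form $\Aut_1\circ\Aut_{R_2}\subseteq\Aut_R$ need no further blow-up because $\Aut_R$ is deterministic. Composing the exponential construction of $\Aut_1$ with these EXPSPACE verifications yields the claimed EXPSPACE upper bound.
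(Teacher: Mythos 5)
Your proposal is correct and follows essentially the same route as the paper's proof: construct the maximum relation $R'_1$ of Lemma~\ref{lem:MaxR1} by handling the universal clause via a synchronized product of $\Aut_{R_2}$ and $\Aut_R$ with implication-style accepting states, projection of $\Sigma_\Out$, and an all-accepting subset construction, then enforce the existential clause $Img(R_1)\subseteq Dom(R_2)$ by intersecting with $Dom(\Aut_{R_2})$, and finally verify the TD/PD conditions with on-the-fly determinization to stay in EXPSPACE. The only deviations are cosmetic (the paper determinizes $Dom(\Aut_{R_2})$ before the product, while you keep it nondeterministic), and your remark about completing $\Aut_R$ before projecting is a careful touch the paper leaves implicit.
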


\begin{proof}

    Assume that $\Aut=(\Sigma_\In\times\Sigma_\Out,Q,q^0,\delta,F)$ is a DFA for $R$,
	and that $\Aut_{R_2}=(\Sigma_\Barr\times\Sigma_\Out,Q_2,q^0_2,\delta_2,F_2)$ is a DFA for $R_2$ given as a hint. We show how to construct an automaton $\Aut_1 =(\Sigma_\In\times\Sigma_\Barr,Q_1,q^0_1,\delta_1,F_1)$, such that $L(\Aut_1)=L(R'_1)$ where $R'_1$ is the maximum relation from Lemma~\ref{lem:MaxR1}.

	We first construct an automaton that satisfies the second condition from the definition of $R'_1$, i.e.\ for every $\vec{o}\in \Sigma_\Out^*$, if $(\vb,\vec{o})\in R_2$ then $(\vec{i},\vec{o})\in R$.
	Similar to the proof of Theorem~\ref{thm:solveHint},
	we define $\Aut'=(\Sigma_\In\times\Sigma_\Out\times\Sigma_\Barr\times\Sigma_\Out,Q\times Q_2,(q^0,q^0_2),\delta\times\delta_2,F')$ to be the product automaton of $\Aut$ and $\Aut_{R_2}$. We redefine the accepting states to be $F'=\{(q,q') \mid q'\in F_2 \rightarrow q\in F\}$.
	To redefine the transition function, we first discard from $\Aut'$ all the edges $(i,o',b,o)$ edges in which $(o\not=o')$.
	Next we delete the alphabets $\Sigma_\Out\times\Sigma_\Out$ from all the labels in $\Aut'$. This results in a non-deterministic automaton, that we determinize in the usual way through the subset construction, with the one exception that we set every super-state (in the subset construction) to be accepting if and only if \textit{all} of its' elements are accepting states. This results in an automaton $\Aut'_1$ over $\Sigma_\In\times\Sigma_\Barr$.

	Next, we have to take care of the required condition that if $(\vec{i},\vb)$ is an accepting word in  $\Aut'_1$ then $\vb\in Dom(R_2)$.
	For that we first construct $\Aut'_2 = Dom(\Aut_{R_2})$ and determinize for an exponential blowup.
	Then, we define $\Aut_1$ by taking the product automaton of the DFAs $\Aut'_1$ and $\Aut'_2$, and discarding all edges labeled $(i,b,b')$ in which $b\not=b'$. Thus the overall construction has an expontential blowup.

% 	Next, we have to take care of the required condition that if $(\vec{i},\vb)$ is an accepting word in  $\Aut'_1$ then $\vb\in Dom(R_2)$.
% 	For that we first construct $\Aut'_2 = Dom(\Aut_{R_2})$.
% 	Then, we define $\Aut_1$ by taking the product automaton of $\Aut'_1$ and $\Aut'_2$ in the sense of $\Aut_1=(\Sigma_\In\times\Sigma_\Barr\times\Sigma_\Barr,\Aut'_1.Q\times \Aut'_2.Q, (\Aut'_1.q_0,\Aut'_2.q_0),\Aut'_1.\delta\times\Aut'_2.\delta,\Aut'_1.F\times\Aut'_2.F)$, discarding all edges labeled $(i,b,b')$ in which $b\not=b'$ and determinizing.

	\begin{clm}
		$\Aut_1$ defines the maximum relation $R'_1$.

	\end{clm}

	\begin{proof}
		Let $(\vec{i},\vec{b})$ be an accepting word in $\Aut_1$. Then  $(\vec{i},\vec{b},\vec{b})$ is an accepting word in $\Aut'_1\times \Aut'_2$.
		Specifically $\vec{b}$ is an accepting word in $\Aut'_2$, therefore there is a word $\vec{o}\in\Sigma_\Out$ such that $(\vec{b},\vec{o})$ is accepting in $\Aut_{R_2}$ so $\vec{b}\in Dom(R_2)$.
		In addition since $(\vec{i},\vec{b})$ is an accepting word in $\Aut'_1$ then
		for every $\vec{o}\in\Sigma_\Out$ such that $(\vec{b},\vec{o})$ is in $L(\Aut_{R_2})$ we have that $(\vec{i},\vec{o})$ is in $L(\Aut)$. Otherwise we have that
		$(\vec{b},\vec{o})$ reaches an accepting state in $\Aut_{R_2}$, and a non-accepting state in $\Aut$. Therefore by our construction the word $(\vec{i},\vec{o},\vec{b},\vec{o})$ is not accepting in $\Aut'$ before removing the $\Sigma_\Out$ letters, therefore  $(\vec{i},\vec{b})$ is not in $\Aut'$, a contradiction.

		For the other side, assume $(\vec{i},\vec{b})\in R'_1$.
		Then $\vec{b}\in Dom(R_2)$, therefore $\vec{b}\in L(\Aut'_2)$.
		In addition, for every $\vec{o}$ such that $(\vec{b},\vec{o})\in R_2$, we have that
		$(\vec{i},\vec{o})\in R$. So if $(\vec{b},\vec{o})\in L(\Aut_{R_2})$ then $(\vec{i},\vec{o})\in L(\Aut)$. Therefore for every $\vec{o}\in\Sigma_\Out$, we have $(\vec{i},\vec{o},\vec{b},\vec{o})$ is accepting in the construction process of $\Aut'$ before removing the $\Sigma_\Out$ letters and determinizing.
		This means that $(\vec{i},\vec{b})\in L(\Aut')$.
		Therefore all in all we have that $(\vec{i},\vec{b})$ is accepting in $\Aut'_1$ and $\vec{b}$ is accepting in $\Aut'_2$, thus $(\vec{i},\vec{b})$ is accepting in $\Aut_1$ as required.
	\end{proof}

	All is left is to we verify that $(\Aut_1,\Aut_{R_2})$ meet the condition for TDP/PDP, which done in the same way as in the proof of Theorem~\ref{thm:solveHint}. Again, the verification requires constructing the complement  $\Aut_1\circ\Aut_{R_2}$ which requires exponential time as well as the complement of $Dom(\Aut_1)$. Since these constructions for verification purposes can also be done ``on-the-fly'' we conclude that the over all complexity of proving the theorem is in EXPSPACE\@.
\end{proof}

Finally for strategic PDP with a hint $T_2$ note that when $T_2$ is given as a hint we do not need to construct $\A_2$, but simply construct $\A_1$ from ${\A'}_R^t,T_2$ in the exact construction as above. Then we check for emptiness. As before, if $\A_1$ is empty then $T_2$ is not a good witness. Otherwise, we get a transducer $T_1$ such that $(T_1,T_2)$ meet the Strategic PD conditions. Note that this construction yields $T_1$ in polynomial time therefore we overall have the following.

\begin{thm}\label{thm:strategicPDPT2}
	Strategic PDP with a hint $T_2$ is in PTIME\@.
	\end{thm}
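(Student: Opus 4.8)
The plan is to reuse the tree-automata machinery from the proof of Theorem~\ref{automaticFDP}, exploiting the fact that fixing $T_2$ in advance removes exactly the two costly steps of that argument. Recall that in the general Strategic PDP construction the bottleneck was, first, building the alternating tree automaton $\A_2$ and testing its nonemptiness in exponential time, and second, the fact that the witness $T_2$ extracted from $\A_2$ could itself be of exponential size. When $T_2$ is supplied as a hint, both steps vanish: the function $f_2 \colon \Sigma_\Barr^+ \to \Sigma_\Out$ is already fixed, so the only remaining freedom is the choice of the tree $\tau_1 \colon \Sigma_\In^* \to \Sigma_\Barr$ that represents $f_1$.

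First I would build a single deterministic tree automaton over $\Sigma_\Barr$-labeled $\Sigma_\In$-trees that runs $\A_R$ and $T_2$ in lockstep along every branch --- precisely the automaton ${\A'}_R^t$ of the preceding subsection. Writing $T_2 = (\Sigma_\Barr, \Sigma_\Out, P, p_0, \gamma)$, it keeps in its state the current $\A_R$-state, the current $T_2$-state, and (as in $\A_R^t$) the last input letter read. On reading a direction $i \in \Sigma_\In$ at a node whose $\Sigma_\Barr$-label is $b$, it consults $\gamma(p,b)$ to get the forced output letter $o$ and the successor $T_2$-state $p'$, then advances $\A_R$ via $\delta(q,(i,o))$; the acceptance condition demands that the $\A_R$-component remain in $F$, which forces every prefix $(\vec{i},\vec{o})$ along each branch to lie in $R$. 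Because the $\Sigma_\Out$-labeling is now computed deterministically by $T_2$ rather than guessed through a disjunction $\bigvee_{\eta}$, there is no alphabet blow-up: ${\A'}_R^t$ is just a product of $\A_R$ with $T_2$, of size polynomial in $|\A_R|\cdot|T_2|$.

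I would then test nonemptiness of this deterministic tree automaton, which takes time linear in its size by standard techniques~\cite{GTW02}; this is precisely the step that keeps the hint-$T_2$ case in PTIME, in contrast with the hint-$T_1$ case, where an alternating automaton for $T_2$ still has to be solved in exponential time. If ${\A'}_R^t$ is nonempty, the nonemptiness algorithm returns a finite-state transducer $T_1$ realizing some $f_1 \colon \Sigma_\In^+ \to \Sigma_\Barr$, and $(T_1,T_2)$ meet the Strategic PD conditions; if it is empty, no $T_1$ completes the given $T_2$.

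The main obstacle is the correctness argument: that nonemptiness of ${\A'}_R^t$ is equivalent to the existence of a valid $T_1$. The delicate point is to verify that driving the $\Sigma_\Barr$-labels of $\tau_1$ deterministically through $T_2$ discards no solutions --- which it cannot, since with $T_2$ fixed the output along each branch is forced and the only remaining choice is the branchwise $\Sigma_\Barr$-labeling recorded by $\tau_1$. One must also confirm that the Strategic PD conditions collapse to ``every prefix is consistent with $R$'': because $T_2$ realizes a total function, $Img(R_1) \subseteq Dom(R_2)$ holds automatically, so the acceptance condition of ${\A'}_R^t$ captures $R_1 \circ R_2 \subseteq R$ together with the resolvability requirement on $Dom(R)$, exactly as in the proof of Theorem~\ref{automaticFDP}.
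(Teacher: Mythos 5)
Your proposal is correct and follows essentially the same route as the paper: plug the hinted transducer $T_2$ into the deterministic tree automaton $\A_R^t$ (tracking the last input letter, the $\A_R$-state, and the $T_2$-state), so that the $\Sigma_\Out$-labels are computed deterministically rather than guessed, yielding an automaton of size polynomial in $|\A_R|\cdot|T_2|$ whose nonemptiness test gives $T_1$ in polynomial time. Your added remarks on why no solutions are discarded and why $Img(R_1)\subseteq Dom(R_2)$ is automatic are sound elaborations of points the paper leaves implicit.
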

% Finally, we consider the case of strategic PDP with hint $T_2$.
% \begin{thm}\label{thm:solveHintFDP}
% Strategic PDP with a hint $T_2$ is in PTIME\@.
% \end{thm}
\begin{proof}
For Strategic PDP with a hint $T_2$ as a transducer we again simplify the construction of Section~\ref{sec:AutFDP}.  Now there is no need to construct the automaton $\A_2$. Instead, we are given a a finite-state transducer $T_2$, and we incorporate it in the construct of $A^t_R$.

Formally, let $T_2=(\Sigma_\Barr,\Sigma_\Out,P,p_0,\gamma)$, where $P$ is a finite state set, $p_0\in P$ is the initial state, and $\gamma:P\times\Sigma_\Barr\rightarrow P\times \Sigma_\Out$ is the transition function that yields, for a state $p\in P$ and input letter $i\in\Sigma_\Barr$, a pair $\gamma(p,i)$ of a successor state and an output letter.

We now plug in $T_2$ into the automaton $A^t_R$. Formally,
${\A'}_R^t=(\Sigma_\In,\Sigma_\Barr\,Q'',(q_0,p+0),\delta^t,\alpha)$ is a $\Sigma_\Barr$-labeled $\Sigma_\In$-tree deterministic automaton, with $Q'=\{(q_0,p_0)\}\cup (\Sigma_\In \times Q\times Q)$, and $\alpha = Q''$ as the acceptance condition.

For the transition function, we have as follows:
\begin{itemize}
\item
$\delta^t(q_0)= \bigwedge_{i\in\Sigma_\In} \langle i,(i,q_0,p_0)\rangle$
\item
$\delta^t((i,q,p),(b))=\bigwedge_{i'\in\Sigma_\In} \langle i',(i',\delta(q,(i,o),p'))\rangle$, where $o=\gamma(p,b)_1$ and $p'=\gamma(p,b)_2$, for $q\not=q_0$ and $q\in F$.
\item
$\delta^t((i,q,p),(b))=false$, for $q\not=q_0$ and $q\not\in F$.
\end{itemize}

\noindent
Note that ${\A'}_R^t$ is of size that is polynomial in $\A_R$ and $T_2$. Since non emptiness of deterministic tree automata can be checked in polynomial time~\cite{GTW02}, the claim follows.
\end{proof}

\subsection{Back to the undecidability conjecture}\label{sec:conj2}

%\marginpar{[[Rev1: It seems that Section 5.3.1 would be better placed after the material of Section 6. The discussion in this section does not really convince me of the conjecture. The majority of the section argues only why a very specific approach fails. On the other hand, regular relations have usually very good properties because they are essentially regular languages.]]
%
%[[Rev3:it is not discussed why the restriction of the size of the intermediate domain is the main approach to turn the problem into a decision problem.
%What about simplicity of R1 and R2 (which is admittedly not a straightforward notion to define).
%]]

%
%[[Rev3:A discussion of the relationship between the representations themselves (in particular, between (3) and (1), (2)) and technically similar problems in the literature
%(e.g.\ decomposition results of regular languages and the theory of rational relations) would be helpful.]]}

%MYV
In Section~\ref{sec:conj} we conjectured that TDP/PDP for automatic relations is undecidable. In this section we showed however that  TDP/PDP are decidable for automatic relations
when given a hint. Can we leverage this idea towards solving the problem of undecidability in full? That is, can we search for, say, an automaton for $R_1$ that together with the maximal $R_2$ constructed,
forms a solution for TDP/PDP\@?

It is tempting to try to use an automata-theoretic approach similar to the strategic PDP (see Section~\ref{sec:AutFDP}) in which we consider
representing $R_1 \subseteq (\Sigma_\In \times \Sigma_\Barr)^*$ as a labeled tree
$\tau_1: (\Sigma_\In\times\Sigma_\Barr)^* \mapsto \{0,1\}$. Then we can try to define
a tree automaton $\A_1$ that accepts a tree $\tau_1$ iff
it is a correct hint for the total or partial decomposition of an input/output
relation $R\subseteq (\Sigma_\In\times\Sigma_\Barr)^*$. The difficulty is that such an automaton
has to check two properties of $\tau_1$: (1) The domain of $R_1$ has to be
equal to the domain of $R$. This is essentially a requirement on the projection of $\tau_1$
on $\Sigma_\In$. (2) The composition of $R_1$ with the maximal $R_2$ is contained
in or equal to $R$. This is essentially a requirement on the projection of $\tau_1$
on $\Sigma_\Barr$. Known automata-theoretic techniques exist for dealing with projection of trees,
see for example~\cite{KupVar00}. We currently, however, have no known technique to deal with two orthogonal projections, as we have here.

Of course, this argument only shows that a particular technique to attack the problem is unlikely
to be successful. Therefore to further justify the conjecture, we note that the dual-projection problem
is reminiscent to the problem of distributed temporal synthesis,
which was shown to be undecidable~\cite{PnueliR90} (though there is no obvious formal connection
between the decomposition problem and the distributed-synthesis problem).
There we are representing an overall system strategy as a tree where tree labels
correspond to actions of system components and tree edges correspond to environment actions.
When different system components are expected to act independently, without knowledge of
actions by other components, the overall systems strategies has to be decomposed into
separate strategies as projections of the system strategies.

		\section{Discussion}

We studied here a formal model of sequential decomposition, a fundamental
concept in computational thinking. We showed that while decomposition is
viewed as an approach to tame design complexity, complexity is not
so easily tamed and decomposition can be quite difficult
when viewed as a computational problem. Human intuition, used to offer hints to the decomposition algorithm, is therefore necessary
to tame the complexity of the decomposition problem.
The complexity of TDP/PDP in the automatic-relation setting, conjectured to be undecidable, is still open. It is related to other decision problems for automatic relations, cf.~\cite{KuskLohr10}, and is a
subject of future work.

The decomposition problems studied in this paper are specific to relational decomposition, which arguably are enough to encode many types of decomposition, for example a type of decomposition cover where the subproblems overlap, or where the subproblems are required to satisfy certain properties that would greatly reduce the search space.  In that sense, as a further step towards exploring other types of decomposition, another line of future work is to explore the orthogonal problem of  \emph{parallel} decomposition, in which a task is decomposed into two tasks executed in parallel.
This subject has been studied in~\cite{AbadiL94}, but the specification of the system there
is given by the conjunction of its components' specifications, while we study specifications in
terms of overall system input/output relations.

Finally a more practical approach to sequential decomposition can be found in synthesis for symbolic settings. Indeed we showed in Section~\ref{sec:symbrel} that symbolic TDP/PDP is NEXPTIME-complete, but this stands only when the size of the intermediate domain $\Barr$ is given. For the synthesis approach we relax this requirement so we get to choose $\Barr$. On one hand the problem of solving TDP/PDP becomes uninteresting. On the other, however, The challenge is to find a TDP/PDP solution $(R_1,R_2)$ that allows good synthesis for a given relation $R$ in the sense that the composition of a separate synthesis to $R_2$ and $R_2$ can be a synthesis solution to $R$. A recent work inspired by this approach can be found in~\cite{CFTV18}.

			\bibliographystyle{alphaurl}
		\bibliography{DecompMergeRef,ref}

	%% No Appendix is needed. In this journal everything is in the body of the paper.

			\clearpage
		\appendix
		
\section{The reduction flow in Symbolic relations}\label{sec:AppSymbolic}

For the article to be self contained, we describe the problems and the chain of reductions that appear in the proof of Theorem~\ref{thm:succ}. Each problem is followed by it's index number in~\cite{GJ79}. The reduction are all from~\cite{Karp72}, apart from the last one which is from~\cite{orlin1977} and its location is stated in details below.

\paragraph{The list of problems:}
\begin{itemize}
    \item SAT (LO1):  The language of all satisfiable CNF Boolean formulas with $U$ variables and $C$ clauses.
    \item 3SAT (LO2): The language of all satisfiable CNF Boolean formulas with $U$ variables and $C$ clauses where each clause $C$ contains $3$ literals.
    \item KCOLORABILITY (GT4): The language of all graphs $G=(V,E)$ with $k>0$ for which there exists a coloring function $f:V\leftarrow\{1,\ldots,k\}$ such that $f(u)\not=f(v)$ for every edge $(u,v)\in E$.
    \item PARTITIONCLIQUES (GT15): The language of all graphs $G=(V,E)$ with $k\leq |V|$ for which there exists a partition of $V$ into $V_1,\ldots,V_k$ disjoint sets such that for every $1\leq i\leq k$, the graph induced by $V_i$ is a clique.
    \item CCBS (GT18): (Covering by Complete Bipartite Subgraphs):
    The language of all bipartite graphs $G=(V,E)$ with a positive integer $K\leq |E|$ for which there are $k\leq K$ subsets $V_1,\ldots,V_k$ of $V$ such that for every $1\leq i\leq k$, the graph induced by $V_i$ is a complete bipratite subgraph of $G$ and such that for every edge $(u,v)\in E$, there is some $V_i$ such that $(u,v)$ is contained in $V_i$.
    \item CLIQUE (GT19): The language of all graphs $G=(V,E)$ with $k\leq |V|$ for which there is a subset $V'\subseteq V$ of size $k$ or more such that every two vertices in $V$ are joint by an edge in $E$.
    \item VC (GT1): The language of all graphs $G=(V,E)$ with $k\leq |V|$ for which there is a subset  $V'\subseteq V$ of size $k$ or less such that for every edge $(u,v)\in E$ either $u\in V$ or $v\in V$.
    \item SC (SP5): The language of all $(S,C,K)$ where $C$ is a collection of all subsets of a finite set $S$ and $K\leq |C|$ such that $C$ contains a cover of $S$ of size $K$, i.e.\ there is a subset $C'\subseteq C$ of size $K$ or less such that every member in $S$ belongs to at least one member in $C'$.
\end{itemize}

\vspace{3mm}

\paragraph{The reductions:}

\begin{itemize}
    \item $SAT\leq_m^P 3SAT$: Every clause $(\ell_1,\ldots,\ell_j)$ in $C$ where the $\ell_i$'s are literal, is converted to a set of clauses $(\ell_1 \vee \ell_2\vee a_1)\wedge(\neg a_1 \vee \ell_2\vee a_2)\wedge\ldots (\neg a_{k-3} \vee \ell_{k-1}\vee \ell_k)$.
    %Every bit in the output formula can be obtained by a direct access to the clause $C$.

    \item $3SAT \leq_m^p Kcolorability$:
    Let $F$ be a $3SAT$ formula with variables $u_1,\cdots u_n$ and clauses $C_1,\cdots C_r$. Set $k=n+1$. Then $G-(V,E)$ is as follows. \newline $V=\{v_1,\ldots, v_n,v'_1,\ldots,v'_n,x_1,\cdots x_n,c_1,\cdots c_r\}$.
    To construct $E$ set $(v_i,v'_i)$ for every $i$, $(x_i,x_j)$,$(v_i,x_j)$ and $(v'_i,x_j)$ for every $i\not=j$, and finally add $(v_j,c_i)$ if $u_j$ does not appear in $C_i$ and add $(v'_j,c_i)$ if $\neg u_j$ does not appear in $C_i$. This reduction appeared in~\cite{SchurmannS03} as well.

   % Let $F$ be a 3CNF formula with $C_1,\cdots C_r$ clauses over variables $u_1,\cdots u_m$, assume wlog that $m\geq 4$. Then denote $G=(V,E)$ where $V=\{u_1,\cdots u_m\}\cup \{\neg u_1,\cdots \neg u_m\}\cup \{v_1,\cdots v_m\}\cup \{C_1,\cdots,C_r\}$.
    %Set $E$ to be all the edges $(u_i,\neg u_i)$ for $1\leq i\leq n$,

    \item $Kcolorability \leq_m^p PartitionCLIQUE$:
    Set $G'$ to be the complement of $G$. Then $(G,k)\in Kcolorability$ iff $(G',k)\in PartitionClique$.

     \item $PartitionCLIQUE \leq_m^p CCBS$:
     Given a graph $G=(V,E)$ with $v_1,\ldots,v_n$ vertices and a non negative integer $k\leq n$, construct a bipartite graph $G'=(U,E')$ in which $U=u_1,\cdots u_n,w_1,\cdots w_n$ are vertices, and in addition for every $i\not=j$ add vertices $u_{ij},w_{ij}$. Construct $E'$ by setting edges $(u_i,w_i)$ for every $i$, and for every $i\not=j$, \[(u_i,w_j),(u_{ij},w_{ij}),(u_{ij},w_j),(w_{ij},u_i).\] Then $(G,k)\in PartitionClique$ if and only if $(G',\ell)\in CCBS$ where $\ell=k+|E|-|V|$.
     The reduction is shown in~\cite{orlin1977} as a proof of Theorem $8.1$ where $PartitionClique$ is described there as Problem $P_0$ and CCBS is described as Problem $P_2$.
       \item $SAT \leq_m^p CLIQUE$: Given a CNF formula $F$ with $C1,\cdots C_p$ clauses, we construct a graph $G=(V,E)$ where $V=\{<\ell,i>\mid \ell \text{ is a literal and occurs in } C_i\}$, and $E=\{<\ell,i>,<m,j>\mid i\neq j \wedge \ell\neq \neg m\}$.
    Then $F\in SAT$ if and only if $(G,p)\in CLIQUE$.

    \item $Clique\leq_m^P VC$: Given a graph $G=(V,E)$ and a positive integer $k\leq |V|$ we denote $G'$ to be the complement of $G$ and $k'=|V|-k$. Then $(G',k')\in VC$ iff $(G,k)\in Clique$.

    \item $VC\leq_m^P SC$:
    Given a graph $G=(V,E)$ and a positive integer $k\leq |V|$, let $C$ be the collection of all sets of edges $S_i$ incident on vertex $i$ for every $i\in V$. Then $(E,C,k)\in SC$ iff $(G,k)\in VC$.

\end{itemize}

		\section{Supplementary definitions for Strategic PDP}\label{sec:strategicApp}

       Since the proof of Theorem~\ref{automaticFDP} lies heavily on definitions and terminology from~\cite{KV01}, for containment of the paper we cite the definitions  and theorems from~\cite{KV01} that are of importance for the proof.

	   Given a finite set $\Upsilon$, a \textit{$\Upsilon$-tree} is a set $T\subseteq \Upsilon^*$ such that if $x\circ v\in T$ where $x\in\Upsilon^*$ and $v\in\Upsilon$ then also $x\in T$. When $\Upsilon$ is clear from the context, we call $T$ a \textit{tree}.
		When $T=\Upsilon^*$ we say that $T$ is \textit{full}. The elements of $T$ are called $\textit{nodes}$, and the empty word $\epsilon$ is the \textit{root} of $T$. For every $x\in T$, the nodes $x\circ v\in T$ where $v\in\Upsilon$ are the \textit{children} of $x$. Each node $x$ of $T$ has a direction, $dir(x)$ in $\Upsilon$. The direction of $\epsilon$ is $v^0$, for some designated $v^0\in\Upsilon$, called the \textit{root direction}. The direction of a node $x\circ v$ is $v$.

		Given two finite sets $\Upsilon$ and $\Sigma$,  a $\Sigma$-labeled $\Upsilon$-tree is a pair $(T,V)$ where $T$ is a $\Upsilon$-Tree and $V:T\rightarrow \Sigma$ maps each node of $T$ to a letter in $\Sigma$. When $\Upsilon$ and $\Sigma$ are not important or clear from the context, we call $(T,V)$ a labeled tree.

		 For a $\Sigma$-labeled $\Upsilon$-tree $(\Upsilon^*,V)$, we define the \textit{x-ray} of $(\Upsilon^*,V)$, denoted $xray((\Upsilon^*,V))$ as the $(\Sigma\times\Upsilon)$-labeled $\Upsilon$-tree $(\Upsilon^*,V')$ in which each node is labeled by both its direction and its labeling in $(\Upsilon^*,V)$. Thus, for every $x\in\Upsilon^*$, we have $V'(x) = (dir(x),V(x))$.

			An (alternating) tree-automaton that runs over full $\Sigma$-labeled $\Upsilon$-tree is a tuple $\Aut=(\Upsilon,\Sigma,Q,q_0,\delta,\alpha)$, where $\Sigma,\Upsilon$ are finite alphabets, $Q$ is a finite set of states, $q_0$ is an initial state, $\alpha\subseteq Q^\omega$ is an acceptance condition, and $\delta:Q\times \Sigma \to B^+(\Upsilon\times Q)$ is the transition function, where $B^+(\Upsilon\times Q)$ is the set of positive Boolean formulas over $\Upsilon\times Q$. When the automaton is in state $q$ and reads the label $a\in\Sigma$ of a node $x$, the Boolean formula $\delta(q,a)$ is a Boolean constraint on the states that the automaton transitions to on the children $x\cdot b$ of $x$, for $ b\in\Upsilon$.

			A special case, in which case the automaton is \emph{nondeterministic}, is when the automaton sends at most one successor state in each tree direction. Formally, this means that when considering a Boolean transition condition $\delta(q,a)$, for $q\in Q$ and $a\in\Sigma$, in disjunctive normal form, there is no disjunct (which is a conjunction) that contains two distinct atoms $(b,q_1)$ and $(b,q_2)$ associated with the same direction $b\in\Upsilon$. If $\delta(q,a)$ is nondeterministic, but contains a single disjunct, then it is \emph{deterministic}.

			A \textit{run} of an alternating automaton $A$ on an input $\Sigma$-labeled $\Upsilon$-tree $(\Upsilon^*,V)$ is a tree $(T_r,r)$ in which the nodes are labeled by elements of $(\Upsilon^*\times Q)$. Each node of $T_r$ corresponds to a node of $\Upsilon^*$. A node in $T_r$, labeled $(x,q)$, describes a copy of the automaton that reads the node $x$ of $T$ and visits the state $q$. The labels of the node and its children have to satisfy the transition function.
			Each infinite path $\rho$ in $(T_r,r)$ is labeled by a word $r(\rho)$ in $Q^\omega$. Let $inf(\rho)$ denote the set of states in $Q$ that appear in $r(\rho)$ infinitely often. A run $(T_r,r)$ is accepting iff all its infinite paths satisfy the acceptance condition. In \textit{Rabin} alternating automata, $\alpha\subseteq 2^Q\times 2^Q$, and an infinite path $\rho$ satisfies an acceptance condition $\alpha=\{(G_1,B_1),\cdots (G_k,B_k)\}$ iff there exists $1\leq i\leq k$ for which $inf(\rho)\cap G_i\not=\emptyset$ and $inf(\rho)\cap B_i=\emptyset$. The number of pairs in $\alpha$ is called the \textit{index} of $\alpha$. An automaton \textit{accepts} a tree if there is an accepting run on \emph{B\"{u}chi} automata, $\alpha\subseteq Q$, and an infinite path $\rho$ satisfies an acceptance condition if $inf(\rho)\cap F\not=\emptyset$.
			The language of the automaton $A$ is denoted $L(A)$. The automaton $A$ is said to be non-empty if $L(A)\not=\emptyset$. The size of an alternating automaton is the length of its description, which is typically dominated by the description of the transition function.

			We now bring Theorem $4.2$ from~\cite{KV01} as follows.
			\newline\textbf{Theorem 4.2 from~\cite{KV01}:} Given an alternating tree automaton $A$ over $(\Upsilon\times\Sigma)$-labeled $\Upsilon$-trees, we can construct an alternating tree automaton $A'$ over $\Sigma$-labeled $\Upsilon$-trees such that $A'$ accepts a labeled tree $(\Upsilon^*,V)$ iff $A$ accepts $xray((\Upsilon^*,V))$ and the automata $A'$ and $A$ have essentially the same size and index, i.e., within a constant factor. We call $A'$ the {cover of $A$ and is denoted by $cover(A)$.

\end{document}